\newcommand{\set}[1]{\left\{ #1 \right\}}
\newcommand{\rpar}[1]{\left( #1 \right)}
\newcommand{\integer}{\mathbb{Z}}
\renewcommand{\natural}{\mathbb{N}}
\DeclareMathOperator{\rank}{\mathrm{rank}}
\def\rnk#1#2#3{\rank(#1\mathbin{|}#2,#3)}
\DeclareMathOperator{\ranks}{\mathrm{ranks}}
\def\rnks#1#2#3{\ranks(#1\mathbin{|}#2,#3)}
\def\minrnk#1#2#3{\min(#1\mathbin{|}#2,#3)}
\def\maxrnk#1#2#3{\max(#1\mathbin{|}#2,#3)}
\def\Xrnk#1#2#3{X(#1\mathbin{|}#2,#3)}
\DeclareMathOperator{\minp}{\mathsf{Min}}
\DeclareMathOperator{\maxp}{\mathsf{Max}}
\def\decision#1#2{#1\{#2\}}
\DeclareMathOperator{\poly}{poly}
\DeclareMathOperator{\election}{\mathcal{E}}
\DeclareMathOperator{\tie}{\uptau}
\DeclareMathOperator{\tieprime}{\uptau^\prime}
\DeclareMathOperator{\ps}{\pi}
\def\eqdef{\mathrel{{:}{=}}}
\def\e#1{\emph{#1}}
\def\T{\mathbf{T}}
\def\P{\mathbf{P}}
\def\Q{\mathbf{Q}}
\def\M{\mathbf{M}}
\def\neccom{\mathsf{NecCom}}
\def\poscom{\mathsf{PosCom}}
\def\necmem{\mathsf{NecMem}}
\def\posmem{\mathsf{PosMem}}
\def\angs#1{\mathord{\langle{#1\rangle}}}
\def\neccompk{\neccom\angs{k}}
\def\poscompk{\poscom\angs{k}}
\def\necmempk{\necmem\angs{k}}
\def\posmempk{\posmem\angs{k}}
\def\NW{\text{$\mathsf{NW}$}\xspace}
\def\PW{\text{$\mathsf{PW}$}\xspace}
\def\NUW{\text{$\mathsf{NU}$}\xspace}
\def\PUW{\text{$\mathsf{PU}$}\xspace}
\def\bark{\mathord{\overline{k}}}
\def\barx#1{\mathord{\overline{#1}}}
\newcommand{\eat}[1]{}
\newcommand{\shortcite}[1]{\cite{#1}}
\def\bennydone#1{}
\title{Computing the Extremal Possible Ranks with Incomplete Preferences} %TODO Please add
\author{Aviram Imber}{Technion, Israel}{}{}{}
\author{Benny Kimelfeld}{Technion, Israel}{}{}{}
\authorrunning{A.~Imber and B.~Kimelfeld}
\keywords{Positional scoring rules, Incomplete preferences.} %TODO mandatory; please add comma-separated list of keywords
\begin{document}

\maketitle

\begin{abstract}
  Various voting rules are based on ranking the candidates by scores induced by aggregating voter preferences. A winner (respectively, unique winner) is a candidate who receives a score not smaller than (respectively, strictly greater than) the remaining candidates. Examples of such rules include the positional scoring rules and the Bucklin, Copeland, and Maximin rules. When voter preferences are known in an incomplete manner as partial orders, a candidate can be a possible/necessary winner based on the possibilities of completing the partial votes. Past research has studied in depth the computational problems of determining the possible and necessary winners and unique winners.

These problems are all special cases of reasoning about the range of possible positions of a candidate under different tiebreakers. We investigate the complexity of determining this range, and particularly the extremal positions. Among our results, we establish that finding each of the minimal and maximal positions is NP-hard for each of the above rules, including all positional scoring rules, pure or not. Hence, none of the tractable variants of necessary/possible winner determination remain tractable for extremal position determination. Tractability can be retained when reasoning about the top-$k$ positions for a fixed $k$. Yet, exceptional is Maximin where it is tractable to decide whether the maximal rank is $k$ for $k=1$ (necessary winning) but it becomes intractable for all $k>1$.
\end{abstract}

\section{Introduction}
A central task in social choice is that of winner determination---how to aggregate voter preferences to decide who wins. Relevant scenarios may be political elections, document rankings in search engines, hiring dynamics in the job market, decision making in multiagent systems, determination of outcomes in sports tournaments, and so on~\cite{DBLP:reference/choice/2016}. Different voting rules can be adopted for this task, and the computational social-choice community has investigated the algorithmic aspects of various specific instances of rules.
We focus here on rules that are based on ranking the candidates by scores induced by aggregating voter preferences. A prominent example is the family of the \e{positional scoring rules}: each voter assigns to each candidate a score based on the candidate’s position in the voter's ranking, and a winning candidate is one who receives the maximal sum of scores. Famous instantiations include the plurality rule (where a winner is most frequently ranked first), the veto rule (where a winner is least frequently ranked last), their generalizations to $t$-approval and $t$-veto, respectively, and the Borda rule (where the score is the position in the reverse order). There are also non-positional voting rules that are based on candidate scoring, such as
the Bucklin, Copeland, and Maximin rules.

The seminal work of Konczak and Lang~\shortcite{Konczak2005VotingPW} has addressed the situation where voter preferences are expressed or known in just a partial manner. More precisely, a partial voting profile consists of a partial order for each voter, and a completion consists of a linear extension for each of the partial orders. The framework gives rise to the computational problems of determining the \e{necessary winners} who win in every completion, and the \e{possible winners} who win in at least one completion. Each of these problems has two variants that correspond to two forms of winning: having a score not smaller than any other candidate (i.e., being a \e{co-winner}) and a having a score strictly greater than all other candidates (i.e., being the \e{unique winner}). These computational problems are challenging since, conceptually, they involve reasoning about the entire (exponential-size) space of completions. The complexity of these problems has been thoroughly studied in a series of publications that established the tractability of the necessary winners for positional scoring rules~\cite{DBLP:journals/jair/XiaC11}, and a full classification of a general class of positional scoring rules (the ``pure'' scoring rules) into tractable and intractable for the problem of the possible winners~\cite{DBLP:journals/jcss/BetzlerD10,DBLP:journals/jair/XiaC11,DBLP:journals/ipl/BaumeisterR12}.

Yet, the outcome of an election often goes beyond just reasoning about the maximal score. For example, the ranking among the other candidates might determine who will be the elected parliament members, the entries of the first page of the search engine, the job candidates to recruit, and the finalists of a sports competition. Studies on \e{social welfare}, for instance, concern the aggregation of voter preferences into a full ranking of the candidates~\cite{SATTERTHWAITE1975187,10.2307/41106036}. In the case of a positional scoring rule, the ranking order is determined by the sum of scores from voters and some tie-breaking mechanism~\cite{DBLP:journals/jair/MeirPRZ08}. When voter preferences are partial, a candidate can be ranked in different positions for every completion, and we can then reason about the range of these positions. In fact, the aforementioned computational problems can all be phrased as reasoning about the minimal and maximal ranks under different tiebreakers. A candidate $c$ is a \e{possible co-winner} if the minimal rank is one when the tiebreaker favors $c$ most, a \e{possible unique winner} if the minimal rank is one when the tiebreaker favors $c$ least, a \e{necessary co-winner} if the maximal rank is one when the tiebreaker favors $c$ most, and a \e{necessary unique winner} if the maximal rank is one when the tiebreaker favors $c$ least.

  {\begin{table*}[t]
      \def\arraystretch{1.05}     
      \caption{\label{tab:complexity} Overview of the results for positional scoring rules. $\bark$ stands for $m-k+1$ where $m$ is the number of candidates. Results on parameterized complexity take $k$ as the parameter.}
  \centering
  \scalebox{0.82}{
    \begin{tabular}{
      >{\centering\arraybackslash}m{2cm}|
      >{\centering\arraybackslash}m{2.7cm}
      >{\centering\arraybackslash}m{4cm}
      >{\centering\arraybackslash}m{3cm}|
      >{\centering\arraybackslash}m{3.1cm}
      }
      \hline
  \textbf{Problem} & plurality, veto & pure $-\set{\mbox{pl.},\mbox{veto}}$ & non-pure & comment \\
  \hline\hline
  $\decision{\minp}{<}$ & NP-c  \par $\mathrm{W}[2]$-hard for pl. & NP-c & NP-c  & NP-c: [Thm.~\ref{thm:minAllRules}] \par $\mathrm{W}[2]$: [Thm.~\ref{thm:plurality-w2-hard}]\\
 \hline
  $\decision{\maxp}{>}$ & NP-c \par $\mathrm{W}[1]$-hard & NP-c \par $\mathrm{W}[1]$-hard & NP-c \par $\mathrm{W}[1]$-hard & [Thm.~\ref{thm:maxAllRules}] \par [Thm.~\ref{thm:max-w1-hard}] \\
  \hline
  $\decision{\minp}{<k}$ & P & NP-c for strongly pure w/ poly.~scores & ? &  P: [Thm.~\ref{thm:ptwk-fixed-possible-veto-plurality}] \par NP-c: [Thm.~\ref{thm:PolyRulePosMemk}]\\
  \hline
  $\decision{\maxp}{> k}$ & P & P for poly.~scores & P for poly.~scores & [Thm.~\ref{thm:PolyRuleNecMemk}] \\
%  $\decision{\maxp}{>k}$ & P & P poly.~scores & P for poly.~scores\\
 %     $\decision{\maxp}{=k}$ & P & P poly.~scores & P for poly.~scores\\
      \hline
      \vspace{0.2em}$\decision{\minp}{< \bark}$ & P & P for poly.~scores & P for poly.~scores & [Thm.~\ref{thm:poly-rule-bounded-reversed-min}]\\
 % $\decision{\minp}{<m-k}$ & P & P [Thm.~\ref{thm:poly-rule-bounded-revesed-min}] & ?\\
 % $\decision{\minp}{>m-k}$ & P & P [Thm.~\ref{thm:poly-rule-bounded-revesed-min}] & ?\\
 % $\decision{\minp}{=m-k}$ & P & P [Thm.~\ref{thm:poly-rule-bounded-revesed-min}] & ?\\
  \hline
  $\decision{\maxp}{>\bark}$ &  P & NP-c for strongly pure bounded & ? &  P:  [Cor.~\ref{cor:max-fixed-reversed-veto-plurality}] \par NP-c: [Thm.~\ref{thm:max-bounded-reversed-strongly}] \\
  \hline
    \end{tabular}}
\end{table*}}
  
 We study the computational problems $\decision{\minp}{\theta}$ and $\decision{\maxp}{\theta}$, where $\theta$ is one of $<$ and $>$. The input consists of a partial profile, a candidate, a tie-breaking (total) order, and a number $k$, and the goal is to determine whether $x\theta k$ where $x$ is the minimal rank and the maximal rank, respectively, of the candidate. Our results are summarized in Table~\ref{tab:complexity}
 and Table~\ref{tab:complexityOther} for positional scoring rules and for other rules, respectively.
 (We exclude famous rules that are not naturally expressed as candidate scoring, e.g., Condorcet.)
 
As Table~\ref{tab:complexity} shows for positional scoring rules, determining the extremal ranks of a candidate is fundamentally harder than the $k=1$ counterparts (necessary/possible winners). For example, it is known that detecting the possible winners is NP-hard for every pure rule, with the exception of plurality and veto where the problem is solvable in polynomial time~\cite{DBLP:journals/jcss/BetzlerD10,DBLP:journals/jair/XiaC11,DBLP:journals/ipl/BaumeisterR12}. In contrast, we show that determining each of the minimum and maximum ranks is NP-hard for \e{every} positional scoring rule, pure or not, including plurality and veto. In particular, the tractability of the necessary winners \e{does not} extend to reasoning about the maximal rank.
The same goes for the Bucklin and Maximin rules, as can be seen in Table~\ref{tab:complexityOther}.

We also study the impact of fixing $k$ and consider the problems $\decision{\minp}{\theta k}$ and $\decision{\maxp}{\theta k}$ where the goal is to determine whether $x\theta k$ where, again, $x$ is the minimal/maximal rank. As shown in Table~\ref{tab:complexity}, we establish a more positive picture in the case of positional scoring rules: tractability for the maximum (assuming that the scores are polynomial in the number of candidates), and tractability of the minimum under plurality and veto. The degree of the polynomials depend on $k$, and we show that this is necessary (under standard assumptions of parameterized complexity) at least for the case of minimum, where the problem is $\mathrm{W[2]}$-hard for plurality, and for the case of maximum, where the problem is $\mathrm{W[1]}$-hard for every positional scoring rule. Tractability for the maximum is also retained for the non-positional Bucklin rule, as shown in Table~\ref{tab:complexityOther}.
Interestingly, Maximin behaves differently: while it is tractable to decide whether the maximal rank is at least $k$ for $k=1$ (i.e., the necessary-winner problem), it is NP-complete for all $k>1$.

The study of the range of possible ranks, beyond the very top, is related to the problem of \e{multi-winner election} that has been studied mostly in the context of \e{committee selection}. In that respect, our work can be viewed as reasoning about (necessary/possible) membership in the committee that consists of the $k$ highest ranked candidates. Yet, common studies consider richer notions of committee selection that look beyond the individual achievements of candidates. Indeed, various utilities have been studied for qualifying the selected committee, such as maximizing the number of voters with approved candidates~\cite{DBLP:conf/atal/AzizGGMMW15} and, in that spirit, the Condorcet committees~\cite{DBLP:conf/ijcai/ElkindLS11,DBLP:journals/mss/Darmann13}, aiming at a proportional representation via frameworks such as Chamberlin and Courant's~\shortcite{10.2307/1957270} and Monroe's~\shortcite{monroe_1995}, and the satisfaction of fairness and diversity constraints~\cite{DBLP:conf/ijcai/CelisHV18,DBLP:conf/aaai/BredereckFILS18}. Moreover, for some of the famous committee selection rules, determining the elected committee can be intractable even if voter preferences are complete~\cite{DBLP:conf/ijcai/ProcacciaRZ07,DBLP:journals/scw/ProcacciaRZ08,DBLP:journals/mss/Darmann13,DBLP:journals/tcs/SkowronYFE15}, in contrast to rank determination (which is always in polynomial time in the framework we adopt).

The problem of multi-winner determination for incomplete votes has been studied by Lu and Boutilier~\shortcite{DBLP:conf/ijcai/LuB13} in a perspective different from pure ranking: find a committee that minimizes the maximum objection (or ``regret'') over all possible completions.
\section{Preliminaries}

We begin with some notation and terminology. We focus on positional scoring rules, and we extend the definitions to other voting rules in Section~\ref{sec:other}.

\subsection{Voting Profiles and Positional Scoring}
Let $C = \set{c_1, \dots, c_m}$ be the set of \emph{candidates} (or \emph{alternatives}) and let $V = \set{v_1, \dots, v_n}$ be the set of \emph{voters}. A \emph{voting profile} $\T = (T_1, \dots, T_n)$ consists of $n$ linear orders on $C$, where each $T_i$ represents the ranking of $C$ by $v_i$.

A \e{positional scoring rule} $r$ is a series $\set{ \Vec{s}_m }_{m \in \natural^+}$ of $m$-dimensional score vectors $\Vec{s}_m = (\Vec{s}_m(1), \dots, \Vec{s}_m(m))$ of natural numbers where $\Vec{s}_m(1) \geq \dots \geq \Vec{s}_m(m)$ and $\Vec{s}_m(1) > \Vec{s}_m(m)$. We denote $\Vec{s}_m(j)$ by $r(m,j)$.  Some examples of positional scoring rules include the \emph{plurality} rule $(1, 0, \dots, 0)$, the \emph{$t$-approval} rule $(1, \dots, 1, 0, \dots, 0)$ that begins with $t$ ones, the \emph{veto} rule $(1, \dots, 1, 0)$, the \emph{$t$-veto} rule that ends with $t$ zeros, and the \emph{Borda} rule $(m-1, m-2, \dots, 0)$.

Given a voting profile $\T = (T_1, \dots, T_n)$, the score $s(T_i, c, r)$ that the voter $v_i$ contributes to the candidate $c$ is $r(m,j)$ where $j$ is the position of $c$ in $T_i$. The score of $c$ in $\T$ is $s(\T, c, r) = \sum_{i=1}^n s(T_i, c, r)$ or simply $s(\T, c)$ if $r$ is clear from context. A candidate $c$ is a \e{winner} (or \e{co-winner}) if $s(\T, c)\geq s(\T, c')$ for all candidates $c'$, and a \e{unique winner} if $s(\T, c)> s(\T, c')$ for for all candidates $c'\neq c$.

We make some conventional assumptions about the positional scoring rule $r$. We assume that $r(m,i)$ is computable in polynomial time in $m$, and the scores in each $\Vec{s}_m$ are co-prime (i.e., their greatest common divisor is one). A positional scoring rule is \emph{pure} if every $\Vec{s}_{m+1}$ is obtained from $\Vec{s}_m$ by inserting a score at some position.

\subsection{Partial Profiles}

A \e{partial voting profile} $\P = (P_1, \dots, P_n)$ consists of $n$ partial orders (i.e., reflexive, anti-symmetric and transitive relations) on the set $C$ of candidates, where each $P_i$ represents the incomplete preference of the voter $v_i$. A \e{completion} of $\P = (P_1, \dots, P_n)$ is a complete voting profile $\T = (T_1, \dots, T_n)$ where each $T_i$ is a completion (i.e., a linear extension) of the partial order $P_i$. The computational problems of determining the \e{necessary winners} and \e{possible winners} for partial voting preferences were introduced by Konczak and Lang~\cite{Konczak2005VotingPW}.

 Given
a partial voting profile $\P$, a candidate $c \in C$ is a
\emph{necessary winner} if $c$ is a winner in every completion $\T$ of
$\P$, and $c$ is a \emph{possible winner} if there exists a completion
$\T$ of $\P$ where $c$ is a winner. Similarly,  $c$ is a
\emph{necessary unique winner} if $c$ is a unique winner in every completion $\T$ of
$\P$, and $c$ is a \emph{possible unique winner} if there exists a completion
$\T$ of $\P$ where $c$ is a unique winner.

The decision problems associated
to a positional scoring rule $r$ are those of determining, given a
partial profile $\P$ and a candidate $c$, whether $c$ is a  necessary
winner, a necessary unique winner, a possible winner, and a possible unique winner. We denote these problems
by \NW, \NUW, \PW and \PUW, respectively. A known classification of the complexity of
these problems is the following. %has been established in a sequence of publications.
\begin{theorem}[Classification Theorem~\cite{DBLP:journals/jcss/BetzlerD10,DBLP:journals/jair/XiaC11,DBLP:journals/ipl/BaumeisterR12}]
  Each of \NW and \NUW can be solved in polynomial time for every positional scoring rule. Each of \PW and \PUW is solvable in polynomial time for plurality and veto; for all other pure scoring rules,  \PW and \PUW are NP-complete.
\label{thm:classification}
\end{theorem}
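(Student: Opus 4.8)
The plan is to treat the three assertions separately, since they have different flavors. For the first (\NW and \NUW in polynomial time), I would reduce necessary-winner checking to a family of worst-case pairwise comparisons. A candidate $c$ fails to be a necessary co-winner precisely when there is a competitor $c'\neq c$ and a completion $\T$ of $\P$ with $s(\T,c')>s(\T,c)$; for \NUW the witness instead requires only $s(\T,c')\geq s(\T,c)$. Because $s(\T,c')-s(\T,c)=\sum_{i=1}^{n}[s(T_i,c')-s(T_i,c)]$ splits as a sum over voters, and the linear extensions of the individual $P_i$ may be chosen independently, it suffices to maximize $s(T_i,c')-s(T_i,c)$ over extensions of each $P_i$ separately and then add. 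The key lemma is therefore that this single-voter optimization is polynomial-time solvable: since the scores $r(m,\cdot)$ are monotone non-increasing in position, we want to push $c'$ as high and $c$ as low as possible subject to $P_i$. I would establish this by enumerating the $O(m^2)$ candidate target positions for the pair $(c',c)$ and, for each pair, testing in polynomial time (via a matching/flow feasibility formulation over the remaining candidates and positions) whether $P_i$ admits a linear extension realizing those positions, then taking the best feasible objective. Ranging over the $O(m)$ competitors $c'$ yields a polynomial algorithm. I expect the one delicate case to be when $c\prec_{P_i}c'$ is forced, so that $c$ is necessarily above $c'$ and the two positions cannot be optimized independently; this is exactly what the joint feasibility test handles.

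For the second assertion (\PW and \PUW in P for plurality and veto), I would exploit that under plurality only the top position contributes. A candidate $c$ is a possible winner iff we can simultaneously (i) place $c$ first in every vote where $P_i$ permits it, which maximizes $c$'s plurality score $N_c$, and (ii) complete the remaining votes so that no competitor accumulates more first places than $c$. Placing $c$ first wherever possible is without loss of generality, since it can only raise $c$'s count and lower others'. Step~(ii) is then a feasibility question: assign each remaining vote's top slot to some candidate eligible to be ranked first there, subject to a per-competitor capacity of $N_c$; this I would model as a bipartite matching / network-flow instance whose feasibility is equivalent to $c$ being a possible winner. Veto is symmetric using the bottom position, and the unique-winner variants \PUW follow by tightening each capacity by one.

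For the third assertion (\PW and \PUW NP-complete for every pure rule other than plurality and veto), membership in NP is immediate, since a completion is a polynomial-size certificate verifiable in polynomial time. The hardness is the crux and the part I expect to be the main obstacle. I would give a reduction from an NP-hard problem of packing/covering flavor (such as a restricted exact-cover or multidimensional matching problem), using the fact that purity forces a uniform ``shape'' on the score vectors $(r(m,1),\dots,r(m,m))$ as $m$ grows, so that a single family of gadget voters scales across candidate-set sizes. The leverage is that any pure rule other than plurality and veto has, for large $m$, non-trivial structure in the interior of its score vector, that is, positions where consecutive scores differ by controllable amounts; these gaps let a collection of partial votes encode weighted choices whose simultaneous satisfiability mirrors the combinatorial instance. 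I anticipate that the construction cannot be fully uniform in one stroke: one must split into cases according to where the score vector is non-constant (near the top, near the bottom, or in the interior) and design, in each case, voters whose completions are forced to encode the source instance while ensuring that $c$ becomes a (unique) winner exactly when that instance is a yes-instance.
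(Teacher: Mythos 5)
First, a point of reference: the paper does not prove Theorem~\ref{thm:classification} at all --- it is imported as known background from the cited works of Betzler and Dorn, Xia and Conitzer, and Baumeister and Rothe --- so your attempt has to be judged against those proofs rather than against anything in this paper. Your first two parts are essentially correct and coincide with the standard arguments. For \NW and \NUW, the voter-wise decomposition of the score difference $s(\T,c')-s(\T,c)$ and the independent per-voter maximization is exactly the known algorithm, and enumerating the $O(m^2)$ joint positions of $(c',c)$ per voter is a sound way to organize it. The one repair needed is your feasibility subroutine: a plain bipartite matching/flow between the remaining candidates and the remaining positions does \emph{not} enforce the precedence constraints of $P_i$ among those remaining candidates, so it can accept infeasible position pairs. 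The correct tool is unit-time scheduling with release times, deadlines, and precedence constraints, which is polynomial and is precisely what this paper itself invokes in Lemma~\ref{lemma:scheduling}. For \PW and \PUW under plurality and veto, your ``place $c$ on top wherever possible, then cap each rival via a flow'' argument is the standard one (compare the paper's Lemma~\ref{lemma:committeeMatching}, which solves the same feasibility question via polygamous matching).

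The genuine gap is part (3), and it is exactly where all the difficulty of the theorem lives. What you give is a plan, not a construction: no gadget, no reduction, no argument that the cases you would split into exhaust all pure rules. Worse, the structural premise you lean on is false: purity does \emph{not} force ``a uniform shape on the score vectors as $m$ grows,'' since the new score may be inserted at a different position at every step from $\Vec{s}_m$ to $\Vec{s}_{m+1}$; this is precisely why no single family of gadgets scales across candidate-set sizes, and why the paper's own Theorem~\ref{thm:PolyRulePosMemk} has to retreat to the stronger notion of \emph{strongly pure} rules to transfer hardness. The actual classification required a sequence of three papers, with case analyses over properties of the score vectors (number of distinct values, location and multiplicity of the ``drops''), reductions from exact cover by 3-sets tailored to each case, and a dedicated argument by Baumeister and Rothe~\cite{DBLP:journals/ipl/BaumeisterR12} for the last resistant rule $(2,1,\dots,1,0)$, which the earlier techniques could not handle. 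Until you supply at least one concrete reduction and show how the cases cover every pure rule other than plurality and veto, the NP-hardness half of the statement remains unproven.
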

We aim at generalizing the Classification Theorem to determine the \e{minimal and maximal ranks}, as we formalize next.

\subsection{Minimal and Maximal Ranks}
The \e{rank} of a candidate is its position in the list of candidates, sorted by the sum of scores from the voters. However, for a precise definition, we need to resolve potential ties. Formally, let $r$ be a positional scoring rule, $C$ be a set of candidates, $\T$ a voting profile, and $\tie$ a \e{tiebreaker}, which is simply a linear order over $C$. Let $R_{\T}$ be the linear order on $C$ that sorts the candidates by their scores and then by $\tie$; that is,
$$ R_{\T} \eqdef \set{c \succ c' : s(\T, c) > s(\T, c') \lor (s(\T, c) = s(\T, c')\land c \tie c') }\,. $$
The rank of $c$ is the position of $c$ in $R_{\T}$, and we denote it
by $\rnk c\T\tie$.  If $\T$ is replaced with a partial voting profile
$\P$, then we define $\rnks c\P\tie$ as the set of ranks that $c$ gets
in the different completions of $\P$:
$$\rnks c\P\tie \eqdef \set{\rnk c\T\tie \mid \text{$\T$ extends
    $\P$}}$$
The minimal and maximal positions in $\rnks c\P\tie$ are denoted by $\minrnk c\P\tie$ and $\maxrnk c\P\tie$, respectively.    

Observe the following for a partial profile $\P$ and a candidate $c$:
\begin{itemize}
    \item $c$ is a possible winner if and only if $\minrnk c\P\tie=1$ (or $\minrnk c\P\tie < 2$) for any tiebreaker $\tie$ that positions $c$ first.
    \item $c$ is a possible unique winner if and only if $\minrnk c\P\tie=1$ for any tiebreaker $\tie$ that positions $c$ last.
    \item $c$ is a necessary winner if and only if $\maxrnk c\P\tie=1$ (or $\maxrnk c\P\tie < 2$) for any tiebreaker $\tie$ that positions $c$ first.
    \item $c$ is a necessary unique winner if and only if $\maxrnk c\P\tie=1$ for any tiebreaker $\tie$ that positions $c$ last.
\end{itemize}

To investigate the computational complexity of calculating the minimal and maximal ranks for a scoring rule $r$, we will consider the decision problems of determining, given $\P$, $c$, $\tie$ and a position $k$, whether $\Xrnk c\P\tie\mathrel{\theta} k$ where $X$ is one of $\min$ and $\max$ and $\theta$ is one of $<$ and $>$.  We denote these problems by $\decision{\minp_r}{\theta}$ and $\decision{\maxp_r}{\theta}$. Moreover, we will omit the rule $r$ when it is clear from the context. For example, $\decision{\minp_r}{<}$ (or just $\decision{\minp}{<}$) is the decision problem of determining whether $\minrnk c\P\tie<k$, and $\decision{\maxp_r}{>}$ (or just $\decision{\maxp}{>}$) decides whether $\maxrnk c\P\tie>k$.

Observe that for every scoring rule $r$, if we can compute the scores of the candidates within a complete profile in polynomial time, then $\decision{\minp}{<}$ and $\decision{\maxp}{>}$ are in NP. Also observe that if $\decision{\minp}{<}$ is solvable in polynomial time, then so is $\decision{\minp}{>}$.  Conversely, if $\decision{\minp}{<}$ is NP-complete then $\decision{\minp}{>}$ is coNP-complete. The same holds for the complexity of $\decision{\maxp}{>}$ in comparison to $\decision{\maxp}{<}$. Hence, in the remainder of the paper we will restrict the discussion to
$\decision{\minp}{<}$ and $\decision{\maxp}{>}$.

\subsection{Additional Notation.}
For a set $A$ and
a partition $A_1, \dots, A_t$ of $A$, $P(A_1, \dots, A_t)$ denotes the partitioned partial order $\set{a_1 \succ \dots \succ a_t : \forall i \in [t], a_i \in A_i}$ and $O(A_1, \dots, A_t)$ denotes an arbitrary linear order on $A$ that completes $P(A_1, \dots, A_t)$. A linear order $a_1 \succ \dots \succ a_t$ is also denoted as a vector $(a_1, \dots, a_t)$. The \e{concatenation} 
  $(a_1, \dots, a_t) \circ (b_1, \dots, b_\ell)$ is $(a_1, \dots, a_t, b_1,\dots, b_\ell)$.

\eat{
We say that a rule is \emph{strongly pure} if
$\Vec{s}_{m+1}$ is obtained from $\Vec{s}_m$ by inserting a score at
one of the ends of the vector, that is, either
$\Vec{s}_{m+1} = (\Vec{s}_{m+1}(1)) \circ \Vec{s}_m$ or
$\Vec{s}_{m+1} = \Vec{s}_m \circ (\Vec{s}_{m+1}(m+1))$. We say that a
rule is \emph{polynomial} if the scores in $\Vec{s}_m$ are
$\poly(m)$. Note that plurality, $t$-approval, veto, $t$-veto and
Borda are polynomial and strongly pure.
}

\section{Positional Scoring Rules}

In this section, we show that the problems we study are computationally
hard for \e{all} positional scoring rules. The following theorems
state the hardness of computing the minimal and maximal rank for all positional scoring rules.

\renewcommand{\arraystretch}{1.2}

\begin{figure}[t]
\centering
\newcolumntype{g}{>{\columncolor{lipicsLightGray}}c}
\begin{tabular}{|c|c|c|c|c|g|c|c|c|}
  \hline
  Voter & 1 & 2 & $\cdots$ & $\ell-1$ & $\ell, \ell+1$ & $\ell+2$ & $\cdots$ & $n+2$
  \\
  \hline
  $P^1_e(i)$ & $c_1$ & $c_2$ & $\cdots$ & $c_{\ell-1}$ & $\set{u, w}$ & $c_\ell$ & $\cdots$ & $c_n$ \\
  \hline
  $P^1_e(2)$ & $c_2$ & $c_3$ & $\cdots$ & $c_\ell$ &  $\set{u, w}$ & $c_{\ell+1}$ & $\cdots$ & $c_1$ \\
  \hline
  $\vdots$ & & & & & & & & \\
  \hline
  $P^1_e(n)$ & $c_n$ & $c_1$ & $\cdots$ & $c_{\ell-2}$ &  $\set{u, w}$ & $c_{\ell-1}$ & $\cdots$ & $c_{n-1}$ \\
  \hline
\end{tabular}
\caption{\label{fig:min-all-rules-Pe}The voters of the profile $\P^1_e = (P^1_e(1), \dots, P^1_e(n))$ for the edge $e = \set{u, w}$ used in the proof of Theorem~\ref{thm:minAllRules}. The other candidates are denoted as $C \setminus U = \set{c_1, \dots, c_n}$.}
\end{figure}

\begin{figure}[b]
\centering
\newcolumntype{g}{>{\columncolor{lipicsLightGray}}c}
\begin{tabular}{|c|c|c|c|c|g|g|c|c|c|}\hline
  Voter & 1 & 2 & $\cdots$ & $\ell-1$ & $\ell$ & $\ell+1$ & $\ell+2$ & $\cdots$ & $n+2$
  \\
  \hline
  $T^2_1$ & $u_1$ & $u_2$ & $\cdots$ & $u_{\ell-1}$ & $d$ & $c^*$ & $u_\ell$ & $\cdots$ & $u_n$ \\
  \hline
  $T^2_2$ & $u_2$ & $u_3$ & $\cdots$ & $u_\ell$ & $d$ & $c^*$ & $u_{\ell+1}$ & $\cdots$ & $u_1$ \\
  \hline
  $\vdots$ & & & & & & & & & \\
  \hline
  $T^2_n$ & $u_n$ & $u_1$ & $\cdots$ & $u_{\ell-2}$ & $d$ & $c^*$ & $u_{\ell-1}$ & $\cdots$ & $u_{n-1}$ \\
  \hline
\end{tabular}
\caption{\label{fig:minAllRules-T}The voters of the profile $(T_1^2, \dots, T_n^2)$ used in the proof of Theorem~\ref{thm:minAllRules}.}
\end{figure}

\def\thmminAllRules{ For every positional scoring rule $r$, $\decision{\minp_r}{<}$ is NP-complete. }
\begin{theorem}
\label{thm:minAllRules}
\thmminAllRules
\end{theorem}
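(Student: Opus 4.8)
The plan is to establish NP-completeness by proving membership and hardness separately. Membership in NP is immediate and is exactly the observation recorded earlier: since $r(m,j)$ is computable in polynomial time, a nondeterministic algorithm guesses a completion $\T$ of $\P$, computes $s(\T,c')$ for every candidate $c'$, reads off $\rnk{c}\T\tie$, and accepts iff it is smaller than $k$. The whole difficulty is therefore the hardness direction.

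For hardness I would reduce from \textsc{Vertex Cover} (equivalently \textsc{Independent Set}). The central constraint is that a single reduction must work \emph{uniformly} for every rule $r$, including plurality and veto, for which the possible-winner problem (the case $k=2$) is polynomial by the Classification Theorem. Hence the reduction cannot target ``$c^*$ wins''; it must exploit the genuine freedom of placing $c^*$ anywhere among the top $k-1$. The only structural feature shared by all positional scoring rules is that, for each $m$, there is a \emph{descent} position $\ell=\ell(m)$ with $r(m,\ell)>r(m,\ell+1)$, which exists because $r(m,1)>r(m,m)$. This descent is the unique place where the rule reacts to a local swap, so every binary choice in the construction is encoded by placing a partial pair $\{u,w\}$ exactly in positions $\ell,\ell+1$: one completion gives $u$ the larger value $r(m,\ell)$ and $w$ the smaller $r(m,\ell+1)$, the other reverses them, and the gap is $\delta\eqdef r(m,\ell)-r(m,\ell+1)\geq 1$.

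Given a graph $G=(U,E)$ I would use one candidate per vertex together with the distinguished candidate $c^*$, a guard candidate $d$, and a pool of dummy candidates. The key device for taming an arbitrary $r$ is cyclic rotation: if a block of votes ranks a set of candidates by all of its cyclic shifts, each receives an identical total score regardless of the completion. A base (complete) profile as in the second figure uses this to equalize all vertex candidates to a common baseline $B$, equalize all dummies, and (by placing $c^*$ and $d$ in the descent positions) pin the score of $c^*$ to exactly $B$. For each edge $e=\{u,w\}$ an edge gadget as in the first figure places the pair $\{u,w\}$ in the descent positions while cycling the dummies so they stay neutral. The intended calibration is that \emph{any} positive boost already lifts a vertex strictly above $c^*$: with the tiebreaker $\tie$ chosen to rank $c^*$ above every vertex, an unboosted vertex (score $B$) loses the tie and stays below $c^*$, whereas a boosted vertex (score $\geq B+\delta>B$) rises above it. A vertex can remain unboosted only if, for every incident edge, the gadget assigns the high slot to the other endpoint; consequently the set of vertices ranked below $c^*$ is precisely an independent set of $G$, and its complement (the boosted vertices, i.e.\ those ranked above $c^*$) is a vertex cover.

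Putting this together, over all completions the minimal rank of $c^*$ equals one plus the minimum number of boosted vertices, up to a fixed additive constant absorbing $d$ and any always-above dummies; that minimum is exactly the minimum vertex-cover size of $G$. Hence $\minrnk{c^*}\P\tie<k$ holds iff $G$ admits a vertex cover below the corresponding bound, and choosing $k$ accordingly completes the reduction. I expect the main obstacle to be this calibration: because nothing is known about $r$ beyond the single inequality $r(m,\ell)>r(m,\ell+1)$, every score contribution other than the $\delta$-boost must be forced to be identical across candidates and completions via cyclic rotation, and the equalizing profile together with the guard $d$ must set $s(\T,c^*)=B$ exactly while the tiebreaker resolves the resulting ties in favor of $c^*$. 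Verifying that this works simultaneously for the corner rules --- plurality, where $\ell=1$, and veto, where $\ell=m-1$ --- and that no completion can ``cheat'' by splitting an edge block's votes to keep both endpoints below the threshold, is where the care is needed.
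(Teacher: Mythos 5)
Your reduction is, in outline, the paper's own: encode each edge $e=\set{u,w}$ by a block of votes that leaves only the pair $\set{u,w}$ free in the two positions $\ell,\ell+1$ of a descent of the score vector, use circular (rotated) votes so every other candidate collects the same total from each block, add a guard candidate $d$, pin $c^*$ so that its score equals that of an ``unboosted'' vertex, and break ties in favour of $c^*$; the vertices ranked above $c^*$ then form a vertex cover, and the minimum rank is the minimum cover size plus a constant. Two of your side concerns are indeed non-issues: an edge block ``splitting'' its votes is harmless (every voter of the block gives the high slot to some endpoint, so splitting only enlarges the boosted set and cannot keep both endpoints below $c^*$), and working with an arbitrary descent rather than the paper's normalized one, where $r(m,\ell+1)=0$, changes nothing essential.

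The genuine gap is the calibration you flag but never resolve, and with the construction as you describe it, it cannot be resolved for arbitrary graphs. Let $m$ be the number of candidates and $\sigma=\sum_{i\neq\ell,\ell+1} r(m,i)$. A vertex $u$ collects $\sigma$ from each block of $n$ voters in which it circulates, but only $n\cdot r(m,\ell+1)$ from each incident edge block in which it stays unboosted, so its unboosted total is offset from the all-circulating baseline by $\deg(u)\cdot\rpar{n\cdot r(m,\ell+1)-\sigma}$ --- a quantity that depends on $\deg(u)$. Meanwhile $c^*$, pinned at position $\ell+1$ in $N$ copies of the equalizing block, is offset by $N\cdot\rpar{n\cdot r(m,\ell+1)-\sigma}$. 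Having $c^*$ tie \emph{exactly} with \emph{every} unboosted vertex therefore forces $\deg(u)=N$ for all $u$, i.e., the graph must be regular. This is precisely how the paper closes the argument: it reduces from vertex cover on \emph{regular} graphs, which is known to be NP-complete~\cite{DBLP:journals/tcs/GareyJS76}, and takes exactly $\Delta$ copies of the equalizing profile $\T^2$, where $\Delta$ is the common degree. Since you reduce from general Vertex Cover, your scores do not balance as claimed; to repair the proof you must either invoke the regularity assumption or add per-vertex padding blocks (pinning $u$ at position $\ell+1$ and $d$ at position $\ell$, with everyone else circulating) so that all vertices effectively have the same degree.
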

\begin{proof}
Let $r=\set{\Vec{s}_m}_{m > 1}$ be a positional scoring rule. We assume, without loss of generality, that $\Vec{s}_m (m) = 0$ for every $m > 1$. (Otherwise, we can subtract $\Vec{s}_m (m)$ from all the entries in the vector without affecting the ranks in any profile.)  The membership of $\decision{\minp_r}{<}$ in NP is straightforward. We show hardness by a reduction from the vertex-cover problem: given an undirected graph $G$ and an integer $k$, is there a set $B$ of $k$ or fewer vertices such that every edge is incident to at least one vertex in $B$? This problem is known to be NP-complete even on regular graphs~\cite{DBLP:journals/tcs/GareyJS76}, and we will assume that $G$ is indeed regular.

Let $G = (U, E)$ be a regular graph with $U = \set{u_1, \dots, u_n}$. In the reduction, the vertices will correspond to candidates, and the edges will be voters that will need to select one of their incident vertices.
Hence, the edges jointly select a vertex cover.
The question will be whether this vertex cover is small enough. The details follow.

We construct an instance $(C, \P, \tie)$ under $r$. The candidate set is $C = U \cup \set{c^*, d}$ and the tiebreaker is $\tie = O(\set{c^*, d}, U)$. The voting profile $\P = \P^1 \circ \T^2$ is the concatenation of two parts $\P^1$ and $\T^2$ that we describe next. Note that $|C| = n+2$. Let $\ell < n+2$ be an index where $\Vec{s}_{n+2}(\ell) > \Vec{s}_{n+2}(\ell+1) = 0$.
We know that such $\ell$ exists due to the definition of a scoring rule and our assumption that
$\Vec{s}_m (m) = 0$ for every $m > 1$.

The first part of the profile contains a profile for every edge $\P^1 = \set{\P^1_e}_{e \in E}$. For every edge $e = \set{u,w} \in E$, the profile $\P^1_e = (P^1_e(1), \dots, P^1_e(n))$ consists of $n$ voters, as illustrated in Figure~\ref{fig:min-all-rules-Pe}. For every $i \in [n]$, denote $M_i(C \setminus e) = (c_{i_1}, \dots, c_{i_n})$ where $M_i$ is the \e{$i$th circular vote} as defined by Baumeister, Roos and J{\"{o}}rg~\cite{DBLP:conf/atal/BaumeisterRR11}: 
$$ M_i(a_1, \dots, a_t) \eqdef (a_i, a_{i+1}, \dots, a_t, a_1, a_2, \dots, a_{i-1})\,. $$
Then $P^1_e(i) \eqdef \rpar{c_{i_1}, c_{i_2}, \dots, c_{i_{\ell-1}}, \set{u,w}, c_{i_\ell}, c_{i_{t+1}}, \dots, c_{i_n}}$ is the $i$th voter in $\P^1_e$. This means that in $\P^1_e$, the candidates $u$ and $w$ can only be at positions $\ell$ and $\ell+1$, and the other candidates are circulating at all other positions. The decision whether to rank $u$ or $w$ at the $\ell$th position represents the selection of $e$ between its vertices, to construct a vertex cover.

The second part of the profile, $\T^2$, is constructed such that for every completion $\T$ of $\P$ and vertex $u \in U$, the candidate $c^*$ defeats $u$ if and only if for every edge $e$ incident to $u$, all voters of $\P_e^1$ rank $u$ at the $(\ell+1)$st position. (This means that none of the edges have selected $u$, and so $u$ is not in the constructed cover.) Formally, recall that $G$ is regular, and
let $\Delta$ be the common degree of all the vertices of $G$. The profile
$\T^2$ consists of $\Delta$ copies of $(T^2_1, \dots, T^2_n)$, as illustrated in Figure~\ref{fig:minAllRules-T}. For every $i \in [n]$, denote $M_i(U) = (c_{i_1}, \dots, c_{i_n})$ and define
%\begin{align*}
 $T^2_i = (c_{i_1}, c_{i_2}, \dots, c_{i_{\ell-1}}, d, c^*, c_{i_\ell}, c_{i_{t+1}}, \dots, c_{i_n})$.
%\end{align*}
This means that $d$ and $c^*$ are always at positions $\ell$ and $ \ell+1$, respectively, and the candidates of $U$ are circulating at all other positions. This completes the construction of $(C, \P, \tie)$.

For the correctness, we start with some observations regrading the voting profile. Let $\T = \T^1 \circ \T^2$ be a completion of $\P$ where $\T^1 = \set{\T^1_e}_{e \in E}$. The scores of $c^*$ and $d$ in $\T^1$ are
\begin{align*}
    s(\T^1, c^*) = s(\T^1, d) = \sum_{e \in E} s(\T^1_e, d) = |E| \sum_{i \neq \ell, \ell+1} \Vec{s}_{n+2}(i)\,.
\end{align*}
For every $u \in U$, denote by $E(u)$ the set of edges incident to $u$, and denote $\overline{E}(u) = E \setminus E(u)$. Recall that $|E(u)| = \Delta$ since the graph is regular. By definition, it holds that
\[s(\T^1, u) = \sum_{e \in E(u)} s(\T^1_e, u) + \sum_{e \in \overline{E}(u)} s(\T^1_e, u)\,.\]
Observe that 
\[\sum_{e \in E(u)} s(\T^1_e, u) \leq \Delta n \cdot \Vec{s}_{n+2}(\ell)\] and that
\[\sum_{e \in \overline{E}(u)} s(\T^1_e, u) = (|E|-\Delta) \sum_{i \neq \ell, \ell+1} \Vec{s}_{n+2}(i)\,.\]

In $\T^2$, the score of $c^*$ is $s(\T^2, c^*) = \Delta n \cdot \Vec{s}_{n+2}(\ell+1) = 0$ and the score of $d$ is $s(\T^2, d) = \Delta n \cdot \Vec{s}_{n+2}(\ell)$. For every $u \in U$, $s(\T^2, u) =  
\Delta \sum_{i \neq \ell, \ell+1} \Vec{s}_{n+2}(i)$.

Overall,
\begin{align}
    s(\T, c^*) &= s(\T^1, c^*) + s(\T^2, c^*) = |E| \sum_{i \neq \ell, \ell+1} \Vec{s}_{n+2}(i) \label{eq:cstar}\\
    s(\T, d) &= s(\T^1, d) + s(\T^2, d) = |E| \sum_{i \neq \ell, \ell+1} \Vec{s}_{n+2}(i) + \Delta n \cdot \Vec{s}_{n+2}(\ell) \label{eq:d}
\end{align}
and for every $u \in U$,
\begin{align}
    s(\T, u) &= s(\T^1, u) + s(\T^2, u) \notag \\
    &= \sum_{e \in E(u)} s(\T^1_e, u) + \sum_{e \in \overline{E}(u)} s(\T^1_e, u) + s(\T^2, u) \notag\\
    &= \sum_{e \in E(u)} s(\T^1_e, u) + (|E|-\Delta) \sum_{i \neq \ell, \ell+1} \Vec{s}_{n+2}(i) + \Delta \sum_{i \neq \ell, \ell+1} \Vec{s}_{n+2}(i) \notag\\
    &= \sum_{e \in E(u)} s(\T^1_e, u) + |E| \sum_{i \neq \ell, \ell+1} \Vec{s}_{n+2}(i) \label{eq:equ}\\
    &\leq \Delta n \cdot \Vec{s}_{n+2}(\ell) + |E| \sum_{i \neq \ell, \ell+1} \Vec{s}_{n+2}(i) = s(\T, d)\label{eq:ltu}
\end{align}
From Equations~\eqref{eq:d} and~\eqref{eq:ltu} and the definition of $\tie$ we conclude that $d$ always defeats all other candidates.
From Equations~\eqref{eq:cstar} and~\eqref{eq:equ} we conclude that
$c^*$ defeats $u$ if and only if $\sum_{e \in E(u)} s(\T^1_e, u) = 0$.

Recall that $\alpha(G)$ denotes the minimal size of a vertex cover in $G$. We now show that for any $k$, $\alpha(G) \leq k$ if and only if $\minrnk {c^*}\P\tie \leq k+2$, which implies NP-completeness for $\decision{\minp_r}{<}$.

Assume that $\alpha(G) \leq k$, and let $B$ be a vertex cover of size at most $k$ in $G$. Consider the following completion $\T = \set{\T^1_e}_{e \in E} \circ \T^2$ of $\P$. For every $e = \set{u, w} \in E$, recall that only the positions of $u$ and $w$ are not determined in the voters of $\P^1_e$. If $u \in B$, then in all voters of $\T^1_e$ the candidate $u$ is placed at the $\ell$th position and $w$ is placed at the $(\ell+1)$th position. Otherwise, $w \in B$ (since $B$ is a vertex cover), and then in all voters of $\T^1_e$ the candidate $w$ is placed at the $\ell$th position and $u$ is placed at the $(\ell+1)$th position.

So, for every $u \notin B$ the candidate $u$ is placed at the $(\ell+1)$th position in all voters of $\set{\T^1_e}_{e \in E(u)}$, hence $\sum_{e \in E(u)} s(\T^1_e, u) = 0$ and $c^*$ defeats $u$. These are at least $n-k$ candidates which $c^*$ defeats, therefore $\rnk {c^*}\T\tie \leq k+2$ and $\minrnk {c^*}\P\tie \leq k + 2$.

Conversely, assume that  $\minrnk {c^*}\P\tie \leq k + 2$, and let  $\T = \set{\T^1_e}_{e \in E} \circ \T^2$ be a completion of $\P$ where $\rnk {c^*}\T\tie \leq k+2$. Let $B \subseteq U$ be the candidates of $U$ that defeat $c^*$ in $\T$, we know that $|B| \leq k$ because $d$ always defeats $c^*$. For every $e = \set{u, w} \in E$, all voters of $\T^1_e$ placed a vertex from $B$ at the $\ell$th position. (If a vertex from $U \setminus B$ is placed at the $\ell$th position for some voter, then this vertex defeats $c^*$, in contradiction to the definition of $B$.) Since these voters can only place $u$ and $ w$ at the $\ell$th position, we get that either $u \in B$ or $w \in B$. Hence $B$ is a vertex cover, which implies $\alpha(G) \leq k$. We conclude the correctness of the reduction and, hence, the NP-completeness of $\decision{\minp_r}{<}$.
\end{proof}

{\begin{figure}
   \small
   \centering
   \scalebox{0.9}{
\begin{tabular}{|c|c|c|c|c|c|c|c|c|c|c|c|}\hline
  Voter & 1 & 2 & $\cdots$ & $\ell-1$ & $\ell$ & $\ell+1$ & $\ell+2$ & $\cdots$ & $n$ & $n+1$ & $n+2$
  \\
  \hline
  $T^3_1$ & $u_1$ & $u_2$ & $\cdots$ & $u_{\ell-1}$ & $u_\ell$ & $u_{\ell+1}$ & $u_{\ell+2}$ & $\cdots$ & $u_n$ & \cellcolor{lipicsLightGray} $d$ & \cellcolor{lipicsLightGray} $c^*$ \\
  \hline
  $T^3_2$ & $u_2$ & $u_3$ & $\cdots$ & $u_{\ell}$ & $u_{\ell+1}$ & $u_{\ell+2}$ & $u_{\ell+3}$ & $\cdots$ & \cellcolor{lipicsLightGray} $d$ & \cellcolor{lipicsLightGray} $c^*$ & $u_1$ \\
  \hline
  $\vdots$ & & & & & & & & & & & \\
  \hline
  $T^3_{n-\ell+1}$ & $u_{n-\ell+1}$ & $u_{n-\ell+2}$ & $\cdots$ & $u_{n-1}$ & $u_{n}$ & \cellcolor{lipicsLightGray} $d$ & \cellcolor{lipicsLightGray} $c^*$ & $\cdots$ & $u_{n-\ell-2}$ & $u_{n-\ell-1}$& $u_{n-\ell}$ \\
  \hline
  $T^3_{n-\ell+2}$ & $u_{n-\ell+2}$ & $u_{n-\ell+3}$ & $\cdots$ & $u_{n}$ & \cellcolor{lipicsYellow} $c^*$ & \cellcolor{lipicsYellow} $d$ & $u_1$ & $\cdots$ & $u_{n-\ell-1}$ & $u_{n-\ell}$& $u_{n-\ell+1}$ \\
  \hline
  $T^3_{n-\ell+3}$ & $u_{n-\ell+3}$ & $u_{n-\ell+4}$ & $\cdots$ & \cellcolor{lipicsLightGray} $d$ & \cellcolor{lipicsLightGray} $c^*$ & $u_1$ & $u_2$ & $\cdots$ & $u_{n-\ell}$ & $u_{n-\ell+1}$& $u_{n-\ell+2}$ \\
  \hline
  $\vdots$ & & & & & & & & & & & \\
  \hline
  $T^3_{n+2}$ & \cellcolor{lipicsLightGray} $c^*$ & $u_{1}$ & $\cdots$ & $u_{\ell-2}$ & $u_{\ell-1}$ & $u_{\ell}$ & $u_{\ell+1}$ & $\cdots$ & $u_{n-1}$ & $u_{n}$& \cellcolor{lipicsLightGray} $d$ \\
  \hline
\end{tabular}}
\caption{\label{fig:max-T3i}The voters of the profile $(T_1^3, \dots, T_{n+2}^3)$ used in the proof of Theorem~\ref{thm:maxAllRules}.} 
\end{figure}}

Theorem~\ref{thm:minAllRules} stated the hardness of $\decision{\minp_r}{<}$ for every positional scoring rule $r$.  The next theorem states the hardness of $\decision{\maxp_r}{>}$ for every such $r$.

\def\thmmaxAllRules{ For every positional scoring rule $r$, $\decision{\maxp_r}{>}$ is NP-complete. }
\begin{theorem}
\label{thm:maxAllRules}
\thmmaxAllRules
\end{theorem}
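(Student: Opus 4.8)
The membership in NP is immediate: a completion is a polynomial-size certificate whose induced ranks can be computed in polynomial time (the scores being polynomial-time computable), so we can verify $\rnk{c^*}\T\tie > k$. For hardness, the plan is to reduce again from vertex cover on regular graphs, but now to design the instance so that \emph{pushing $c^*$ down} (rather than up, as in Theorem~\ref{thm:minAllRules}) encodes the combinatorial structure. Given a $\Delta$-regular graph $G=(U,E)$ with $U=\set{u_1,\dots,u_n}$, I build an instance $(C,\P,\tie)$ under $r$ with $C = U \cup \set{c^*,d}$ and a tiebreaker $\tie = O(\set{c^*,d},U)$ that favors $c^*$ (and $d$) over every $u\in U$. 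As before I assume $\Vec{s}_m(m)=0$ and fix an index $\ell$ with $\Vec{s}_{n+2}(\ell) > \Vec{s}_{n+2}(\ell+1)=0$. The target is to show that $\alpha(G)\le t$ if and only if $\maxrnk{c^*}\P\tie > n-t$, which yields NP-hardness by setting $k=n-t$.

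The profile is again $\P = \P^1 \circ \T^3$. The edge part $\P^1 = \set{\P^1_e}_{e\in E}$ is built exactly as in Theorem~\ref{thm:minAllRules}: for $e=\set{u,w}$ every voter of $\P^1_e$ pins the pair $\set{u,w}$ to positions $\ell,\ell+1$ and circulates the remaining candidates, so a completion chooses, per voter, which of $u,w$ receives the positive score $\Vec{s}_{n+2}(\ell)$. The crucial change from the minimum construction is the \emph{direction of the threshold}. In Theorem~\ref{thm:minAllRules} the base scores of $c^*$ and of each $u$ coincide, so $c^*$ is overtaken by the \emph{slightest} positive contribution (an OR over the incident edges); here the fixed part $\T^3$ (Figure~\ref{fig:max-T3i}) instead grants $c^*$ a tuned lead, so that a vertex $u$ overtakes $c^*$ only when \emph{all} of its incident edge-voters rank it high (an AND over the incident edges). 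The profile $\T^3$ circulates $u_1,\dots,u_n$ to equalize their base scores and carries $c^*,d$ as a travelling pair whose single order-reversal (the highlighted voter) transfers score $\Vec{s}_{n+2}(\ell)$ from $d$ to $c^*$; repeating this gadget the requisite number of times places $s(c^*)$ just below the all-high score of every $u$, with $d$ absorbing the accumulated deficit so that $d$ never defeats $c^*$.

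The heart of the argument is then a structural claim: in every completion $\T$, the set $S$ of vertices $u\in U$ that defeat $c^*$ is an \emph{independent set} of $G$. Indeed, with the AND threshold $u$ defeats $c^*$ only if every voter of every edge in $E(u)$ ranks $u$ at position $\ell$; for an edge $e=\set{u,w}$ this forbids $w$ from occupying position $\ell$ in those same voters, so $u$ and $w$ cannot both lie in $S$. Conversely, any independent set $I$ is realizable: since $I$ is independent, each edge meets $I$ in at most one endpoint, so we may rank that endpoint (if any) at position $\ell$ in all of the edge's voters, whereupon every $u\in I$ attains its all-high score and defeats $c^*$. Consequently the maximum, over all completions, of $|S|$ equals the size of a maximum independent set, namely $n-\alpha(G)$; and since $d$ never defeats $c^*$ we have $\rnk{c^*}\T\tie = 1 + |S|$ in every completion, hence $\maxrnk{c^*}\P\tie = 1 + (n-\alpha(G))$. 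Therefore $\maxrnk{c^*}\P\tie > n-t$ exactly when $n-\alpha(G)\ge n-t$, i.e.\ when $\alpha(G)\le t$, which completes the reduction.

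The main obstacle is the score tuning for an \emph{arbitrary} positional scoring rule $r$. Unlike the minimum case, where the threshold sits precisely at the common base score and needs no adjustment, here $s(c^*)$ must land in the narrow window $\text{base}_u + M - \Vec{s}_{n+2}(\ell) \le s(c^*) < \text{base}_u + M$, where $M$ is the all-high incident score of a vertex and $\text{base}_u$ its remaining (edge-independent) score; the admissible scores are monotone and otherwise constrained, so the required lead must be assembled in multiples of $\Vec{s}_{n+2}(\ell)$ using the pair-reversal gadget, and one must check that the circulations leave all $u\in U$ with a common base while $c^*$ and $d$ receive exactly the intended offsets. Carrying out this bookkeeping---the analogue of Equations~\eqref{eq:cstar}--\eqref{eq:ltu}, but arranged so that $c^*$ leads rather than trails every $u$ by the right margin---is the delicate but routine part of the proof.
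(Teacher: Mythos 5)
Your proposal follows the same blueprint as the paper's proof---a reduction from independent set on regular graphs (you phrase it as vertex cover, which is equivalent via $\alpha(G)+\beta(G)=n$), the same edge-selection gadget $\P^1$, a fixed part built from circular profiles with a $d$/$c^*$ swap, and the same structural claim that the set of vertices defeating $c^*$ is exactly an independent set---but the construction you actually specify has a gap that is fatal for some rules. You set $\P=\P^1\circ\T^3$ with $\T^3$ the travelling-pair gadget of Figure~\ref{fig:max-T3i}, omitting the paper's middle part $\T^2$ (the $\Delta$ copies of the $n$-voter block in which $d$ and $c^*$ are pinned at positions $\ell$ and $\ell+1$ while $U$ circulates). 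This omission matters: from $\P^1$ alone, $c^*$ collects $|E|\sum_{i\neq\ell,\ell+1}\Vec{s}_{n+2}(i)$, while the edge-independent part of each $u$ is only $(|E|-\Delta)\sum_{i\neq\ell,\ell+1}\Vec{s}_{n+2}(i)$, so $c^*$ already leads every $u$'s base by $\Delta\sum_{i<\ell}\Vec{s}_{n+2}(i)$. Each repetition of your Figure~\ref{fig:max-T3i} gadget hands every candidate the same circulation total, and the single swap adds a further $+\Vec{s}_{n+2}(\ell)$ to $c^*$; the gadget can therefore only \emph{increase} this lead, never decrease it. Your window requires the lead to lie in the half-open interval $[(\Delta n-1)\,\Vec{s}_{n+2}(\ell),\ \Delta n\,\Vec{s}_{n+2}(\ell))$, which is unreachable whenever $\sum_{i<\ell}\Vec{s}_{n+2}(i)\geq n\,\Vec{s}_{n+2}(\ell)$. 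This is not an exotic case: it happens for veto ($\ell=n+1$, so the left-hand side equals $n$) and for Borda (left-hand side $\Theta(n^2)$ versus $n$), among many others. So the bookkeeping you defer as ``delicate but routine'' cannot be carried out with the profile as specified; the missing ingredient is precisely $\T^2$, which boosts each $u$ (but not $c^*$) by $\Delta\sum_{i\neq\ell,\ell+1}\Vec{s}_{n+2}(i)$ per construction and exactly cancels the asymmetry that $\P^1$ creates.

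A second, related slip is the tiebreaker. You choose $\tie=O(\set{c^*,d},U)$, which forces each vertex to beat $c^*$ \emph{strictly}; this is exactly what creates the width-$\Vec{s}_{n+2}(\ell)$ window, and even after adding $\T^2$ you would then need exactly $\Delta n-1$ swapped copies plus one unswapped copy, a count your write-up never fixes. The paper instead takes $\tie=O(U,\set{c^*,d})$, so that a score \emph{tie} already lets $u$ defeat $c^*$; after base equalization, $\Delta n$ swapped copies put $c^*$'s lead at exactly $\Delta n\,\Vec{s}_{n+2}(\ell)$, and the window-tuning problem disappears entirely. With these two repairs (add $\T^2$, and either flip the tiebreaker or pin down the swap count at $\Delta n-1$) your argument becomes essentially the paper's proof; as written, however, the reduction fails for veto and Borda.
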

\begin{proof}
This proof uses parts of the proof of Theorem~\ref{thm:minAllRules}.  Let $r=\set{\Vec{s}_m}_{m > 1}$ be a positional scoring rule. We again assume (w.l.o.g.) that $\Vec{s}_m (m) = 0$ for every $m > 1$. Membership of $\decision{\maxp_r}{>}$ in NP is straightforward.  We show hardness by a reduction from the independent-set problem: given an undirected graph $G$ and an integer $k$, is there any set $B \subseteq U$ of $k$ or more vertices such that no two vertices in $B$ are connected by an edge? Again, we use the NP-complete variant of the problem where $G$ is regular~\cite{DBLP:journals/tcs/GareyJS76}.

Let $G = (U, E)$ be a regular graph with $U = \set{u_1, \dots, u_n}$, and let $\Delta$ be the degree of all vertices. As in the proof of Theorem~\ref{thm:minAllRules}, we will make every edge (voter) select an incident vertex (candidate).
Let $B \subseteq U$ be the vertices who receive $\Delta$ votes. Observe that $B$ is necessarily an independent set. The question is whether we can construct a big enough such $B$. Details follow.

We construct an instance $(C, \P, \tie)$ under $r$, as follows. The candidates set is $C = U \cup \set{c^*, d}$ and the tiebreaker is $\tie = O(U, \set{c^*, d})$.
Note that $|C| = n+2$.
The voting profile is the concatenation $\P = \P^1 \circ \T^2 \circ \T^3$ of three parts described next.

Let $\ell < n+2$ be an index such that $\Vec{s}_{n+2}(\ell) > \Vec{s}_{n+2}(\ell+1) = 0$. The first two parts $\P^1 = \set{\P^1_e}_{e \in E}$ and $\T^2$ are the same as in the proof of Theorem~\ref{thm:minAllRules}. Recall that for every edge $e = \set{u, w}$, only the positions of $u$ and $w$ are not determined in the voters of $\P_e^1$. The edge $e$ ``selects'' the vertex that is put in the $\ell$th position.

The third part, $\T^3$, is constructed such that for every completion $\T$ of $\P$ and vertex $u \in U$ it holds that $u$ defeats $c^*$ if and only if
all voters of $\P_e^1$ rank $u$ at the $\ell$th position for every edge $e$ incident to $u$. (This means that all edges incident to $u$ select $u$.) Formally, $\T^3$ consists of $\Delta n$ copies of the profile $(T^3_1, \dots, T^3_{n+2})$, as illustrated in Figure~\ref{fig:max-T3i}. We start with $T^3_i = M_i(u_1, \dots, u_n, d, c^*)$ for the circular votes as defined in the proof of Theorem~\ref{thm:minAllRules}, and then perform the following change. There exists some $i \in [n+2]$ such that $d$ and $c^*$ are placed at positions $\ell$ and $\ell+1$, respectively, in $T^3_i$. In this voter, switch the positions of $d$ and $c^*$. This means that in $(T^3_1, \dots, T^3_{n+2})$, the candidate $c^*$ is placed at the $\ell$th position twice, and $d$ is placed at the $(\ell+1)$st position twice.

Now, we prove the correctness. Since $\Vec{s}_{n+2} (\ell+1) = 0$, observe that 
$s(\T^3, c^*) = \Delta n \rpar{\sum_{i=1}^{n+2} \Vec{s}_{n+2} (i) + \Vec{s}_{n+2}(\ell)}$
and $s(\T^3, d) = \Delta n \rpar{\sum_{i=1}^{n+2} \Vec{s}_{n+2} (i) - \Vec{s}_{n+2}(\ell)}$. For every $u \in U$ we have that $s(\T^3, u) = \Delta n \sum_{i=1}^{n+2} \Vec{s}_{n+2} (i)$. By combining this with the observations from the proof of Theorem~\ref{thm:minAllRules}, we get that for every completion $\T = \set{\T^1_e}_{e \in E} \circ \T^2 \circ \T^3$ of $\P$, the following holds.
The score of $c^*$ is given by
\begin{align*}
    s(\T, c^*) &= (s(\T^1, c^*) + s(\T^2, c^*)) + s(\T^3, c^*) \\
    &= |E| \sum_{i \neq \ell, \ell+1} \Vec{s}_{n+2}(i) + \Delta n \sum_{i=1}^{n+2} \Vec{s}_{n+2}(i) + \Delta n \cdot \Vec{s}_{n+2}(\ell)\,.
\end{align*}
The score of $d$ is given by
\begin{align*}
    s(\T, d) &= (s(\T^1, d) + s(\T^2, d)) + s(\T^3, d) \\
    &= \rpar{|E| \sum_{i \neq \ell, \ell+1} \Vec{s}_{n+2}(i) + \Delta n \cdot \Vec{s}_{n+2}(\ell)} + \Delta n \rpar{ \sum_{i=1}^{n+2} \Vec{s}_{n+2}(i) - \Vec{s}_{n+2}(\ell)}\\
    &= |E| \sum_{i \neq \ell, \ell+1} \Vec{s}_{n+2}(i) + \Delta n \sum_{i=1}^{n+2} \Vec{s}_{n+2}(i)\,.
\end{align*}
The score of every $u \in U$ is given by
\begin{align*}
    s(\T, u) &= \sum_{e \in E(u)} s(\T^1_e, u) + \rpar{\sum_{e \in \overline{E}(u)} s(\T^1_e, u) + s(\T^2, u) + s(\T^3, u)} \\
    &= \sum_{e \in E(u)} s(\T^1_e, u) + |E| \sum_{i \neq \ell, \ell+1} \Vec{s}_{n+2}(i) + \Delta n \sum_{i=1}^{n+2} \Vec{s}_{n+2}(i)\\
    &\leq \Delta n \cdot \Vec{s}_{n+2}(\ell) + |E| \sum_{i \neq \ell, \ell+1} \Vec{s}_{n+2}(i) + \Delta n \sum_{i=1}^{n+2} \Vec{s}_{n+2}(i)\\
    &= s(\T, c^*)\,.
\end{align*}
By this analysis and the definition of $\tie$, $d$ is always defeated by all other candidates, and $u$ defeats $c^*$ if and only if $\sum_{e \in E(u)} s(\T^1_e, u) = \Delta n \cdot \Vec{s}_{n+2}(\ell)$.

Recall that $\beta(G)$ denotes the maximal size of an independent set of $G$.
We show that for
any $k$, $\beta(G) \geq k$ if and only if $\maxrnk {c^*}\P\tie \geq k+1$, thus $\decision{\maxp_r}{>}$ is NP-complete.

Assume that $\beta(G) \geq k$, let $B$ be an independent set of size at least $k$ in $G$. Consider a completion  $\T = \set{\T^1_e}_{e \in E} \circ \T^2 \circ \T^3$ of $\P$ as follows. For every $e = \set{u, w} \in E$, recall that only the positions of $u,w$ are not determined in the voters of $\P^1_e$. If $u, w \notin B$ then complete all voters of $\P^1_e$ arbitrarily.  If $u \in B$ then in all voters of $\T^1_e$, $u$ is placed at the $\ell$th position and $w$ is placed at the $(\ell+1)$th position. Finally, if $w \in B$ then in all voters of $\T^1_e$, $w$ is placed at the $\ell$th position and $u$ is placed at the $(\ell+1)$th position. Note that we cannot have $u, w \in B$ because $B$ is an independent set.

For every $u \in B$, $u$ is placed at the  $\ell$th position in all voters of $\set{\T^1_e}_{e \in E(u)}$, hence $\sum_{e \in E(u)} s(\T^1_e, u) = \Delta n \cdot \Vec{s}_{n+2}(\ell)$ and $u$ defeats $c^*$. These are at least $k$ candidates which defeat $c^*$, therefore $\rnk {c^*}\T\tie \geq k+1$ and $\maxrnk {c^*}\P\tie \geq k+1$.

Conversely, assume that $\maxrnk {c^*}\P\tie \geq k + 1$, there exists a completion $\T = \set{\T^1_e}_{e \in E} \circ \T^2 \circ \T^3$ of $\P$ where $\rnk {c^*}\T\tie \geq k+1$. Let $B$ be the candidates which defeat $c^*$ in $\T$. $B \subseteq U$ because $c^*$ always defeats $d$, and $|B| \geq k$. For every $u \in B$ we get that $\sum_{e \in E(u)} s(\T^1_e, u) = \Delta n \cdot \Vec{s}_{n+2}(\ell)$, otherwise $u$ does not defeat $c^*$.

Assume to the contrary that  $e = \set{u, w} \in E$ for some pair $u, w \in B$. In the voters of $\T^1_e$, both $u$ and $w$ should be placed at the $\ell$th position, that is a contradiction. Hence $B$ is an independent set, which implies $\beta(G) \geq k$.
\end{proof}
\subsection{Comparison to a Bounded Rank}
In the previous section, we established that the problems of computing
the minimal and maximal ranks are very often intractable. We now
investigate the complexity of comparing the minimal and maximal ranks
to some fixed rank $k$. Hence, the input consists of only $\P$, $c$ and $\tie$, but not
$k$. We denote these problems by $\decision{\minp_r}{{>}k}$ and
$\decision{\maxp_r}{{<}k}$. Again, we will omit the rule $r$ when it is clear from
the context. For example, $\decision{\minp}{{<}k}$ is the decision
problem of determining whether $\minrnk c\P\tie<k$.

We will show that the complexity picture for $\decision{\minp}{{<}k}$
and $\decision{\maxp}{{>}k}$ is way more positive, as we generalize
the tractability of \e{almost} all tractable scoring rules for
\NW and \PW. We will also generalize hardness results from \PW to
$\decision{\minp}{{<}k}$; interestingly, this generalization turns out
to be nontrivial.

In addition to comparing to the fixed $k$, we will consider the
problem of comparing to $\bark \eqdef m-k+1$ where $m$ is, as usual,
the number of candidates. Note that the
position $\bark$ is the $k$th rank from the end (bottom).  For instance, $\decision{\maxp}{{>}\barx{4}}$ decides whether the candidate can end up in one of the bottom $3$ positions.

\subsubsection{Complexity of $\decision{\minp}{{<}k}$}

We first show that the positional scoring rules that are tractable for
\PW, namely plurality and veto, are also tractable for
$\decision{\minp}{{<}k}$.  This is proved via a reduction to the
problem of \e{polygamous
  matching}~\cite{DBLP:conf/pods/KimelfeldKT19}: given a bipartite
graph $G = (U \cup W, E)$ and natural numbers $\alpha_w \leq \beta_w$
for all $w \in W$, determine whether there is a subset of $E$ where
each $u \in U$ is incident to exactly one edge and every $w \in W$ is
incident to at least $\alpha_w$ edges and at most $\beta_w$ edges.
This problem is known to be solvable in polynomial time.

\def\lemmacommitteeMatching{ The following decision problem can be solved in polynomial time for
the plurality and veto rules: given a partial profile $\P$ over a set $C$ of candidates and numbers $\gamma_c \leq \delta_c$ for every candidate $c$, is there a completion $\T$ such that $\gamma_c \leq s(\T, c) \leq \delta_c$ for every $c \in C$? }
\begin{lemma}
\label{lemma:committeeMatching}
\lemmacommitteeMatching
\end{lemma}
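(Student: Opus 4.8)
The plan is to reduce the problem to polygamous matching separately for each of the two rules, using the fact that under plurality (resp.\ veto) a candidate's score is entirely determined by the number of voters that rank it first (resp.\ last), and that in a single partial vote the candidate placed first (resp.\ last) may be chosen to be \emph{any} maximal (resp.\ minimal) element of the partial order, independently of the other votes.

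For plurality, note that $s(\T,c)$ is the number of voters whose completion ranks $c$ first, and that $c$ can be ranked first in a completion of $P_i$ if and only if $c$ is a maximal element of $P_i$ (after which the rest of $P_i$ extends arbitrarily). I would therefore build a bipartite graph with $U$ the set of voters and $W = C$, joining $v_i$ to $c$ exactly when $c$ is maximal in $P_i$, and set $\alpha_c = \gamma_c$ and $\beta_c = \delta_c$. Selecting one incident edge per voter then corresponds precisely to choosing which candidate each voter ranks first, and the $W$-side degrees are exactly the plurality scores; hence a completion respecting all the bounds exists if and only if this polygamous matching instance is a yes-instance. For veto, whose score vector is $(1,\dots,1,0)$, we have $s(\T,c) = n - x_c$ where $x_c$ is the number of voters ranking $c$ last, and $c$ can be placed last in a completion of $P_i$ iff $c$ is minimal in $P_i$. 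I would use the same construction but join $v_i$ to $c$ when $c$ is minimal in $P_i$, and translate the score bounds into bounds on $x_c$: from $\gamma_c \le n - x_c \le \delta_c$ we obtain $\alpha_c = \max(0, n - \delta_c)$ and $\beta_c = n - \gamma_c$, declaring the instance infeasible outright if some $\beta_c < 0$. Again each voter is matched once (to the candidate it ranks last) and the $W$-degrees are the last-counts $x_c$, so a valid completion exists iff the matching instance does.

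The only things that need verifying are that every maximal (resp.\ minimal) element really is realizable as the top (resp.\ bottom) of some linear extension and that these choices are mutually independent across voters, both of which are immediate from the definition of a linear extension; note also that every finite partial order has at least one maximal and one minimal element, so no voter is forced to remain unmatched. Since the construction is plainly polynomial and polygamous matching is solvable in polynomial time~\cite{DBLP:conf/pods/KimelfeldKT19}, the lemma follows. I expect no genuine obstacle here---the slight care required is only in the $n - x_c$ translation and clamping for veto---and this is exactly why plurality and veto are the tractable rules: each voter influences a single position, so the problem factors into an independent per-voter choice that a matching can coordinate.
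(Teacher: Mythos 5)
Your proposal is correct and follows essentially the same route as the paper's proof: a reduction to polygamous matching with $U=V$, $W=C$, edges given by which candidates can occupy the top (plurality) or bottom (veto) position of each partial vote, and the bounds $\alpha_c=\gamma_c,\ \beta_c=\delta_c$ for plurality and $\alpha_c=n-\delta_c,\ \beta_c=n-\gamma_c$ for veto. Your extra observations (maximal/minimal elements characterize the realizable top/bottom choices, independence across voters, and clamping the veto bounds at zero) are just explicit spellings-out of details the paper leaves implicit.
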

\begin{proof}
For both rules, we apply a reduction to polygamous matching, where $U=V$ (the set of voters) and $W = C$. For plurality, $E$ connects $v_i \in V$ and $c \in C$ whenever $c$ can be in the top position in one or more completions of $P_i$, and the bounds are $\alpha_c=\gamma_c$ and $\beta_c=\delta_c$. For veto, receiving a score $s$ is equivalent to being placed in the bottom position of $n-s$ voters, so $E$ connects $v_i \in V$ and $c \in C$ whenever $c$ can be in the bottom position in one or more completions of $P_i$. The bounds are $\alpha_c = n-\delta_c$ and $\beta_c = n-\gamma_c$.
\end{proof}

To solve $\decision{\minp}{{<}k}$ given $C$, $\P$, $\tie$ and $c$, we search for a completion where $c$ defeats more than $m-k$ candidates. For this goal we consider
every set $D \subseteq C \setminus \set{c}$ of size $m-k+1$ and search
for a completion where $c$ defeats all candidates of $D$. For that, we
iterate over every integer score $0\leq s\leq n$ and use
Lemma~\ref{lemma:committeeMatching} to test whether there exists a
completion $\T$ such that $s(\T,c) \geq s$, and for every $d \in D$ we
have $s(\T,d) \leq s$ if $c \tie d$ or $s(\T,d) < s$ otherwise. We conclude that:

\begin{theorem}\label{thm:ptwk-fixed-possible-veto-plurality}
For every fixed $k \geq 1$,
  $\decision{\minp}{{<}k}$ is solvable in polynomial time under the plurality and veto rules.
\end{theorem}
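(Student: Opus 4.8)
The plan is to confirm the algorithm sketched in the text and verify that it runs in polynomial time for fixed $k$. The overall strategy is a brute-force enumeration over the ``defeat certificate'' combined with the polygamous-matching primitive from Lemma~\ref{lemma:committeeMatching}. Recall that $\minrnk c\P\tie < k$ holds exactly when there is a completion $\T$ in which the rank of $c$ is at most $k-1$, i.e.\ in which $c$ defeats at least $m-k+1$ of the other candidates (where ``defeats'' is with respect to $R_\T$, meaning strictly higher score, or equal score with $c\tie d$). So the first step is to assert the equivalence: $\minrnk c\P\tie < k$ if and only if there exists a set $D\subseteq C\setminus\set c$ with $|D| = m-k+1$ and a completion $\T$ in which $c$ defeats every $d\in D$.

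The second step is the enumeration. For fixed $k$, the number of candidate sets $D$ of size $m-k+1$ is $\binom{m}{m-k+1} = \binom{m}{k-1} = O(m^{k-1})$, which is polynomial in $m$ for fixed $k$. For each such $D$, I would guess the score $s$ that $c$ attains; since under plurality and veto the possible scores lie in $\set{0,1,\dots,n}$, there are only $n+1$ choices, so this is an $O(n)$ outer loop. Fixing $D$ and $s$, the requirement ``$c$ defeats every $d\in D$ at score level $s$'' translates into per-candidate score bounds: set $\gamma_c = s$ and $\delta_c = n$ for $c$; for each $d\in D$ set $\delta_d = s$ if $c\tie d$ and $\delta_d = s-1$ otherwise, with $\gamma_d = 0$; and for each remaining candidate $d'\notin D\cup\set c$ leave the bounds unconstrained as $\gamma_{d'}=0$, $\delta_{d'}=n$. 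Then invoke Lemma~\ref{lemma:committeeMatching} to test in polynomial time whether a completion meeting all these bounds exists. If some choice of $(D,s)$ succeeds, answer yes; otherwise answer no.

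The third step is correctness of this reduction to bounded scores. The forward direction is immediate: a completion witnessing $\minrnk c\P\tie<k$ gives a set $D$ of $m-k+1$ defeated candidates, and taking $s = s(\T,c)$ makes all the bounds hold by the definition of ``defeat'' (strict inequality for those not favored by $\tie$, non-strict for those that are). For the converse, a completion satisfying the bounds for some $(D,s)$ has $s(\T,c)\ge s$ while each $d\in D$ has score $\le s$ (and strictly below $s$, or tied but losing the tiebreak), so $c$ defeats all $m-k+1$ members of $D$ and hence ranks no lower than $k-1$. One subtlety worth stating explicitly is that guessing the exact score $s$ and giving $c$ the bound $\gamma_c=s$, $\delta_c=n$ (rather than forcing $s(\T,c)=s$) is what makes the inequalities align cleanly: it suffices that $c$'s score is at least the threshold against which the $d$'s are capped.

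The main obstacle is essentially bookkeeping rather than a deep difficulty: one must be careful that the bounds passed to Lemma~\ref{lemma:committeeMatching} faithfully encode the two flavors of defeat induced by $\tie$ (strict versus tie-broken), and that the running time stays polynomial. The total cost is $O(m^{k-1})$ choices of $D$ times $O(n)$ choices of $s$ times the polynomial cost of one polygamous-matching query, which is polynomial for every fixed $k$ but with an exponent growing in $k$; this dependence is expected and matches the $\mathrm{W}[2]$-hardness for plurality noted in Table~\ref{tab:complexity}, so there is no hope of removing the $m^{k-1}$ factor without collapsing parameterized complexity classes.
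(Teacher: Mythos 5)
Your proposal is correct and follows essentially the same route as the paper: enumerate all sets $D \subseteq C\setminus\set{c}$ of size $m-k+1$, iterate over the threshold score $s \in \set{0,\dots,n}$, and invoke Lemma~\ref{lemma:committeeMatching} with exactly the bounds you describe ($s(\T,c)\geq s$, and $s(\T,d)\leq s$ or $s(\T,d)\leq s-1$ for $d\in D$ depending on the tiebreaker). The only addition beyond the paper's text is your explicit correctness bookkeeping and the $O(m^{k-1})\cdot O(n)$ running-time accounting, both of which are consistent with the paper's argument.
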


The polynomial  degree in Theorem~\ref{thm:ptwk-fixed-possible-veto-plurality} depends on
$k$. The following result shows that this is unavoidable, at least for the plurality rule, under conventional assumptions in parameterized complexity. 

\def\thmpluralitywhard{ Under the plurality rule, $\decision{\minp}{<}$ is $\mathrm{W}[2]$-hard for the parameter $k$. }
\begin{theorem}
\label{thm:plurality-w2-hard}
\thmpluralitywhard
\end{theorem}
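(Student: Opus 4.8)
The plan is to prove $\mathrm{W}[2]$-hardness of $\decision{\minp}{<}$ under plurality by a parameterized reduction from the $\mathrm{W}[2]$-complete \emph{dominating set} problem (or, equivalently, from \emph{set cover}): given a graph $H$ and a parameter $k$, decide whether $H$ has a dominating set of size $k$. The parameter $k$ of the dominating set instance should map to a parameter that is a function of $k$ alone in the constructed $\decision{\minp}{<}$ instance, so that the reduction is an fpt-reduction and preserves $\mathrm{W}[2]$-hardness. Recall from Theorem~\ref{thm:minAllRules} that $\minrnk {c^*}\P\tie < k'$ is governed by how many candidates $c^*$ can be made to defeat in a single completion; under plurality, a candidate's score is just the number of voters who rank it first, so the tool available to us is \emph{which candidate each voter is forced or allowed to place at the top}.

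First I would let the candidate set consist of one candidate $c_v$ per vertex $v$ of $H$, plus the distinguished candidate $c^*$ and possibly a constant number of auxiliary ``padding'' candidates used to calibrate scores and the target rank. The intended semantics is that choosing a dominating set corresponds to choosing which candidates receive plurality votes: $c^*$ should defeat a candidate $c_v$ exactly when vertex $v$ is dominated by the chosen set. The key gadget is a set of voters, one (or a bounded block) per vertex $v$, whose partial order permits the top position to be filled only by a candidate $c_u$ with $u$ in the closed neighborhood $N[v]$; a completion then ``selects'' for each such voter one dominator from $N[v]$. The central design constraint is to arrange scores so that $c^*$ ranks below a candidate $c_v$ precisely when no selected candidate covers $v$, i.e.\ when $v$ is undominated, and to set the threshold $k'$ (as a function of $k$) so that $\minrnk {c^*}\P\tie < k'$ holds iff all but a controlled number of vertices are dominated using only $k$ distinct ``active'' top-candidates.

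The main obstacle, and where the reduction differs in spirit from the vertex-cover reduction of Theorem~\ref{thm:minAllRules}, is the budget constraint: dominating set restricts the \emph{number of distinct} chosen vertices to $k$, whereas the edge-selection gadget of Theorem~\ref{thm:minAllRules} imposes no global cap on how many candidates are ``active.'' Under plurality this is actually a natural fit, because a candidate $c_v$ helps cover its neighbors only if it accumulates enough plurality votes to outrank $c^*$, and making $c_v$ outrank $c^*$ costs a fixed quota of top-position votes; by making this quota large relative to the supply of free voters, I can force only $k$ of the $c_v$ to become ``active'' (i.e.\ to beat $c^*$) in any completion that keeps $c^*$'s rank below the threshold. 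Thus the budget is enforced implicitly through the vote-arithmetic rather than stated explicitly. I expect the delicate part to be choosing the per-voter gadgets and the exact score quotas so that (i) a size-$k$ dominating set yields a completion in which $c^*$ beats every dominated $c_v$ and is beaten by at most $k$ candidates, giving $\minrnk{c^*}\P\tie \le k+O(1)$, and (ii) conversely any completion achieving that rank forces at most $k$ active dominators that together cover all vertices, yielding a dominating set of size $k$. Verifying that these two directions match up, and that the parameter of the produced instance depends only on $k$, is the crux of the argument.
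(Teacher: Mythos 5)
Your overall frame matches the paper's proof: both reduce from dominating set, and your key gadget (one voter per vertex $v$ whose partial order allows only candidates $c_u$ with $u \in N[v]$ at the top position, with a threshold of the form $k+O(1)$) is exactly the paper's. But there is a genuine gap at what you yourself call the crux, and it starts with the invariant you state. You want ``$c_v$ outranks $c^*$ precisely when $v$ is undominated,'' but no plurality gadget can give you that, and this gadget certainly does not: a candidate's plurality score depends only on how many voters place \emph{that candidate} first, not on whether some neighbor of its own vertex received votes elsewhere. The invariant that actually works (and that the paper uses) is simpler and different: rank $c^*$ \emph{last} in every voter, let the tiebreaker favor $c^*$, and then $c_v$ beats $c^*$ if and only if $c_v$ receives at least one top vote, i.e., iff $c_v$ is a \emph{selected} dominator. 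Domination of every vertex is enforced structurally --- each voter can only give its top vote to a member of $N[v]$ --- and never enters the score comparison at all. The rank of $c^*$ is then exactly one plus the number of distinct vote recipients, so $\minrnk {c^*}\P\tie < k+2$ iff some completion uses at most $k$ distinct top candidates iff the graph has a dominating set of size at most $k$. (Note also the internal tension in your sketch: under your stated invariant, a completion coming from a dominating set would leave \emph{no} candidate beating $c^*$, yet your step (i) speaks of $c^*$ being beaten by up to $k$ candidates, which is the correct, selection-based semantics.)

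The second problem is the quota mechanism you introduce to enforce the size-$k$ budget; it is both unnecessary and harmful. The budget is already enforced by the rank threshold itself, precisely because a \emph{single} vote suffices to beat $c^*$: the candidates beating $c^*$ are exactly the vote recipients, so ``at most $k$ candidates beat $c^*$'' literally \emph{is} the budget, no arithmetic needed. If instead beating $c^*$ required a quota $Q>1$ of top votes, the converse direction of your reduction would fail: voters could spread their votes so that many candidates have positive but sub-quota scores, in which case $c^*$ can attain rank $1$ while the set of candidates that beat $c^*$ is empty or tiny and dominates nothing; such a completion certifies your ``yes'' answer even when the graph has no dominating set of size $k$. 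To repair that you would need low rank to bound the number of vote \emph{recipients}, not just the number of above-quota candidates --- which is exactly the property the quota destroys. Dropping the quotas and the auxiliary padding candidates, and correcting the invariant as above, turns your sketch into the paper's proof.
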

\begin{proof}
We show an FPT reduction from the \e{dominating set} problem, which is the following: Given an undirected graph $G = (U,E)$ and an integer $k$, is there a set $D \subseteq U$ of size $k$ such that every vertex is either in $D$ or adjacent to some vertex in $D$? This problem is known to be W[2]-hard for the parameter $k$~\cite{DBLP:series/mcs/DowneyF99}.

Given $G = (U,E)$, we construct an instance of
$\decision{\minp}{<}$ under plurality where the candidate set is
$C = U \cup \set{c^*}$, the tiebreaker is $\tie = O(\set{c^*}, U)$,
and the voting profile is $\P = \set{P_u}_{u \in U}$ where $P_u$ is
defined as follows. Let $N(u)$ be
the set of neighbours of $u \in U$ and $N[u] = N(u) \cup
\set{u}$. We define
$P_u \eqdef P(N[u], U \setminus N[u], \set{c^*})$.  Hence, the voter
with preferences $P_u$ can vote only for vertices that dominate $u$. To complete, we show that the graph has a dominating set of size $k$
if and only if $\minrnk {c^*}\P\tie < k+2$.

Suppose there is a dominating set $D$ of size $k$, consider the profile $\T = \set{T_u}_{u \in U}$ where for every $u \in U$,
$$ T_u \eqdef O(N[u] \cap D, N[u] \setminus D, U \setminus N[u], \set{c^*})\,.$$
In this completion, for each $u \notin D$ we get $s(\T, u) = 0$. These
are $n-k$ candidates that $c^*$ defeats, therefore $\rnk {c^*}\T\tie < k+2$ (at most $k$ candidates defeat $c^*$) and $\minrnk {c^*}\P\tie < k+2$. Conversely, if $\minrnk {c^*}\P\tie < k+2$ then in some completion $\T$ it defeats at least $n-k$
candidates, and these candidates have a score 0 in $\T$. Let $D$ be the
set of candidates that $c^*$ does not defeat in $\T$, all voters voted
for candidates in $D$ and $|D| \leq k$. A voter $P_u$ can only vote
for vertices which dominate $u$, hence $D$ is a dominating set of
size at most $k$.
\end{proof}

\paragraph{Beyond Plurality and Veto}

The Classification Theorem (Theorem~\ref{thm:classification}) states
that \PW is intractable for every pure scoring rule $r$ other than
plurality or veto. While this hardness easily generalizes to $\decision{\minp_r}{{<}k}$ for $k=2$, it is not at all clear how to generalize it to any
$k>2$. In particular, we cannot see how to reduce \PW to $\decision{\minp_r}{{<}k}$ while assuming only the purity of the rule.  We can, however, show such a reduction under a stronger notion of purity, as long as the scores are bounded by a polynomial
in the number $m$ of candidates. In this case, we say that the rule
has \e{polynomial scores}.  
Note that all of the specific rules mentioned so far (i.e.,
$t$-approval, $t$-veto, Borda and so on) have polynomial scores;
an example of
a rule that does \e{not} have polynomial scores is $r(m,j)=2^{m-j}$.
Also note that this assumption is made in addition to
our usual assumption that the scores can be computed in polynomial
time. 

A rule $r$ is \e{strongly pure} if the score sequence for $m+1$
candidates is obtained from the score sequence for $m$ candidates by
inserting a new score to \e{either the beginning or the end of the
  sequence}. More formally, $r=\set{ \Vec{s}_m }_{m \in \natural^+}$ is strongly pure
if for all $m \geq 1$, either
$\Vec{s}_{m+1} = \Vec{s}_{m+1}(1) \circ \Vec{s}_m$ or
$\Vec{s}_{m+1} = \Vec{s}_m \circ \Vec{s}_{m+1}(m+1)$. Note that
$t$-approval, $t$-veto and Borda are all strongly pure.

{\begin{figure}
   \centering
\begin{tabular}{|c|c|c|c|c|c|c|c|c|}\hline
  Voter & 1 & 2 & $\cdots$ & $k-1$ & $k$ & $k+1$ & $\cdots$ & $k+m-1$ \\
  \hline
  $M_{1,1}$ & \cellcolor{lipicsLightGray} $d_1$ & $d_2$ & $\dots$ & $d_{k-1}$ & \cellcolor{lipicsYellow} $c_1$ & $c_2$ & $\cdots$ & $c_m$ \\
  \hline
  $M_{1,2}$ & \cellcolor{lipicsLightGray} $d_1$ & $d_2$ & $\dots$ & $d_{k-1}$ & $c_2$ & $c_3$ & $\cdots$ & \cellcolor{lipicsYellow} $c_1$ \\
  \hline
  $\vdots$ & & & & & & & & \\
  \hline
  $M_{1,m}$ & \cellcolor{lipicsLightGray} $d_1$ & $d_2$ & $\dots$ & $d_{k-1}$ & $c_m$ & \cellcolor{lipicsYellow} $c_1$ & $\cdots$ & $c_{m-1}$ \\
  \hline
  $M_{2,1}$ & $d_2$ & $d_3$ & $\dots$ & \cellcolor{lipicsLightGray} $d_1$ & \cellcolor{lipicsYellow} $c_1$ & $c_2$ & $\cdots$ & $c_m$ \\
  \hline
  $M_{2,2}$ & $d_2$ & $d_3$ & $\dots$ & \cellcolor{lipicsLightGray} $d_1$ & $c_2$ & $c_3$ & $\cdots$ & \cellcolor{lipicsYellow} $c_1$ \\
  \hline
  $\vdots$ & & & & & & & & \\
  \hline
  $M_{2,m}$ & $d_2$ & $d_3$ & $\dots$ & \cellcolor{lipicsLightGray} $d_1$ & $c_m$ & \cellcolor{lipicsYellow} $c_1$ & $\cdots$ & $c_{m-1}$ \\
  \hline
  $\vdots$ & & & & & & & & \\
   \hline
  $M_{k-1,1}$ & $d_{k-1}$ & \cellcolor{lipicsLightGray} $d_1$ & $\dots$ & $d_{k-2}$ & \cellcolor{lipicsYellow} $c_1$ & $c_2$ & $\cdots$ & $c_m$ \\
  \hline
  $M_{k-1,2}$ & $d_{k-1}$ & \cellcolor{lipicsLightGray} $d_1$ & $\dots$ & $d_{k-2}$ & $c_2$ & $c_3$ & $\cdots$ & \cellcolor{lipicsYellow} $c_1$ \\
  \hline
  $\vdots$ & & & & & & & & \\
  \hline
  $M_{k-1,m}$ & $d_{k-1}$ & \cellcolor{lipicsLightGray} $d_1$ & $\dots$ & $d_{k-2}$ & $c_m$ & \cellcolor{lipicsYellow} $c_1$ & $\cdots$ & $c_{m-1}$ \\
  \hline
\end{tabular}
\caption{\label{fig:Mij} The voters $M_{i,j}$ used in the proof of Theorem~\ref{thm:PolyRulePosMemk}.}
\end{figure}}

\def\thmPolyRulePosMemk{ Suppose that a positional scoring rule is strongly pure, has polynomial scores, and is neither plurality nor veto. Then $\decision{\minp}{{<}k}$ is NP-complete for all fixed $k \geq 2$. }
\begin{theorem}
\label{thm:PolyRulePosMemk}
\thmPolyRulePosMemk
\end{theorem}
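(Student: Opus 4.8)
The plan is to reduce from the possible-winner problem \PW for the very same rule $r$, which is NP-complete by the Classification Theorem (Theorem~\ref{thm:classification}) because $r$ is pure and is neither plurality nor veto. Membership of $\decision{\minp}{{<}k}$ in NP is immediate: a completion is a polynomial-size certificate whose ranks are computable in polynomial time since the scores are. So I only describe the hardness reduction. Fix $k \geq 2$. Starting from a \PW instance given by a partial profile $\P_0$ over $C_0 = \set{c^*, c_2, \dots, c_m}$, I will build an instance $(C, \P, \tie)$ of $\decision{\minp}{{<}k}$ over $C = C_0 \cup \set{d_1, \dots, d_{k-1}}$, so that $|C| = m + k - 1$, where the $d_i$ are ``blocker'' candidates forcing a rank shift, with the intention that $c^*$ is a possible winner in $\P_0$ if and only if $\minrnk{c^*}\P\tie < k$.

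The first ingredient exploits strong purity to transplant the \PW instance without disturbing the scores of the original candidates. Since $r$ is strongly pure, the score vector for $m + k - 1$ candidates arises from the one for $m$ candidates by inserting $k-1$ new scores, each at one of the two ends; say $a$ of them at the front and $b = k - 1 - a$ at the back, so that $r(m+k-1, a + j) = r(m, j)$ for every $j \in [m]$. I turn each partial vote of $\P_0$ into a partial vote over $C$ by forcing $d_1, \dots, d_a$ into the top $a$ positions (in this fixed order), forcing $d_{a+1}, \dots, d_{k-1}$ into the bottom $b$ positions, and leaving the original candidates, constrained exactly as in $\P_0$, in the middle block of positions $a+1, \dots, a+m$. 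Because every original candidate that sat at position $j$ now sits at position $a+j$, its per-vote score is unchanged, and hence the scores of $c^*, c_2, \dots, c_m$ --- and therefore every pairwise comparison among them --- are identical to those in $\P_0$ in every completion.

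The second ingredient is a balancing gadget of complete votes (the circular votes $M_{i,j}$ of Figure~\ref{fig:Mij}) whose purpose is to move the $k-1$ blockers into a single narrow score band positioned relative to $c^*$. After the transplanting step the blockers have spread-out scores, and, worse, the split into $a$ front- and $b$ back-insertions is dictated by $r$ and need not equal the split I actually want. I therefore add a polynomial number of copies of circular votes that raise the scores of the blockers by controlled, candidate-specific amounts (the circular structure adds the same total to the originals while redistributing among the $d_i$), tuning the outcome so that in every completion exactly $k-2$ blockers sit strictly above $c^*$ and the remaining blocker sits just below the best score $c^*$ can attain. Polynomial scores are exactly what keeps this gadget polynomial in size, since the number of circular copies needed scales with the score magnitudes. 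Finally I choose $\tie$ so that the $k-2$ ``high'' blockers beat $c^*$ on ties while $c^*$ beats the one ``low'' blocker and every original candidate on ties, so the intended band is realized even when the score vector is not strictly decreasing.

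With this in place the correctness is a short computation: in any completion, the number of candidates ranked above $c^*$ equals $k-2$ plus the number of original candidates that beat $c^*$, so $\rnk{c^*}\T\tie \leq k-1$ exactly when no original candidate beats $c^*$, i.e.\ when $c^*$ is a co-winner among $C_0$. Hence $\minrnk{c^*}\P\tie < k$ iff $c^*$ is a possible winner of $\P_0$, completing the reduction. I expect the main obstacle to be precisely this second ingredient: because the front/back insertion pattern of a strongly pure rule is not under my control, the dummies' raw scores cannot be placed where I need them, and the entire difficulty is designing the circular balancing gadget that repositions all $k-1$ blockers into the exact band around $c^*$ while leaving the score-preserving transplant of the original candidates untouched, using only polynomially many votes. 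The degenerate cases $a = k-1$ or $b = k-1$ (all insertions at one end) and the attendant tie-breaking are the fiddly points I would verify last.
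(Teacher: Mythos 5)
Your overall architecture is the same as the paper's: reduce from \PW for the same rule $r$ (NP-complete by the Classification Theorem), use strong purity to transplant the \PW instance into a larger candidate set with dummy candidates forced into the front/back blocks so that the original candidates' scores shift uniformly, pad with polynomially many circular votes (this is exactly where polynomial scores are needed), and finish with a tiebreaker favoring the dummies and then $c^*$. However, your instantiation has a genuine gap caused by an off-by-one choice. You add $k-1$ blockers while targeting $\minrnk{c^*}\P\tie < k$, which forces you to keep one blocker \emph{below} $c^*$; you then peg that low blocker ``just below the best score $c^*$ can attain.'' This breaks your own correctness claim: in a completion where $c^*$ scores below its maximum, the low blocker \emph{does} beat $c^*$, so the identity ``the number of candidates ranked above $c^*$ equals $k-2$ plus the number of originals beating $c^*$'' is false. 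The biconditional can still be rescued, but only by an argument you never give: for positional scoring rules, if $c^*$ is a possible co-winner then it is a co-winner in the completion obtained by lifting $c^*$ to its topmost admissible position in every vote (lifting weakly increases $c^*$'s score and weakly decreases every other score), and only in such max-score completions is your low blocker guaranteed to sit below $c^*$.

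The paper avoids all of this: to decide $\decision{\minp}{{<}k}$ it adds only $k-2$ dummies (formally, it reduces \PW to $\decision{\minp}{{<}k'+1}$ using $k'-1$ dummies) and uses the circular votes $M_{i,j}$ to push \emph{all} dummies above \emph{every} original candidate in \emph{every} completion; the rank of $c^*$ is then exactly $(k-2)+1+\#\{\text{originals beating } c^*\}$, and the equivalence with \PW is immediate --- no ``low'' blocker, no exact tuning of a score band relative to $c^*$'s best score (itself delicate, since gadget votes also shift $c^*$'s attainable scores and the increments are quantized), and no lifting argument. Your plan is repairable --- either switch to $k-2$ blockers all forced above everyone, or pin the low blocker below $c^*$'s \emph{minimum} attainable score --- but as written the second ingredient, which you yourself flag as the main difficulty, does not work.
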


\begin{proof}
Let $r=\set{\Vec{s}_m}_{m>1}$ be a positional scoring rule that satisfies the conditions of the theorem, and let $k \geq 1$. We show a reduction from \PW under $r$ to $\decision{\minp_r}{{<\,}{k+1}}$. The idea is to add $k-1$ new candidates and modify the voters so that the new candidates are always the top $k-1$ candidates, and the score of each of the original candidates is increased by the same amount.

  Consider the input
  $\P = (P_1, \dots, P_n)$ and $c$ for $\PW$ over a set $C$ of $m$
  candidates.  Let $m' = m+k-1$. Since $r$ is strongly pure, there is
  an index $t \leq k-1$ such that
$$ \Vec{s}_{m'} = (\Vec{s}_{m'}(1), \dots, \Vec{s}_{m'}(t)) \circ \Vec{s}_m \circ (\Vec{s}_{m'}(t+m+1), \dots, \Vec{s}_{m'}(m'))\,. $$
That is, $\Vec{s}_{m'}$ is obtained from $\Vec{s}_m$ by inserting $t$
values at the top coordinates and $k-1-t$ values at the bottom
coordinates. We define $C'$, $\P'$ and $\tieprime$ as follows. 

The candidate set is $C' = C \cup D_1 \cup D_2$ where $D_1 = \set{d_1,
  \dots, d_t}$ and $D_2 = \set{d_{t+1}, \dots, d_{k-1}}$.
Denote $D = D_1 \cup D_2$. The tiebreaker is $\tieprime = O(D, \set{c}, C \setminus \set{c})$. The profile $\P'$ is the concatenation $\Q\circ\M$ of two voting profiles. The first is $\Q=(Q_1, \dots, Q_n)$, where $Q_i$ is the same as $P_i$, except that the candidates of $D_1$ are placed at the top positions and the candidates of $D_2$ are placed at the bottom positions. Formally, $Q_i \eqdef P_i \cup P(D_1, C, D_2)$. The second, $\M$, consists of $n \cdot \Vec{s}_{m'}(1)$ copies of the profile $\set{M_{i,j}}_{i = 1, \dots, k-1 \,,\, j = 1, \dots, m}$
where $M_{i,j}$ is $M_i(D) \circ M_j(C)$ for the circular votes
$M_i(D)$ and $M_j(C)$ as defined in the proof of
Theorem~\ref{thm:minAllRules}. (See Figure~\ref{fig:Mij}.)  Note that by the conditions of the theorem, $n \cdot \Vec{s}_{m'}(1)$ is polynomial in $n, m$.

We show that the candidates of $D$ always defeat all other
candidates. For every $d \in D$, the score of $d$ in $\M$ is $s(\M, d) = n \cdot \Vec{s}_{m'}(1) \cdot m \sum_{i=1}^{k-1} \Vec{s}_{m'}(i)$, and for every $c' \in C$ the score in $\M$ is
\begin{align*}
    s(\M, c') &= n \cdot \Vec{s}_{m'}(1) 
    \cdot (k-1) \sum_{i=k}^{m'} \Vec{s}_{m'}(i) \leq n \cdot \Vec{s}_{m'}(1) 
    \cdot \rpar{m \sum_{i=1}^{k-1} \Vec{s}_{m'}(i)-1} \\
    &= s(\M, d) - n \cdot \Vec{s}_{m'}(1)
\end{align*}
where the inequality is due to the assumption that
$\Vec{s}_{m'}(1) > \Vec{s}_{m'}(m')$. Let $\T'$ be a completion of $\P'$, the total score of $c'$ is
\begin{align*}
    s(\T', c') &\leq n \cdot \Vec{s}_{m'}(1) + s(\M, c') \leq n \cdot \Vec{s}_{m'}(1) + s(\M, d) - n \cdot \Vec{s}_{m'}(1)  \leq s(\T', d)\,.
\end{align*}
Since the candidates of $D$ are the first candidates in $\tieprime$,
they always defeat the candidates of $C$. Next, we show that $c$ is a possible winner for $\P$ if and only if $\minrnk {c}{\P'}{\tieprime} = k$. Since the candidates of $D$ are always the first $k-1$ candidates, this is equivalent to saying that $\minrnk {c}{\P'}{\tieprime} < k+1$.

Let $\T = (T_1, \dots T_n)$ be a completion of $\P$ where $c$ is a winner. Consider the
completion $\T' = (T'_1, \dots T'_n) \circ \M$ of $\P'$ where
$T'_i = O(D_1) \circ T_i \circ O(D_2)$. For every $c' \in C$, we know that
$s(\T', d) \geq s(\T', c')$ for every $d \in D$, and
from the property of $\Vec{s}_{m'}$ we get that
\[ s(\T', c') = s(\T, c') + n \cdot \Vec{s}_{m'}(1) \cdot (k-1)
\sum_{i=k}^{m'} \Vec{s}_{m'}(i)\,. \]
From the choice of $\tieprime$, $c$ defeats all candidates of $C \setminus \set{c}$ in $\T'$, hence $\rnk c{\T'}{\tieprime} = k$. Conversely, let $\T' = (T'_1, \dots T'_n) \circ \M$ be a completion of $\P'$ where $\rnk c{\T'}{\tieprime} = k$, define a completion $\T$ of $\P$ by removing $D$ from all orders in $(T'_1, \dots T'_n)$. For every $c' \in C$ we have
\[ s(\T, c') = s(\T', c') - n \cdot \Vec{s}_{m'}(1) \cdot (k-1)
\sum_{i=k}^{m'} \Vec{s}_{m'}(i) \]
hence $c$ is a winner in $\T$.
\end{proof}

\subsubsection{Complexity of $\decision{\maxp}{{>}k}$}

The following theorem states that $\decision{\maxp}{{>}k}$ is tractable for
every fixed $k$ and every positional scoring
rule (pure or not) with polynomial scores. 

\begin{theorem} \label{thm:PolyRuleNecMemk} For all fixed $k \geq 1$ and positional scoring rules $r$ with polynomial scores, $\decision{\maxp_r}{{>}k}$ is solvable in polynomial time.
\end{theorem}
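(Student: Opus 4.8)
The plan is to reduce the problem to a bounded number of feasibility questions, each solved by a dynamic program over the voters whose state records the accumulated scores of a constant number of distinguished candidates.

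First I would observe that $\maxrnk c\P\tie > k$ holds exactly when some completion $\T$ of $\P$ has at least $k$ candidates that defeat $c$ in $R_\T$, where $d$ \emph{defeats} $c$ if $s(\T,d) > s(\T,c)$, or $s(\T,d) = s(\T,c)$ and $d \tie c$. Since $k$ is fixed, I would enumerate every set $D \subseteq C \setminus \set{c}$ with $|D| = k$; there are $\binom{m-1}{k} = O(m^k)$ such sets, which is polynomial. It then suffices to decide, for each fixed $D$, whether some single completion makes \emph{every} $d \in D$ defeat $c$ simultaneously, and to answer ``yes'' iff some $D$ succeeds.

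For a fixed $D$, call the $k+1$ members of $\set{c} \cup D$ the \emph{special} candidates. The crucial point is that whether all of $D$ defeat $c$ depends only on the final scores of the special candidates, and these scores are polynomially bounded: each is at most $n \cdot r(m,1)$, which is $\poly(m,n)$ precisely because $r$ has polynomial scores. I would therefore run a dynamic program that processes the voters $P_1, \dots, P_n$ one at a time and maintains the set of achievable vectors $(s(\T,c), s(\T,d_1), \dots, s(\T,d_k))$ of accumulated scores of the special candidates. The number of such vectors is at most $(n \cdot r(m,1) + 1)^{k+1} = \poly(m,n)$ for fixed $k$, so the reachable set has polynomial size and can be updated voter by voter. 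After the last voter, I would accept iff some reachable score vector has every $d \in D$ defeating $c$ under the rule above (using $\tie$ to break the equalities).

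The transition for a single voter $P_i$ requires the set of score contributions $(x_0, \dots, x_k)$ that the special candidates can jointly receive over the linear extensions of $P_i$, and this per-voter subroutine is where I expect the real work to be. Since $x_j = r(m, p_j)$ for the position $p_j$ of the $j$th special candidate, it is enough to determine which tuples of positions for the special candidates are realizable, and there are at most $m^{k+1} = \poly$ such tuples for fixed $k$. For a fixed target tuple I would first check that the induced relative order of the special candidates is consistent with $P_i$ (at most $(k+1)!$ orders, a constant), and then check that the remaining $m-k-1$ candidates can fill the remaining positions: each non-special candidate is forced into the block of positions after all special candidates that must precede it and before all that must follow it, so realizability reduces to assigning the non-special candidates to the $k+2$ gaps delimited by the special candidates, subject to the precedence constraints of $P_i$ and to the exact gap capacities dictated by the target positions. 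The main obstacle is exactly this feasibility check: the gaps are coupled through the partial order on the non-special candidates (if $x \prec y$ then $x$ cannot land in a later gap than $y$), so it is not a plain bipartite matching. I would solve it by viewing it as a precedence-constrained assignment of unit ``jobs'' to a constant number of ordered slots with release/deadline windows and exact slot sizes, which is solvable in polynomial time; equivalently, after propagating the gap windows along the precedence relation, feasibility can be certified by a Hall-type / max-flow condition. Once this subroutine is in place the overall running time is polynomial for every fixed $k$, and I would stress that polynomial scores are what keep both the number of per-voter contribution vectors and the dynamic-program state space polynomial, so that the approach would break down for rules such as $r(m,j)=2^{m-j}$.
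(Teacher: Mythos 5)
Your proposal is correct and follows essentially the same route as the paper's proof: enumerate the $O(m^k)$ subsets of $k$ potential defeaters, decide per-voter realizability of the distinguished candidates' positions as unit-job scheduling with release times, deadlines and precedence constraints, and combine voters by a dynamic program over accumulated score vectors, whose state space is polynomial precisely because the rule has polynomial scores. The only difference is that the paper invokes the known polynomial-time algorithm of Garey, Johnson, Simons and Tarjan for the per-voter scheduling step, whereas you re-derive that step by propagating the position windows along the precedence relation and certifying feasibility with a Hall-type matching condition.
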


Next, we prove
Theorem~\ref{thm:PolyRuleNecMemk}. To determine whether $\maxrnk c\P\tie > k$, we search for $k$ candidates that defeat $c$ in some completion $\T$, since $\rnk c\T\tie > k$ if and only if at least $k$ candidates defeat $c$ in $\T$. For that,
we consider each subset
$\set{c_1, \dots, c_k} \subseteq C \setminus \set{c}$ and determine
whether these $k+1$ candidates can get a combination of scores where $c_1, \dots, c_k$ all defeat $c$.

More formally, let $C$ be a set of candidates and $r$ a positional
scoring rule. For a partial profile $\P = (P_1, \dots, P_n)$ and a
sequence $S=(c_1,\dots,c_q)$ of candidates from $C$, we denote by
$\ps(\P,S)$ the set of all possible scores that the candidates in $S$
can obtain jointly in a completion:
$ \ps(\P,S) \eqdef \set{(s(\T,c_1), \dots, s(\T,c_q)) \mid \T \text{
    completes } \P} $.  Note that
$\ps(\P,S) \subseteq\set{0,\dots,n \cdot \Vec{s}_m(1)}^q$.  When $\P$
consists of a single voter $P$, we write $\ps(P,S)$ instead of
$\ps(\P,S)$. To show that $\maxrnk c\P\tie > k$ we need to find a sequence $S=(c_1,\dots,c_q)$ of distinct candidates where $q=k+1$ and $c_q=c$, and a sequence $(s_1,\dots,s_q)\in\ps(\P,S)$ such that the following holds: for $i=1,\dots k$ we have $s_i \geq s_q$ if $c_i \tie c_q$ and $s_i > s_q$ if $c_q \tie c_i$. The following two lemmas show that  if such a sequence exists, then we can find it in polynomial time.

\begin{lemma}
  \label{lemma:scheduling}
  Let $q$ be a fixed natural number and $r$ a positional scoring rule.
  Whether $(s_1, \dots, s_q) \in \ps(P,S)$ can be determined in
  polynomial time, given a partial order $P$ over a set of candidates,
  a sequence $S$ of $q$ candidates, and scores
  $s_1,\dots, s_q$.
\end{lemma}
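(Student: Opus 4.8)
The plan is to reduce the membership test to deciding whether $P$ admits a suitable linear extension, and then to recognize the latter as a single-machine scheduling feasibility problem. First I would turn each target score into a window of admissible positions: since the score vector is non-increasing, the set $I_i \eqdef \set{j : r(m,j)=s_i}$ of positions that yield exactly score $s_i$ is a contiguous interval $[a_i,b_i]$ (possibly empty), and it is computable in polynomial time because $r(m,\cdot)$ is. If some $I_i$ is empty I answer ``no''. Otherwise, since a candidate contributes score $s_i$ iff it occupies a position in $I_i$, membership $(s_1,\dots,s_q)\in\ps(P,S)$ is equivalent to the existence of a linear extension $T$ of $P$ placing each $c_i$ at a position inside $[a_i,b_i]$.

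Next I would cast this as scheduling $m$ unit-length jobs, one per candidate, on a single machine into the slots $1,\dots,m$: a candidate in position $j$ is a job run in slot $j$; the precedence relation is $P$ itself (if $P$ ranks $a$ above $b$, then the job for $a$ must precede that for $b$); each $c_i$ is assigned release time $a_i$ and deadline $b_i$, while every remaining candidate gets the trivial window $[1,m]$. By construction, feasible schedules correspond exactly to the linear extensions of $P$ that place each $c_i$ inside $[a_i,b_i]$, so the membership question is equivalent to feasibility of this scheduling instance.

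Finally I would invoke the classical fact that single-machine feasibility for unit-length jobs under precedence constraints with release times and deadlines is decidable in polynomial time: one first propagates the windows along $P$ to make them precedence-consistent (tightening release times forward and deadlines backward along the order) and then schedules greedily by earliest deadline among the currently released, unscheduled jobs; the instance is feasible precisely when this greedy run violates no deadline. Running this subroutine decides membership in time polynomial in $m$.

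The step I expect to be the crux is the correctness of the reduction, namely arguing that the unconstrained (non-special) candidates can always be interleaved among the constrained ones without violating $P$. This is exactly what the feasibility subroutine certifies: the interplay between $P$ and the position windows---both among the $c_i$ and between the $c_i$ and the free candidates, as well as the precedence among the free candidates themselves---is precisely what the release-time/deadline propagation accounts for, and it is why a plain bipartite matching of candidates to positions (which ignores precedence among the free candidates) would not be sound. I would also remark that this argument uses neither the bound on $q$ nor polynomial scores; fixing $q$ becomes relevant only afterwards, when the number of sequences $S$ to be examined must itself be kept polynomial.
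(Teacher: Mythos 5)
Your proposal is correct and takes essentially the same route as the paper: both reduce membership in $\ps(P,S)$ to feasibility of scheduling unit-length tasks on a single machine, with precedence constraints given by $P$, release times and deadlines for each $c_i$ derived from the (contiguous) block of positions achieving score $s_i$, and trivial windows for the remaining candidates. The only difference is presentational---the paper invokes the polynomial-time scheduling result of Garey, Johnson, Simons and Tarjan as a black box, while you additionally sketch the window-propagation plus earliest-deadline greedy algorithm that solves it.
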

\begin{proof}
  We use a reduction to a scheduling problem where tasks have
  \e{execution times}, \e{release times}, \e{deadlines}, and
  \e{precedence constraints} (i.e., task $x$ should be completed
  before starting task $y$).  This scheduling problem can be solved in
  polynomial time~\cite{DBLP:journals/siamcomp/GareyJST81}.  In the
  reduction, each candidate $c$ is a task with a unit execution time.
  For every $c_i$ in $S$, the release time is $\min \set{j \in [n] : r(m,j) = s_i}$,
  and the deadline is
  $1+\max \set{j \in [n] : r(m,j) = s_i}$. For the rest of the candidates, the release time is 1 and the deadline is $m+1$. The precedence constraints are
  $P$. It holds that $(s_1, \dots, s_q) \in \ps(P,S)$ if and only if the
  tasks can be scheduled according to all the requirements.
\end{proof}

From Lemma~\ref{lemma:scheduling} we can conclude that when $q$ is
fixed and $r$ has polynomial scores, we can construct $\ps(\P,S)$ in
polynomial time, via simple dynamic programming.

\def\lemmadynamic{ Let $q$ be a fixed natural number and $r$ a positional scoring rule with polynomial scores. The set $\ps(\P,S)$ can be computed in polynomial time, given a partial profile $\P$ and a sequence $S$ of $q$ candidates. }
\begin{lemma}
\label{lemma:dynamic}
\lemmadynamic
\end{lemma}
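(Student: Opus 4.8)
The plan is to compute $\ps(\P,S)$ by a dynamic program over the voters, exploiting the fact that completions of the individual partial orders are independent. Writing $\P=(P_1,\dots,P_n)$, the total score of any candidate $c_i \in S$ in a completion $\T$ is the coordinate-wise sum of the per-voter contributions, and any tuple of per-voter completions can be assembled into a completion of $\P$. Hence $\ps(\P,S)$ equals the Minkowski sum $\ps(P_1,S)\oplus\cdots\oplus\ps(P_n,S)$, where $A\oplus B \eqdef \set{a+b : a\in A,\, b\in B}$ denotes coordinate-wise addition of sets of vectors in $\natural^q$. The crucial quantitative observation is that every vector in $\ps(\P,S)$ lies in the cube $\set{0,\dots,n\cdot \Vec{s}_m(1)}^q$, of size $(n\cdot \Vec{s}_m(1)+1)^q$; since $q$ is fixed and $r$ has polynomial scores, this bound is polynomial in $n$ and $m$.

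First I would compute the base sets $\ps(P_i,S)$ for each single voter $P_i$. For a single voter, each candidate of $S$ receives one of the at most $m$ score values $r(m,1),\dots,r(m,m)$, so there are at most $m^q$ candidate joint-score vectors to examine. For each such vector I invoke Lemma~\ref{lemma:scheduling} to decide in polynomial time whether it lies in $\ps(P_i,S)$. Because $q$ is fixed, $m^q$ is polynomial, so each $\ps(P_i,S)$ is computed in polynomial time and has polynomial size.

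Next I would fold in the voters one at a time, maintaining the partial sets $A_j \eqdef \ps((P_1,\dots,P_j),S)$. Starting from $A_1=\ps(P_1,S)$, I would set $A_{j+1} = A_j \oplus \ps(P_{j+1},S)$ by iterating over all pairs $(a,b)$ with $a\in A_j$ and $b\in \ps(P_{j+1},S)$ and inserting $a+b$. Each $A_j$ is contained in the polynomially sized cube above, hence has polynomially many distinct elements; so every Minkowski-sum step ranges over polynomially many pairs and runs in polynomial time, and the $n$ steps together take polynomial time. The final table $A_n$ is the desired set $\ps(\P,S)$.

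The step I expect to require the most care is establishing the Minkowski-sum identity together with the size bound. The identity itself follows in both directions from the independence of the voters' completions, but the efficiency of the whole scheme rests entirely on the intermediate tables never blowing up. This is guaranteed precisely because each $A_j$ is a subset of the polynomially sized score cube, which is where the hypotheses that $q$ is fixed and that $r$ has polynomial scores become essential: without polynomial scores, the cube — and hence the tables — could have exponentially many entries even for $q=1$.
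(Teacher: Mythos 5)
Your proposal is correct and follows essentially the same approach as the paper: construct each single-voter set $\ps(P_i,S)$ via Lemma~\ref{lemma:scheduling}, then fold voters in one at a time with the pointwise-sum (Minkowski sum) recurrence, bounding every intermediate table by the cube $\set{0,\dots,n\cdot\Vec{s}_m(1)}^q$, which is polynomially sized since $q$ is fixed and the scores are polynomial. Your explicit enumeration of the at most $m^q$ candidate vectors for each single-voter base set simply spells out a detail the paper leaves implicit.
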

\begin{proof}
First, for every $i \in [n]$, construct $\ps (P_i, S)$ by applying Lemma~\ref{lemma:scheduling}. Then, given  $\ps ((P_1, \dots, P_i), S)$, observe that
$$ \ps ((P_1, \dots, P_{i+1}), S) = \set{ \Vec{u} + \Vec{w} : \Vec{u} \in \ps ((P_1, \dots, P_i), S), \Vec{w} \in \ps (P_{i+1}, S)}$$
where $\Vec{u} + \Vec{w}$ is a point-wise sum of the two vectors $(\Vec{u} + \Vec{w})(j) = \Vec{u}(j)+ \Vec{w}(j)$. For the complexity, recall that the size of $\ps(\P,S)$ is $O((n \cdot \Vec{s}_m(1))^q)$. Since $r$ has polynomial scores, we get that the size of $\ps ((P_1, \dots, P_i), S)$ is polynomial for every $i \in [n]$. Hence, $\ps(\P,S)$ can be constructed in polynomial time. 
\end{proof}

From Lemma~\ref{lemma:dynamic} we conclude
Theorem~\ref{thm:PolyRuleNecMemk}. Note that the polynomial degree depends on $k$. This is unavoidable under conventional assumptions in parameterized complexity---we can modify the proof of Theorem~\ref{thm:maxAllRules} to get an FPT reduction from the regular clique problem, which is W[1]-hard~\cite{DBLP:conf/cats/MathiesonS08}, to $\decision{\maxp}{>}$. Therefore:

\def\thmmaxWHard{ For every positional scoring rule, $\decision{\maxp}{>}$ is $\mathrm{W}[1]$-hard for the parameter $k$. }
\begin{theorem}
\label{thm:max-w1-hard}
\thmmaxWHard
\end{theorem}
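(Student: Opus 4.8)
The plan is to give a parameter-preserving FPT reduction from the regular clique problem, which is $\mathrm{W}[1]$-hard for the parameter $k$~\cite{DBLP:conf/cats/MathiesonS08}, reusing almost verbatim the construction from the proof of Theorem~\ref{thm:maxAllRules}. The key observation is that the reduction there already turns an instance of (regular) independent set into an instance of $\decision{\maxp}{>}$ while keeping the target comparison value equal to the size $k$ of the sought independent set: it establishes that $\beta(G) \geq k$ if and only if $\maxrnk {c^*}\P\tie \geq k+1$, that is, if and only if $\maxrnk {c^*}\P\tie > k$. Hence that construction is already an exact parameter reduction from regular independent set (parameter $k$) to $\decision{\maxp}{>}$ (parameter $k$), and it runs in polynomial time for every fixed positional scoring rule $r$. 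So all that is missing is to route clique into independent set without disturbing regularity or the parameter.

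To do this I would first pass from cliques to independent sets by complementation. Given a regular graph $G = (U,E)$ of common degree $\Delta$ together with an integer $k$, form the complement graph $\overline{G} = (U, \overline{E})$. A set $B \subseteq U$ is a clique in $G$ if and only if it is an independent set in $\overline{G}$, so $G$ has a clique of size $k$ if and only if $\overline{G}$ has an independent set of size $k$. Crucially, $\overline{G}$ is again regular, of common degree $n-1-\Delta$, so it is a legal input to the construction of Theorem~\ref{thm:maxAllRules}.

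Chaining the two steps, given a regular clique instance $(G,k)$ I would output the instance $(C, \P, \tie, k)$ produced by the Theorem~\ref{thm:maxAllRules} construction applied to $(\overline{G}, k)$. Correctness is then immediate: $G$ has a clique of size $k$ if and only if $\overline{G}$ has an independent set of size $k$, which by the already-proved correctness of the construction holds if and only if $\maxrnk {c^*}\P\tie > k$. The comparison value in the target instance is exactly $k$, so the parameter is preserved; and since forming the complement and running the Theorem~\ref{thm:maxAllRules} construction both take polynomial time (the constructed profile has only polynomially many voters, each being a permutation of the candidates), the whole map is an FPT reduction. Because the underlying construction works verbatim for every positional scoring rule $r$, the $\mathrm{W}[1]$-hardness holds for all such $r$.

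The only point requiring a genuine check, rather than bookkeeping, is that complementation preserves regularity while leaving the parameter unchanged, so that the $\mathrm{W}[1]$-hardness of regular clique transfers cleanly to regular independent set and hence to $\decision{\maxp}{>}$; everything else is inherited directly from the correctness and polynomiality of the Theorem~\ref{thm:maxAllRules} reduction.
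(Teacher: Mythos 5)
Your proposal is correct and follows exactly the paper's own two-step argument: the paper likewise first transfers $\mathrm{W}[1]$-hardness from regular clique to regular independent set via the standard (complementation) reduction, and then observes that the construction in the proof of Theorem~\ref{thm:maxAllRules} is itself a parameter-preserving FPT reduction from regular independent set to $\decision{\maxp}{>}$. Your write-up merely makes explicit the details the paper leaves implicit (that complementation preserves regularity and that the comparison value equals the parameter $k$).
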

\begin{proof}
We describe an FTP reduction, in two steps, from the regular clique problem: Given a regular graph $G = (U, E)$ and an integer $k$, is there a subset $U' \subseteq U$ of size at least $k$ such that every two vertices in $U'$ are joined by an edge in $E$? This problem is $\mathrm{W}[1]$-hard for the parameter $k$~\cite{DBLP:conf/cats/MathiesonS08}.

First, by the standard reduction from clique to independent set, deciding whether a regular graph contains an independent set of size $k$ is $\mathrm{W}[1]$-hard for the parameter $k$. Then, the proof of Theorem~\ref{thm:maxAllRules} provides an FTP reduction from independent set on regular graphs to $\decision{\maxp}{>}$.
\end{proof}

\subsubsection{Complexity of $\decision{\minp}{{<}\bark}$}

Recall that $\bark \eqdef m-k+1$. We now show that the problem of
$\decision{\minp}{{<}\bark}$ is tractable for every positional
scoring rule with polynomial scores. We find it surprising because
$\decision{\minp}{{<}2}$ is NP-complete for every pure positional
scoring rule other than plurality and veto, by a reduction from \PW.

Given a positional scoring rule $r$ and functions $a, b : \natural_+ \rightarrow \natural_+$, we define the
\e{$(a,b)$-reversed} scoring rule, denoted $r^{a, b}$, to be the one
given by $r^{a, b}(m,i) = a(m) - b(m) \cdot r(m,m+1-i)$. For example, the
$(1, 1)$-reversed rule of plurality is veto, and more generally,
the $(1, 1)$-reversed rule of $t$-approval is $t$-veto. Also, the $(m,
1)$-reversed rule of Borda is Borda itself.
In the following lemma, we use a generalized notation for our decision
problems where, instead of fixed $k$ or $\bark$, we use a fixed
function $f$ that is is applied to the number $m$ of candidates to
produce a number $f(m)$.

\def\lemmareversing{
Let $r$ be a positional scoring rule, let $f, a, b : \natural_+ \rightarrow \natural_+$, and let $\bar{f}(m) = m+1-f(m)$. There exists a reduction from $\decision{\minp_r}{{<} f}$ to $\decision{\maxp_{r^{a, b}}}{{>}\bar{ f}}$, and from $\decision{\maxp_r}{{>} f}$ to $\decision{\minp_{r^{a, b}}}{{<} \bar{f}}$. }
\begin{lemma}
\label{lemma:reversing}
\lemmareversing
\end{lemma}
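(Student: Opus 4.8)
The plan is to reverse everything. Given an instance $(\P, c, \tie)$, I would output the instance $(\P^{\mathsf{rev}}, c, \tie^{\mathsf{rev}})$, where $\P^{\mathsf{rev}}$ reverses each voter's partial order (so $a \succ b$ holds in $P_i^{\mathsf{rev}}$ exactly when $b \succ a$ holds in $P_i$) and $\tie^{\mathsf{rev}}$ is the reverse of the tiebreaker $\tie$. This map is computable in polynomial time and leaves the number $m$ of candidates unchanged, so $\bar f(m) = m+1-f(m)$ is the image of $f(m)$ under the reflection $j \mapsto m+1-j$. The intuition is that reversing a ballot sends a candidate from position $j$ to position $m+1-j$, that re-scoring the reversed ballot under $r^{a,b}$ recovers an affine image of the original $r$-score, and that reversing the tiebreaker then makes the induced ranking the exact mirror image of the original one.

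First I would set up the score identity. Reversal is a bijection between completions of $\P$ and completions of $\P^{\mathsf{rev}}$, since a linear extension $L$ of $P_i$ corresponds bijectively to its reverse, which is a linear extension of $P_i^{\mathsf{rev}}$. For a completion $\T$ of $\P$ and its reverse $\T^{\mathsf{rev}}$, a candidate at position $j$ in a ballot of $\T$ lands at position $m+1-j$ in the corresponding reversed ballot, where under $r^{a,b}$ it earns $r^{a,b}(m,m+1-j) = a(m) - b(m)\,r(m,j)$. Summing over the $n$ voters gives
\[ s(\T^{\mathsf{rev}}, c, r^{a,b}) = n\cdot a(m) - b(m)\cdot s(\T, c, r)\,. \]
Because $b(m) > 0$, this is a strictly decreasing affine function of $s(\T,c,r)$: it inverts the order of scores and preserves equalities. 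Combining this with the fact that $\tie^{\mathsf{rev}}$ reverses the tiebreaker, one verifies that $c'$ defeats $c$ under $(r^{a,b}, \T^{\mathsf{rev}}, \tie^{\mathsf{rev}})$ if and only if $c$ defeats $c'$ under $(r, \T, \tie)$. Hence the number of candidates defeating $c$ in the reversed ranking equals the number $c$ defeats in the original one, which yields the rank-reversal identity
\[ \rnk{c}{\T^{\mathsf{rev}}}{\tie^{\mathsf{rev}}} = m+1 - \rnk{c}{\T}{\tie}\,, \]
the right-hand side computed under $r^{a,b}$.

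Taking extrema over the bijectively paired completions then gives $\minrnk{c}{\P}{\tie} = m+1 - \maxrnk{c}{\P^{\mathsf{rev}}}{\tie^{\mathsf{rev}}}$ and, symmetrically, $\maxrnk{c}{\P}{\tie} = m+1 - \minrnk{c}{\P^{\mathsf{rev}}}{\tie^{\mathsf{rev}}}$. The first identity shows that $\minrnk{c}{\P}{\tie} < f(m)$ holds exactly when $\maxrnk{c}{\P^{\mathsf{rev}}}{\tie^{\mathsf{rev}}} > m+1-f(m) = \bar f(m)$, which is precisely the reduction from $\decision{\minp_r}{{<}f}$ to $\decision{\maxp_{r^{a,b}}}{{>}\bar f}$; the second identity gives the reduction from $\decision{\maxp_r}{{>}f}$ to $\decision{\minp_{r^{a,b}}}{{<}\bar f}$ in the same manner.

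The one delicate point I expect is the bookkeeping of ties: I must check that reversing the tiebreaker interacts correctly with the strict decrease of the affine map, so that the defeat relation is inverted \emph{exactly} (and not merely up to ties). This is what makes the rank-reversal identity an exact equality rather than a one-sided inequality. Everything else---the bijection on completions, the affine score identity, and the polynomial-time computability of the transformation---is routine.
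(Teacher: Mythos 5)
Your proposal is correct and follows essentially the same route as the paper's proof: reverse each voter's partial order and the tiebreaker, observe that reversal is a bijection on completions satisfying $s(\T^{\mathsf{rev}}, c, r^{a,b}) = n\cdot a(m) - b(m)\cdot s(\T,c,r)$, deduce the rank-reversal identity $\rnk{c}{\T^{\mathsf{rev}}}{\tie^{\mathsf{rev}}} = m+1-\rnk c\T\tie$, and conclude both reductions by taking extrema. In fact you are slightly more careful than the paper on the one point it glosses over---verifying that the strictly decreasing affine score map together with the reversed tiebreaker inverts the defeat relation exactly, ties included.
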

\begin{proof}
For a partial order $P$, the \e{reversed order} is defined by 
$P^R \eqdef \set{x \succ y : (y \succ x) \in P}$. Note that $T$ extends $P$ if and only if $T^R$ extends $P^R$. Given $(C, \P, \tie)$ as input under $r$ with $\P = (P_1, ..., P_n)$, consider $(C, \P', \tie^R)$ under $r^R$ where $\P' = (P_1^R, ..., P_n^R)$. 

Let $\T = (T_1, ..., T_n)$ be a completion of $\P$, observe the completion $\T' = (T_1^R, ..., T_n^R)$ of $\P'$. For every candidate $c$ and voter $v_i$ we get $s(T_i^R, c, r^{a,b}) = a(m) - b(m) \cdot s(T_i, c, r)$ so overall $s(\T', c, r^{a,b}) = n \cdot a(m) - b(m) \cdot s(\T, c, r)$. Since the tie-breaking order is also reversed, the rank is $\rnk c{\T'}{\tie^R} = m+1-\rnk c\T\tie$. In same way, if $\T'$ is a completion of $\P'$ then by reversing the orders we get a completion $\T$ of $\P$ such that $\rnk c\T\tie = m+1-\rnk c{\T'}{\tie^R}$ for every $c \in C$. We can deduce that $\minrnk c\P\tie < f(m)$ if and only if $\maxrnk c{\P'}{\tie^R} > f'(m)$, and $\maxrnk c\P\tie > f(m)$ if and only if $\minrnk c{\P'}{\tie^R} < f'(m)$. From the above points we conclude the parts of the lemma, respectively.
\end{proof}

Using Lemma~\ref{lemma:reversing} and
Theorem~\ref{thm:PolyRuleNecMemk}, we can show that:

\def\thmpolyruleboundedreversedmin{ $\decision{\minp}{{<}\bark}$ is solvable in polynomial time for every fixed $k \geq 1$ and positional scoring rule $r$ with polynomial scores. }
\begin{theorem}
\label{thm:poly-rule-bounded-reversed-min}
\thmpolyruleboundedreversedmin
\end{theorem}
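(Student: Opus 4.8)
The plan is to obtain the result as a direct consequence of two facts already in hand: the reversal reduction of Lemma~\ref{lemma:reversing} and the tractability of the top-$k$ maximal-rank problem in Theorem~\ref{thm:PolyRuleNecMemk}. The guiding observation is that asking whether the minimal rank falls into the bottom $k$ positions (i.e.\ is smaller than $\bark=m-k+1$) is, after reversing all the orders and the tiebreaker, the same as asking whether the maximal rank falls into the top $k$ positions. Crucially, the reversal turns the candidate-count-dependent threshold $\bark$ into the \emph{constant} $k$, which is exactly the regime where Theorem~\ref{thm:PolyRuleNecMemk} applies.

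Concretely, I would instantiate Lemma~\ref{lemma:reversing} with $f(m)=\bark=m-k+1$. The lemma gives a polynomial-time reduction from $\decision{\minp_r}{{<}f}$ to $\decision{\maxp_{r^{a,b}}}{{>}\bar f}$, where $\bar f(m)=m+1-f(m)$. Substituting $f(m)=m-k+1$ yields $\bar f(m)=m+1-(m-k+1)=k$, a fixed constant independent of $m$ (the reduction also preserves the candidate set, so $m$ is unchanged). Hence $\decision{\minp_r}{{<}\bark}$ reduces to $\decision{\maxp_{r^{a,b}}}{{>}k}$ for the fixed value $k$, and it remains only to ensure that $r^{a,b}$ meets the hypotheses of Theorem~\ref{thm:PolyRuleNecMemk}, namely that it is a positional scoring rule with polynomial scores.

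The main point to check, and the step I expect to require the most care, is therefore the second one: choosing the reversal parameters $a,b$ so that $r^{a,b}$ is a legitimate positional scoring rule with polynomial scores. A convenient choice is $b(m)=1$ and $a(m)=r(m,1)$, which gives $r^{a,b}(m,i)=r(m,1)-r(m,m+1-i)$. I would verify the defining properties in turn: $r(m,m+1-i)$ is non-decreasing in $i$ (since $r$ is non-increasing in its position argument), so $r^{a,b}(m,i)$ is non-increasing in $i$; non-negativity holds because $r(m,m+1-i)\le r(m,1)$; and strictness at the ends, $r^{a,b}(m,1)=r(m,1)-r(m,m)>0=r^{a,b}(m,m)$, follows from $r(m,1)>r(m,m)$. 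Since $r$ has polynomial, polynomial-time-computable scores, the same holds for $a(m)=r(m,1)$, and thus $r^{a,b}$ inherits polynomial scores that are computable in polynomial time. If one insists on the coprimality convention in the definition of a scoring rule, I would divide each $\Vec{s}_m$ by its greatest common divisor, which alters neither the induced ranks nor the polynomial bound. Composing the polynomial-time reduction of Lemma~\ref{lemma:reversing} with the polynomial-time algorithm of Theorem~\ref{thm:PolyRuleNecMemk} for $\decision{\maxp_{r^{a,b}}}{{>}k}$ then yields a polynomial-time algorithm for $\decision{\minp_r}{{<}\bark}$, completing the proof.
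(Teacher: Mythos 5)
Your proposal is correct and follows essentially the same route as the paper's own proof: the paper also takes $a(m)=\Vec{s}_m(1)$, $b(m)=1$, notes that $r^{a,b}$ has polynomial scores, and then composes Lemma~\ref{lemma:reversing} with Theorem~\ref{thm:PolyRuleNecMemk}. Your extra checks (that $\bar f(m)=k$ is constant, that $r^{a,b}$ satisfies the scoring-rule axioms, and the coprimality normalization) are details the paper leaves implicit, but they do not change the argument.
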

\begin{proof}
Let $r$ be a positional scoring rule with polynomial scores, and denote $r$ by $\set{\Vec{s}_m}_{m>1}$. Define the functions $a(m) = \Vec{s}_m(1)$, $b(m) = 1$, and observe $r^{a, b}$. For any $m>1$, the vector for $r^{a,b}$ is
$$ (\Vec{s}_m(1) - \Vec{s}_m(m), \Vec{s}_m(1) - \Vec{s}_m(m-1), \dots, \Vec{s}_m(1) - \Vec{s}_m(2), \Vec{s}_m(1) - \Vec{s}_m(1)) $$
therefore $r^{a,b}$ is also with polynomial scores. For any fixed $k$,
$\decision{\maxp_{r^{a,b}}}{{>}k}$ is solvable in polynomial
time by Theorem~\ref{thm:PolyRuleNecMemk}. Then, by
Lemma~\ref{lemma:reversing}, $\decision{\minp_r}{{<}\bark}$ is
solvable in polynomial time.
\end{proof}

\subsubsection{Complexity of $\decision{\maxp}{{>}\bark}$}

First, for plurality and veto, by Theorem~\ref{thm:ptwk-fixed-possible-veto-plurality} and Lemma~\ref{lemma:reversing}, we can deduce the following:
\begin{corollary}
\label{cor:max-fixed-reversed-veto-plurality}
For every fixed $k \geq 1$, $\decision{\maxp}{{>}\bark}$ is
solvable in polynomial time under the plurality and veto rules.
\end{corollary}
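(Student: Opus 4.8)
The plan is to obtain the result as a direct instantiation of Lemma~\ref{lemma:reversing}, choosing the threshold function and the reversal parameters so that the reversed problem falls exactly into the tractable regime of Theorem~\ref{thm:ptwk-fixed-possible-veto-plurality}. Concretely, I would apply the second reduction of Lemma~\ref{lemma:reversing} with $f(m) = \bark = m-k+1$ and with $a(m) = b(m) = 1$. The first thing to check is what $\bar f$ becomes: by definition $\bar f(m) = m+1-f(m) = m+1-(m-k+1) = k$, so the reversed threshold collapses to the fixed constant $k$. Hence the reduction transforms an instance of $\decision{\maxp_r}{{>}\bark}$ into an instance of $\decision{\minp_{r^{1,1}}}{{<}k}$, where $k$ is now a genuinely fixed number rather than part of the input.

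Next I would verify that the $(1,1)$-reversal keeps us inside the tractable class. From $r^{a,b}(m,i) = a(m) - b(m)\cdot r(m,m+1-i)$ with $a=b=1$ we get $r^{1,1}(m,i) = 1 - r(m,m+1-i)$. Applying this to plurality $(1,0,\dots,0)$ yields $(1,\dots,1,0)$, i.e.\ veto, and applying it to veto $(1,\dots,1,0)$ yields $(1,0,\dots,0)$, i.e.\ plurality (this is the observation already noted right after the definition of the reversed rule). Thus when $r$ is plurality the target rule $r^{1,1}$ is veto, and when $r$ is veto the target rule is plurality; in either case $r^{1,1}\in\set{\text{plurality},\text{veto}}$.

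Finally, I would invoke Theorem~\ref{thm:ptwk-fixed-possible-veto-plurality}, which states that $\decision{\minp}{{<}k}$ is solvable in polynomial time for plurality and veto at every fixed $k$. Composing the polynomial-time reduction of Lemma~\ref{lemma:reversing} with this algorithm gives a polynomial-time decision procedure for $\decision{\maxp_r}{{>}\bark}$ whenever $r$ is plurality or veto, which is exactly the claim.

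I do not expect a genuine obstacle here: the corollary is essentially a bookkeeping composition of two already-established results. The only points that require a moment's care are \emph{(i)} confirming that the reversed threshold $\bar f$ is the constant $k$ (so that Theorem~\ref{thm:ptwk-fixed-possible-veto-plurality}, which fixes $k$, is directly applicable), and \emph{(ii)} confirming that the $(1,1)$-reversal exchanges plurality and veto rather than escaping the tractable class. Both are immediate from the definitions.
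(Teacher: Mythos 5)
Your proposal is correct and takes essentially the same route as the paper: Corollary~\ref{cor:max-fixed-reversed-veto-plurality} is obtained there precisely by combining Lemma~\ref{lemma:reversing} (with the $(1,1)$-reversal, which swaps plurality and veto, and with the threshold $\bark$ reversing to the fixed constant $k$) with Theorem~\ref{thm:ptwk-fixed-possible-veto-plurality}. Your write-up merely makes explicit the instantiation details that the paper leaves implicit.
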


A positional scoring rule $r$ is \e{$p$-valued}, where $p$ is a
positive integer greater than $1$, if there exists a positive integer
$m_0$ such that for all $m \geq m_0$, the scoring vector $\Vec{s}_m$
of $r$ contains exactly $p$ distinct values. A rule is \e{bounded} if it
is $p$-valued for some $p > 1$. Note that for a pure bounded rule
there exists some constant $t$ such that for every $m$, the values in
$\Vec{s}_m$ are at most $t$, since for all $m > m_0$ the vector
$\Vec{s}_m$ cannot contain values that do not appear in
$\Vec{s}_{m_0}$. Combining Theorem~\ref{thm:PolyRulePosMemk} and Lemma~\ref{lemma:reversing},
we get the following:

\def\thmmaxboundedreversedstrongly{ Suppose that a positional scoring rule $r$ is bounded, strongly pure,
and is neither plurality nor veto. Then $\decision{\maxp_r}{{>}\bark}$ is NP-complete for all fixed
$k\geq 2$. }
\begin{theorem}
\label{thm:max-bounded-reversed-strongly}
\thmmaxboundedreversedstrongly
\end{theorem}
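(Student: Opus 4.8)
The plan is to mirror the tractability argument of Theorem~\ref{thm:poly-rule-bounded-reversed-min}, replacing its tractable ingredient by the hardness of Theorem~\ref{thm:PolyRulePosMemk}. Write $r = \set{\Vec{s}_m}_{m>1}$ and fix $k \ge 2$. I would define the reversed rule $r' \eqdef r^{a,b}$ with $a(m) = \Vec{s}_m(1)$ and $b(m) = 1$, so that its vector is $\Vec{s}'_m(i) = \Vec{s}_m(1) - \Vec{s}_m(m+1-i)$. The algebraic key is that this reversal is an involution: applying Lemma~\ref{lemma:reversing}'s construction to $r'$ with $a_2(m) = \Vec{s}_m(1)$ and $b_2(m) = 1$ gives $(r')^{a_2,b_2}(m,i) = \Vec{s}_m(1) - \bigl(\Vec{s}_m(1) - \Vec{s}_m(i)\bigr) = \Vec{s}_m(i) = r(m,i)$, i.e.\ $(r')^{a_2,b_2} = r$ exactly. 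Hence Lemma~\ref{lemma:reversing}, applied to $r'$ with $f(m) = k$ so that $\bar f(m) = m+1-k = \bark$, yields a reduction from $\decision{\minp_{r'}}{{<}k}$ to $\decision{\maxp_r}{{>}\bark}$. It therefore suffices to check that $r'$ meets the hypotheses of Theorem~\ref{thm:PolyRulePosMemk}; NP membership of $\decision{\maxp_r}{{>}\bark}$ is routine, since boundedness keeps the scores polynomial-time computable and one can guess a completion and verify the rank of $c$.

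Two of the three hypotheses on $r'$ are immediate. First, $r'$ has polynomial scores: since $r$ is bounded it is $p$-valued, so (as noted before the theorem) its scores are bounded by a constant $t$, whence $\Vec{s}'_m(i) \le \Vec{s}_m(1) \le t$. Second, $r'$ is neither plurality nor veto: by the involution, $r' = $ plurality would force $r = (r')^{a_2,b_2}$ to be the reverse of plurality, namely veto, and symmetrically $r' = $ veto would force $r = $ plurality, both contradicting the hypothesis on $r$. One also checks that $r'$ is a legitimate scoring vector, being non-increasing with $\Vec{s}'_m(m) = 0 < \Vec{s}_m(1) - \Vec{s}_m(m) = \Vec{s}'_m(1)$.

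The main obstacle is the remaining hypothesis, namely that $r'$ is strongly pure, because the shift $a(m) = \Vec{s}_m(1)$ a priori depends on $m$, and a varying shift would destroy the single-end-insertion structure. Here boundedness is exactly what saves the argument. I would first show that $\Vec{s}_m(1)$ is eventually constant: since $r$ is strongly pure, passing from $m$ to $m+1$ inserts a value only at the front or the back; a front insertion must be $\ge \Vec{s}_m(1)$, and a strict inequality would introduce a new distinct value, violating $p$-valuedness for $m \ge m_0$. Thus for $m \ge m_0$ every front insertion duplicates the current maximum $V \eqdef \Vec{s}_m(1)$ and every back insertion duplicates the current minimum. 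With $a(m) = V$ now constant, reversal turns a front insertion in $\Vec{s}_m$ into a back insertion in $\Vec{s}'_m$ and vice versa, so $r'$ is strongly pure for all $m \ge m_0$. This suffices because the reduction behind Theorem~\ref{thm:PolyRulePosMemk} (inherited from the Classification Theorem) only uses instances with arbitrarily many candidates and can be padded with dummy candidates, so its conclusion is insensitive to the behaviour of $r'$ on the finitely many small $m$; alternatively, normalizing $r'$ to co-prime scores merely divides by an eventually-constant gcd and again preserves strong purity for large $m$.

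Assembling the pieces, Theorem~\ref{thm:PolyRulePosMemk} shows that $\decision{\minp_{r'}}{{<}k}$ is NP-complete for every fixed $k \ge 2$, and the reversing reduction of Lemma~\ref{lemma:reversing} transports this hardness to $\decision{\maxp_r}{{>}\bark}$. Combined with NP membership, this gives the claimed NP-completeness, completing the proof.
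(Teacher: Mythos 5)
Your overall route is the same as the paper's: reverse the rule, apply Theorem~\ref{thm:PolyRulePosMemk} to the reversed rule $r'$, and pull the hardness back through Lemma~\ref{lemma:reversing} using the fact that reversing twice recovers $r$. The involution computation, the polynomial-scores check, and the use of $f(m)=k$, $\bar f(m)=\bark$ are all correct. The gap is in your choice of shift $a(m)=\Vec{s}_m(1)$ and the ensuing repair of strong purity. With a shift that varies with $m$, the reversed rule $r'$ is, as you yourself note, strongly pure only for $m\geq m_0$; worse, for $m<m_0$ it need not even be \emph{pure}: whenever $\Vec{s}_{m+1}(1)>\Vec{s}_m(1)$, every entry of $\Vec{s}'_{m+1}$ is shifted relative to the corresponding entry of $\Vec{s}'_m$, so $\Vec{s}'_{m+1}$ is not obtained from $\Vec{s}'_m$ by inserting a single score. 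Theorem~\ref{thm:PolyRulePosMemk} is stated only for strongly pure rules, and its proof rests on \PW-hardness from the Classification Theorem (Theorem~\ref{thm:classification}), which is stated only for pure rules. Your patch---that the reductions ``only use instances with arbitrarily many candidates and can be padded, so the conclusion is insensitive to finitely many small $m$''---is an assertion that both cited results remain valid for rules that are (strongly) pure only eventually. That is plausible, but it is a strengthening of the results you cite rather than a consequence of them; making it rigorous would require re-opening their proofs. As written, the application of Theorem~\ref{thm:PolyRulePosMemk} to $r'$ is not licensed.

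The fix is already implicit in your own argument, and it is exactly what the paper does: by purity and boundedness, \emph{all} scores of $r$, for every $m$, are bounded by a single constant $t$ (no new values are ever inserted past $m_0$), so take the shift to be that \emph{constant} and set $r'\eqdef r^{t,1}$. With a constant shift, a front insertion into $\Vec{s}_m$ becomes a back insertion into $\Vec{s}'_m$ and vice versa for \emph{every} $m$, so $r'$ is strongly pure everywhere with no asymptotic caveat; its scores stay bounded by $t$; the involution $(r')^{t,1}=r$ is immediate from $t-(t-x)=x$; and Lemma~\ref{lemma:reversing} then transports the NP-hardness of $\decision{\minp_{r'}}{{<}k}$ given by Theorem~\ref{thm:PolyRulePosMemk} to $\decision{\maxp_r}{{>}\bark}$. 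Your observation that $\Vec{s}_m(1)$ is eventually constant is correct and uses boundedness in essentially the same way---but deploying that constant as the shift for all $m$, instead of the varying shift $\Vec{s}_m(1)$, is precisely what eliminates the gap.
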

\begin{proof}
Let $r$ be a positional scoring rule that satisfies the conditions of the theorem, and let us denote $r$ by $\set{\Vec{s}_m}_{m>1}$. Since $r$ is bounded and strongly pure, there exists some constant $t$ such that for every $m$, the values in $\Vec{s}_m$ are at most $t$. Observe the scoring rule $r' = r^{t,1}$. For every $m \geq 1$, if $\Vec{s}_{m+1} = \Vec{s}_{m+1}(1) \circ \Vec{s}_m$ then the vector of $r'$ for $m+1$ candidates is
 \begin{align*}
     & (t-\Vec{s}_{m+1}(m+1), t-\Vec{s}_{m+1}(m), \dots, t-\Vec{s}_{m+1}(1)) \\
     &= (t-\Vec{s}_{m}(m), t-\Vec{s}_{m}(m-1), \dots, t-\Vec{s}_{m}(1)) \circ (t-\Vec{s}_{m+1}(1))
 \end{align*}
Otherwise, $\Vec{s}_{m+1} = \Vec{s}_m \circ \Vec{s}_{m+1}(m+1)$, and the vector of $r'$ for $m+1$ candidates is
 \begin{align*}
     & (t-\Vec{s}_{m+1}(m+1), t-\Vec{s}_{m+1}(m), \dots, t-\Vec{s}_{m+1}(1)) \\
     & = (t-\Vec{s}_{m+1}(n+1)) \circ (t-\Vec{s}_{m}(m), t-\Vec{s}_{m}(m-1), \dots, t-\Vec{s}_{m}(1))
\end{align*}
Therefore $r'$ is strongly pure, has polynomial scores (the scores are
bounded by $t$), and is neither plurality nor veto (because $r$ is
neither plurality nor veto). By Theorem~\ref{thm:PolyRulePosMemk},
$\decision{\minp_{r'}}{{<}k}$ is NP-complete. Since $r = (r')^{t,1}$, by Lemma~\ref{lemma:reversing} we
deduce that $\decision{\maxp_r}{{>}\bark}$ is NP-complete.
\end{proof}
\section{Additional Voting Rules}\label{sec:other}

In this section, we consider other, non-positional voting rules. In each rule, we recall the definition of the score of a candidate that is used for winner determination (i.e., top-score candidates). Once we have the score, we automatically get the rank of a candidate, namely $\rnk c\T\tie$, and the minimal and maximal ranks, namely $\minrnk c\P\tie, \maxrnk c\P\tie$, respectively, in the same way as the positional scoring rules. Our results are summarized in Table~\ref{tab:complexityOther}.

{\begin{table}[t]
  \def\arraystretch{1.2}
  \caption{\label{tab:complexityOther}  Results for non-positional voting rules.}
  \centering
  \scalebox{1}{
    \begin{tabular}{c | c c | c c | c c}
  \hline
  \textbf{Problem} & \multicolumn{2}{c|}{Copeland} & \multicolumn{2}{c|}{Bucklin} & \multicolumn{2}{c}{Maximin} \\
  \hline\hline
  \PW & NP-c & \cite{DBLP:journals/jair/XiaC11} & NP-c & \cite{DBLP:journals/jair/XiaC11} & NP-c & \cite{DBLP:journals/jair/XiaC11} \\
  \hline
  \NW & coNP-c & \cite{DBLP:journals/jair/XiaC11} & P & \cite{DBLP:journals/jair/XiaC11} & P & \cite{DBLP:journals/jair/XiaC11} \\
  \hline
  $\decision{\minp}{<}$ & NP-c & [Thm.~\ref{thm:copelandMinMax}] & NP-c & [Thm.~\ref{thm:bucklinMinMax}] & NP-c & [Thm.~\ref{thm:maximinMinBounded}] \\
 \hline
  $\decision{\maxp}{>}$ & NP-c & [Thm.~\ref{thm:copelandMinMax}] & NP-c & [Thm.~\ref{thm:bucklinMinMax}] & NP-c & [Thm.~\ref{thm:maximinMaxBounded}] \\
  \hline
  $\decision{\minp}{<k}$ & NP-c & [Thm.~\ref{thm:copelandMinMax}] & NP-c & [Thm.~\ref{thm:bucklinMinBounded}] & NP-c & [Thm.~\ref{thm:maximinMinBounded}] \\
  \hline
  $\decision{\maxp}{>k}$ & NP-c & [Thm.~\ref{thm:copelandMinMax}] & P & [Thm.~\ref{thm:bucklinMaxBounded}] & NP-c & [Thm.~\ref{thm:maximinMaxBounded}] \\
  \hline
  \end{tabular}}
\end{table}}

\subsection{Copeland}
We say that a candidate $c$ \e{defeats} $c'$ in a pairwise election if the majority of the votes rank $c$ ahead of $c'$. In the Copeland rule, the score of $c$ is the number of candidates $c' \neq c$ that $c$ defeats in a pairwise election. A winner is a candidate with a maximal score. It is known that $\PW$ is NP-complete and $\NW$ is coNP-complete with respect to Copeland~\cite{DBLP:journals/jair/XiaC11}. We use reductions from \PW and \NW under Copeland to obtain hardness of computing the minimal and maximal ranks, respectively.

\def\thmcopelandMinMax{ For the Copeland rule, $\decision{\minp}{{<}k}$ is NP-complete for all fixed $k \geq 2$, and  $\decision{\maxp}{{>}k}$ is NP-complete for all fixed $k \geq 1$. }
\begin{theorem}
\label{thm:copelandMinMax}
\thmcopelandMinMax
\end{theorem}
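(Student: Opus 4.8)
The plan is to prove each half of the statement by reducing from the matching winner-determination problem for Copeland and gluing on a constant number of \emph{dummy candidates} that are forced to outrank the target candidate $c$ in every completion. Membership of both problems in NP is immediate: the Copeland scores of a complete profile are computable in polynomial time, so one can guess a completion and verify that the required number of candidates defeat (or are defeated by) $c$. For the hardness of $\decision{\maxp}{{>}k}$ (every fixed $k\geq1$) I would reduce from the \emph{complement} of \NW, which is NP-complete because \NW is coNP-complete for Copeland~\cite{DBLP:journals/jair/XiaC11}; for the hardness of $\decision{\minp}{{<}k}$ (every fixed $k\geq2$) I would reduce from \PW, which is NP-complete for Copeland~\cite{DBLP:journals/jair/XiaC11}.

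For $\decision{\maxp}{{>}k}$, given a Copeland instance $(C,\P,c)$ of \NW, I add $k-1$ fresh dummy candidates $g_1,\dots,g_{k-1}$, replace each partial order $P_i$ by $P_i \cup P(\set{g_1},\dots,\set{g_{k-1}},C)$ (so that every completion ranks $g_1\succ\cdots\succ g_{k-1}$ above all of $C$), and set $\tie = O(\set{g_1,\dots,g_{k-1}},\set{c},C\setminus\set{c})$. Note that $\rnk c\T\tie > k$ iff at least $k$ candidates defeat $c$. Because the dummies always sit at the top of every vote, each $g_i$ wins the pairwise majority against every original candidate and every later dummy, so its Copeland score is at least $m=\abs C$, strictly above the Copeland score of any original candidate (which never exceeds $m-1$ and is unaffected by the dummies). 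Hence the $k-1$ dummies always defeat $c$, while among the original candidates $c$ is ranked first exactly when it is a co-winner (here we use that $c$ precedes the others in $\tie$). Therefore $\maxrnk c\P\tie>k$ iff some completion has at least one \emph{original} candidate defeating $c$, i.e.\ iff $c$ fails to be a co-winner in some completion, i.e.\ iff $c$ is \emph{not} a necessary winner of the original instance. The case $k=1$ uses no dummies and is precisely the complement of \NW.

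The reduction for $\decision{\minp}{{<}k}$ is the mirror image with $k-2$ dummies. Starting from a \PW instance $(C,\P,c)$, I add $g_1,\dots,g_{k-2}$, prepend them to every vote exactly as above, and use the same form of tiebreaker $\tie=O(\set{g_1,\dots,g_{k-2}},\set{c},C\setminus\set{c})$. Again each dummy has Copeland score at least $m$ and always defeats $c$, so every completion has at least $k-2$ candidates ranked above $c$. Consequently $\rnk c\T\tie<k$ forces exactly these $k-2$ dummies and \emph{no} original candidate to precede $c$, which happens iff $c$ is ranked first among the originals, i.e.\ iff $c$ is a co-winner in that completion. Thus $\minrnk c\P\tie<k$ iff $c$ is a possible winner, and for $k=2$ (no dummies) the problem is literally \PW.

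The one point that needs genuine care is the dummy gadget: we must guarantee that the $g_i$ always outrank $c$ \emph{and} that attaching them does not perturb the Copeland comparisons among the members of $C$, so that the induced sub-election on $C$ is faithfully the original instance in every completion. The key observation is that ranking all dummies at the very top of every (partial) vote, in a fixed internal order, achieves both simultaneously: pairwise majorities between two original candidates are untouched, since their relative order inside each vote is exactly that of the original profile, while each dummy wins every pairwise contest against an original and thereby accumulates a Copeland score of at least $m$, which strictly dominates the at-most-$(m-1)$ score of any original candidate regardless of the completion. This clean separation is what lets the general-$k$ statements collapse to the $k=1$ and $k=2$ base cases, and it is the step I would verify most carefully.
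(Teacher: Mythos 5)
Your proposal is correct and matches the paper's proof essentially step for step: the same dummy-candidate gadget prepended to every vote (which preserves pairwise majorities among the originals while each dummy defeats them all), the same tiebreaker placing the dummies and then $c$ first, and the same reductions from \PW for the minimal rank and from the complement of \NW (NP-complete since \NW is coNP-complete for Copeland) for the maximal rank. The only differences are cosmetic — your indexing of the number of dummies and your fixing of their internal order, where the paper leaves it as the partial order $P(D,C)$.
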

\begin{proof}
Let $k \geq 1$. For minimal rank, we use a reduction from $\PW$ under Copeland. Consider the input $\P = (P_1, \dots, P_n)$ and $c$ for $\PW$ over a set $C$ of $m$ candidates. We define $C', \P'$ and $\tie$ as follows. The candidates are $C' = C \cup D$ where $D = \set{d_1, \dots, d_{k-1}}$. The tie breaker is $\tie = O(D, \set{c}, C \setminus \set{c})$. The profile is $\P' = (P_1', \dots, P_n')$ where $P_i'$ is the same as $P_i$, except that the candidates of $D$ are placed at the top positions. Formally, $P_i' \eqdef P_i \cup P(D, C)$. 

Observe that the candidates of $D$ defeat all candidates of $C$ in pairwise election, for every completion $\T'$ of $\P'$. Furthermore, for every completion $\T$ of $\P$ we can easily construct a completion $\T'$ of $\P'$ that satisfies $s(\T, c') = s(\T', c')$ for all $c' \in C$, and vice versa. We conclude that $c$ is a possible winner for $\P$ if and only if $\minrnk {c}{\P'}{\tieprime} < k+1$. For the maximal rank, we use a reduction from $\NW$ under Copeland, which is the same as for the minimal rank. If $c$ is a necessary winner, then $\maxrnk {c}{\P'}{\tieprime} < k+1$; otherwise, $\maxrnk {c}{\P'}{\tieprime} > k$. \end{proof}

\subsection{Bucklin}
Under the Bucklin rule, the score of a candidate $c$ is the smallest number $t$ such that more than half of the voters rank $c$ among the top $t$ candidates. A winner is a candidate with a \e{minimal} Bucklin score. Since we prefer the minimal score rather than the maximal score, we need to modify the definition of the rank: Let $R_\T'$ be the linear order on $C$ that sorts the candidates by their scores in \e{increasing} order and then by $\tie$. The rank of $c$ is the position of $c$ in $R_\T'$, which we denote again by $\rnk c \T \tie$. It is known that $\PW$ is NP-complete and $\NW$ is in polynomial time with respect to Bucklin \cite{DBLP:journals/jair/XiaC11}.  We show that computing the minimal and maximal ranks is hard for the Bucklin rule.

\def\thmbucklinMinMax{ For the Bucklin rule, both $\decision{\minp}{{<}}$ and $\decision{\maxp}{{>}}$ are NP-complete. }
\begin{theorem}
\label{thm:bucklinMinMax}
\thmbucklinMinMax
\end{theorem}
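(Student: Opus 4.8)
The plan is to prove the two NP-hardness claims by separate reductions, while handling membership uniformly. Both problems are in NP because, given a completion $\T$ of $\P$, the Bucklin score of every candidate is computable in polynomial time (for each threshold $t$, count the voters that rank the candidate among their top $t$), and from the scores the order $R'_\T$ and hence each rank is immediate; so we can guess $\T$, compute all scores, and verify $\rnk c\T\tie<k$ as a witness for $\decision{\minp}{<}$ (respectively $\rnk c\T\tie>k$ for $\decision{\maxp}{>}$).

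For $\decision{\minp}{<}$ I would give a one-line reduction from $\PW$ under Bucklin, which is NP-complete~\cite{DBLP:journals/jair/XiaC11}. Given a $\PW$ instance $(\P,c)$ over $C$, output the same profile together with the tiebreaker $\tie = O(\set{c}, C\setminus\set{c})$ and the bound $k=2$. Since the Bucklin rank sorts by increasing score and breaks ties by $\tie$, and $c$ is first in $\tie$, a completion $\T$ has $\rnk c\T\tie = 1$ exactly when $c$ has minimum Bucklin score there, i.e. when $c$ is a (co-)winner. Hence $\minrnk c\P\tie<2$ iff $c$ is a possible winner, which establishes NP-hardness.

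The substantial part is $\decision{\maxp}{>}$, and the guiding observation is that its fixed-$k$ restriction is tractable (Theorem~\ref{thm:bucklinMaxBounded}); therefore the hardness must come from instances in which the target rank $k$ grows with the input. This rules out a padding reduction from $\PW$ (which carries no large parameter) and points to a source with a size parameter, so I would reduce from independent set on regular graphs, which is NP-complete~\cite{DBLP:journals/tcs/GareyJS76} and is the same source used for $\decision{\maxp}{>}$ on positional rules (Theorem~\ref{thm:maxAllRules}). Given a $\Delta$-regular graph $G=(U,E)$, let the candidates be $U \cup \set{c^*} \cup F$ for a set $F$ of filler candidates, with $\tie = O(U, \set{c^*}, F)$. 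Each edge $e=\set{u,w}$ becomes a partial voter that must place exactly one of $u,w$ in a designated high block (its \emph{selection}) and the other below it. A vertex selected by all $\Delta$ of its incident edges can never share an edge with another such vertex, so the fully-selected vertices always form an independent set, and conversely any independent set can be realized as the fully-selected set. Filler voters are then added to calibrate the majority thresholds so that a vertex $u$ attains a ``good'' Bucklin score (one beating $c^*$) precisely when all $\Delta$ of its edges select it, while $c^*$ is placed so that it beats every filler but is beaten by exactly the good vertices. Under this calibration the number of candidates defeating $c^*$ equals the size of the corresponding independent set, so $\maxrnk {c^*}\P\tie>k$ iff $G$ has an independent set of size at least $k$.

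The main obstacle is exactly this calibration: Bucklin scoring is governed by majority thresholds rather than additive scores, so the ``$u$ collects credit from all its edges'' condition of the positional-rule proof must be re-expressed as ``$u$ crosses the majority line in its top-$t$ count at the correct level $t$''. I would give each vertex a baseline top-$t$ count from filler voters equal to (majority) $-\,\Delta$ at the critical threshold, so that $u$ reaches a majority there iff all $\Delta$ incident edges contribute, while simultaneously guaranteeing that (i) a vertex selected by fewer than all of its edges falls strictly below $c^*$, (ii) the low-ranked endpoint of each edge accrues no spurious credit that could let it beat $c^*$, and (iii) every filler candidate stays below $c^*$. Maintaining all three invariants at once, across every threshold $t$, is where the bookkeeping lies; the regularity of $G$ is precisely what allows a single uniform calibration to work for all vertices.
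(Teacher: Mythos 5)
Your NP-membership argument and your treatment of $\decision{\minp}{<}$ coincide with the paper's: the paper also dispatches the minimum in one line, observing that \PW under Bucklin is exactly the special case $\decision{\minp}{{<}2}$, so that part is fine. For $\decision{\maxp}{>}$ you have also chosen the paper's route---a reduction from independent set on regular graphs in which each edge-voter ``selects'' one endpoint, the vertices selected by all of their incident edges are precisely those that defeat $c^*$, and such vertices automatically form an independent set (your meta-observation that hardness must use a growing $k$, since Theorem~\ref{thm:bucklinMaxBounded} makes fixed $k$ tractable, is correct and matches the paper's structure).

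The gap is that for the maximum you never actually build the reduction: everything you call ``calibration'' is stated as a wish list (your invariants (i)--(iii)) rather than exhibited, and that calibration \emph{is} the proof. Concretely, the paper (for $3$-regular $G$ with $n$ vertices and $|E|$ edges) uses $|F|=n-1$ fillers atop every edge voter so the endpoints of $e$ land at positions $n+1,n+2$ just below $c^*$ at position $n$, plus a dummy candidate $d$ in last place, plus $|E|-4$ identical voters $O(U,\set{c^*},F,\set{d})$; crucially, $c^*$ and $d$ are swapped in exactly three edge voters so that the top-$n$ count of $c^*$ is $|E|-3$, just short of a majority of the $2|E|-4$ voters, forcing $s(\T,c^*)=n+1$ in every completion, while each vertex gets baseline top-$(n+1)$ count $|E|-4$, i.e.\ (majority)$\,-\,3$, so its score is $n+1$ iff all three incident edges select it and at least $n+2$ otherwise. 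Your sketch contains no analogue of the dummy $d$ and the swap trick; without such a device, $c^*$ sits just above the endpoints in \emph{all} $|E|$ edge voters, crosses the majority line one threshold too early, gets score $n$, and is then defeated by no vertex at all, so the reduction collapses. Also note that in the paper's construction the fillers end up \emph{defeating} $c^*$ (their score is at most $n-1$), so the correct equivalence is $\beta(G)\geq k$ iff $\maxrnk{c^*}{\P}{\tie}\geq k+|F|+1$; your variant, in which $c^*$ defeats every filler, forces the fillers' majority to be delayed, hence requires at least $|E|$ additional filler voters and a separate fix so that $c^*$ itself still crosses the majority line exactly at the critical threshold. None of this counting appears in your proposal, so as written it is a plan for a proof rather than a proof.
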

\begin{proof}
First, $\decision{\minp}{{<}}$ is NP-complete for Bucklin by a straightforward reduction from $\PW$ (as $\PW$ is the special case of $\decision{\minp}{{<}2}$). For $\decision{\maxp}{{>}}$. We show a reduction from the independent-set problem in 3-regular graphs, as defined in the proof of Theorem~\ref{thm:maxAllRules}. Let $G = (U, E)$ be a 3-regular graph with $U = \set{u_1, \dots, u_n}$. We construct an instance $(C, \P, \tie)$ under Bucklin. The candidates set is $C = U \cup \set{c^*, d} \cup F$ where $F = \set{f_1, \dots, f_{n-1}}$ and the tiebreaker is $\tie = O(F, U, \set{c^*}, \set{d})$. The voting profile $\P = \P^1 \circ \T^2$ is the concatenation  of two parts described next. 

The first part, $\P^1 = \set{P^1_e}_{e \in E}$, contains a voter for every edge $e$. For each edge $e = \set{u,w} \in E$, define $P^1_e = P(F, \set{c^*}, e, U \setminus e, \set{d})$. Then, in three arbitrary voters in $\P^1$, switch between $c^*$ and $d$ (i.e., the profile becomes $P^1_e = P(F, \set{d}, e, U \setminus e, \set{c^*})$). The second part is $\T^2 = (T^2_1, \dots, T^2_{|E|-4})$ where every voter is $T^2_i = O(U, \set{c^*}, F, \set{d})$. Overall, there are $2n+1$ candidates and $2|E|-4$ voters.

Next, we state some observations. Let $\T = \T^1 \circ \T^2$ be a completion of $\P$ where $\T^1 = \set{T^1_e}_{e \in E}$.
\begin{itemize}
    \item In $\T^1$, there are $|E|-3$ voters who place $c^*$ at the $n$th position, and 3 voters who place $c^*$ at the $(2n+1)$th position. In $\T^2$, there are $|E|-4$ voters who place $c^*$ at the $(n+1)$th position. Overall, less than half of the voters rank $c^*$ among the top $n$, and more than half of the voters rank $c^*$ among the top $n+1$. Hence $s(\T, c^*) = n+1$.
    \item For every $f \in F$, there are $|E|$ voters in $\T^1$ who rank $f$ among the top $n-1$ positions, hence $s(\T, f) \leq n-1$.
    \item There are 3 voters who place $d$ at the $n$th positions, and all other $2|E|-7$ voters place $d$ at the $(2n+1)$th position, hence $s(\T, d) = 2n+1$. \bennydone{Why is it an equality?}
    \item For every $v \in V$, there are $|E|-4$ voters in $\T^2$ who rank $v$ among the top $n$. If all three edges of $v$ place $v$ at the $(n+1)$th position, we get $|E|-1$ voters which rank $v$ among the top $n+1$, and $s(\T, v) = n+1$. Otherwise, at most $|E|-2$ voters rank $v$ among the top $n+1$, which implies $s(\T, v) \geq n+2$.
\end{itemize}

Overall, the candidates of $F$ defeat all other candidates, and $d$ is defeated by all other candidates. Every $v \in V$ defeats $c^*$ if and only if all edges of $v$ rank it at the $(n+1)$th position. As in the proof of Theorem~\ref{thm:maxAllRules}, we denote the maximal size of an independent set in $G$ by $\beta(G)$. We now show that for every $k$ it holds that $\beta(G) \geq k$ if and only if $\maxrnk {c^*}\P\tie \geq k + |F| + 1$, which implies NP-completeness for $\decision{\maxp}{>}$.

Assume that $\beta(G) \geq k$, let $B$ be an independent set of size at least $k$ in $G$. Consider a completion  $\T = \set{T^1_e}_{e \in E} \circ \T^2$ of $\P$ as follows. For every $e = \set{u, w} \in E$, recall that only the positions of $u,w$ are not determined in the voter $P^1_e$. If $u, w \notin B$ then complete $P^1_e$ arbitrarily.  If $u \in B$ then in $T^1_e$, $u$ is placed at the $(n+1)$th position and $w$ is placed at the $(n+2)$th position. Finally, if $w \in B$ then in $T^1_e$, $w$ is placed at the $(n+1)$th position and $u$ is placed at the $(n+2)$th position. Note that we cannot have $u, w \in B$ because $B$ is an independent set.

For every $u \in B$, all three edges of $u$ place $u$ at the $(n+1)$th position, hence as we said $s(\T, u) = n+1 = s(\T, c^*)$ and $u$ defeats $c^*$. These are at least $k$ candidates of $U$ which defeat $c^*$, and the candidates of $F$ always defeat $c^*$, therefore $\rnk {c^*}\T\tie \geq k + |F| + 1$ and $\maxrnk {c^*}\P\tie \geq k + |F| + 1$.

Next, assume that $\maxrnk {c^*}\P\tie \geq k + |F| + 1$, there exists a completion $\T = \set{T^1_e}_{e \in E} \circ \T^2$ of $\P$ where $\rnk {c^*}\T\tie \geq k + |F| + 1$. Let $B$ be the candidates of $U$ which defeat $c^*$ in $\T$. We know that the candidates of $F$ always defeat $c^*$ and $d$ never defeats $c^*$, hence $|B| \geq k$. For every $u \in B$ the score is $s(\T, u) = n+1$, otherwise $u$ does not defeat $c^*$.

Assume to the contrary that  $e = \set{u, w} \in E$ for some pair $u, w \in B$. In $T^1_e$, both $u$ and $w$ should be placed at the $(n+1)$th position, that is a contradiction. Hence $B$ is an independent set, which implies $\beta(G) \geq k$.
\end{proof}

For comparing the minimal and maximal ranks to some fixed rank $k$, we show that the complexity is the same as of \PW and \NW under Bucklin. Namely,  $\decision{\minp}{{<}k}$ is NP-complete and $\decision{\maxp}{{>}k}$ can be solved in polynomial time. The following theorem states the hardness of $\decision{\minp}{{<}k}$ under Maximin. The proof is by the same reduction as that of Theorem~\ref{thm:copelandMinMax}.

\def\thmbucklinMinBounded{ For the Bucklin rule, $\decision{\minp}{{<}k}$ is NP-complete for all fixed $k \geq 2$. }
\begin{theorem}
\label{thm:bucklinMinBounded}
\thmbucklinMinBounded
\end{theorem}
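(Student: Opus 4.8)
The plan is to reduce from the possible-winner problem \PW under Bucklin, which is NP-complete~\cite{DBLP:journals/jair/XiaC11}, using exactly the construction from the proof of Theorem~\ref{thm:copelandMinMax}. Given a \PW instance with partial profile $\P = (P_1, \dots, P_n)$, candidate $c$, and candidate set $C$, I set $C' = C \cup D$ with $D = \set{d_1, \dots, d_{k-1}}$, take the tiebreaker $\tie = O(D, \set{c}, C \setminus \set{c})$, and define $\P' = (P_1', \dots, P_n')$ by $P_i' \eqdef P_i \cup P(D, C)$; that is, I prepend the fixed block $d_1 \succ \dots \succ d_{k-1}$ above all original candidates in every vote. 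Membership of $\decision{\minp}{{<}k}$ in NP is immediate, since one can guess a completion and compute the Bucklin scores and ranks in polynomial time.

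The first thing to verify is that the dummy candidates always occupy the top $k-1$ ranks. In every completion each $d_j$ is ranked at position $j$ by all voters, so its Bucklin score is exactly $j \le k-1$; meanwhile every original candidate sits strictly below the first $k-1$ positions in every vote and hence has Bucklin score at least $k$. Consequently, under $\tie$, the candidates of $D$ fill ranks $1, \dots, k-1$ in every completion of $\P'$, and each original candidate has rank at least $k$.

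The crucial step---and the point where Bucklin departs from the Copeland argument, in which the scores were preserved outright---is that here the Bucklin scores of the original candidates are shifted \emph{uniformly}. Because $D$ occupies exactly the first $k-1$ positions in each completed vote, every candidate $c' \in C$ is pushed down by the same offset $k-1$ in all voters simultaneously, so its Bucklin score in a completion $\T'$ of $\P'$ equals its Bucklin score in the induced completion $\T$ of $\P$ plus $k-1$. A common additive shift does not change the order of the candidates of $C$ by Bucklin score, so their relative ranking is identical in $\T$ and $\T'$. Together with the tiebreaker, which favors $c$ over every member of $C \setminus \set{c}$, this yields $\rnk c{\T'}\tie \le k$ precisely when $c$ is a (co-)winner among $C$ in $\T$. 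Hence $c$ is a possible winner for $\P$ under Bucklin if and only if $\minrnk c{\P'}\tie < k+1$, which completes the reduction and gives NP-hardness for every fixed $k \ge 2$.

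The only delicate point is the uniform-shift claim: one must check that the ``more than half'' threshold defining the Bucklin score is insensitive to translating every voter's position of $c'$ by the same constant, and that the tie-breaking order is chosen so that $c$ lands at rank exactly $k$ (rather than strictly below) precisely when it co-wins among $C$. Both follow directly from the construction, since $D$ is placed identically at the top of every vote.
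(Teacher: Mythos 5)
Your proof is correct and follows essentially the same route as the paper: the same reduction from \PW under Bucklin via the construction of Theorem~\ref{thm:copelandMinMax}, with the same key observations that the dummies in $D$ always fill the top $k-1$ ranks and that the Bucklin scores of the original candidates are uniformly shifted by $k-1$, so $c$ is a possible winner iff $\minrnk{c}{\P'}{\tie} < k+1$. The only (inconsequential) discrepancy is that $P(D,C)$ leaves the candidates of $D$ mutually unordered, so each $d\in D$ has Bucklin score at most $k-1$ rather than exactly its index, but your argument goes through verbatim either way.
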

\begin{proof}
Let $k \geq 1$. We use a reduction from $\PW$ under Bucklin, which is the same reduction from Theorem~\ref{thm:copelandMinMax}. Recall that given input $\P$ and $c$ for $\PW$, we define $C', \P'$ and $\tie$. Let $\T'$ be a completion of $\P'$. For every $d \in D$, all voters rank $d$ among the top $k-1$ candidates, hence $s(\T, d) \leq k-1$. For every $c \in C$, the rank of $c$ in each voter is at least $k$, hence  $s(\T, c) \geq k$. Furthermore, for every completion $\T$ of $\P$ we can easily construct a completion $\T'$ of $\P'$ which satisfies $s(\T, c') = s(\T', c') - (k-1)$ for all $c' \in C$, and vice versa. Therefore $c$ is a possible winner for $\P$ if and only if $\minrnk {c}{\P'}{\tieprime} < k+1$.
\end{proof}

In contrast, we show that $\decision{\maxp}{{>}k}$ is solvable in polynomial time for all fixed $k \geq 1$ under the Bucklin rule. The proof follows a strategy similar to the proof of Theorem~\ref{thm:PolyRuleNecMemk}. For a partial profile $\P = (P_1, \dots, P_n)$, a sequence of candidates $S = (c_1, \dots, c_q) \in C^q$ and a sequence of positions $R = (r_1, \dots, r_q) \in [m]^q$, let $\ps(\P, S, R)$ be the set of vectors $(h_1, \dots, h_q) \in \set{0, \dots, n}^q$ for which there exists a completion $\T$ of $\P$ where for every $c_i$ in $S$, there are $h_i$ voters who rank $c_i$ among the top $r_i$ positions. Note that $\ps(\P, S, R) \subseteq \set{0, \dots, n}^q$. When $\P$ consists of a single voter $P$, we write $\pi(P, S, R)$ instead of $\pi(\P, S, R)$. The following lemma shows that we can construct $\ps(\P, S, R)$ in polynomial time for a fixed number $q$, using scheduling with precedence constraints and dynamic programming.

\def\lemmabucklinScheduling{ Let $q$ be a fixed number. Whether $(h_1, \dots, h_q) \in \ps(P,S, R)$ can be determined in polynomial time, given a partial order $P$ over a set of candidates, a sequence $S$ of $q$ candidates, a sequence $R$ of $q$ positions, and numbers $h_1,\dots, h_q$. }
\begin{lemma}
\label{lemma:bucklinScheduling}
\lemmabucklinScheduling  
\end{lemma}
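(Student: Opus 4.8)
The plan is to mirror the proof of Lemma~\ref{lemma:scheduling}, reducing the membership test to the polynomial-time scheduling problem of Garey, Johnson, Simons and Tarjan~\cite{DBLP:journals/siamcomp/GareyJST81}, in which unit-length tasks carry release times, deadlines, and precedence constraints. Since $P$ is a single voter, each coordinate $h_i$ counts how many of the (single) voters rank $c_i$ among its top $r_i$ positions, so necessarily $h_i \in \set{0,1}$; if some $h_i \notin \set{0,1}$ we immediately answer ``no''. Otherwise the question becomes whether $P$ has a linear extension $T$ in which, for every $i$, the candidate $c_i$ sits at a position $\le r_i$ exactly when $h_i = 1$ and at a position $> r_i$ exactly when $h_i = 0$.

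First I would set up the scheduling instance. As in Lemma~\ref{lemma:scheduling}, each candidate becomes a task of unit execution time, the $m$ integer time slots correspond to the $m$ positions of a completion, and the precedence constraints are exactly the relation $P$ (so $x \succ y$ in $P$ forces $x$ to be scheduled, hence ranked, before $y$). The position constraints are encoded through release times and deadlines: a candidate $c_i$ with $h_i = 1$ must land in a slot of $[1, r_i]$, so it gets release time $1$ and deadline $r_i + 1$; a candidate $c_i$ with $h_i = 0$ must land in a slot of $[r_i + 1, m]$, so it gets release time $r_i + 1$ and deadline $m+1$; every candidate not appearing in $S$ is unconstrained, with release time $1$ and deadline $m+1$. (If $S$ contains a repeated candidate, one simply intersects the induced slot intervals before reading off the release time and deadline.)

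Then I would argue the two directions of the equivalence, which become immediate once the correspondence ``slot $=$ position'' is fixed: any feasible schedule, read off as a ranking, is a linear extension of $P$ meeting all the position bounds, and conversely any linear extension realizing the prescribed top-$r_i$ pattern is a feasible schedule. Hence $(h_1, \dots, h_q) \in \ps(P, S, R)$ if and only if the scheduling instance is feasible, and the latter is decidable in polynomial time.

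I expect the only real subtlety to be the bookkeeping for the $h_i = 0$ case, where ``not among the top $r_i$'' must be turned into a release-time lower bound rather than a deadline, dual to the score-based encoding of Lemma~\ref{lemma:scheduling}; everything else is a routine reuse of that lemma's machinery.
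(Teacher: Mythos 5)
Your proposal is correct and follows essentially the same route as the paper's own proof: reject unless every $h_i \in \set{0,1}$, then reduce to the unit-task scheduling problem of Garey, Johnson, Simons and Tarjan with precedence constraints $P$, encoding $h_i = 1$ as a deadline of (the slot after) position $r_i$ and $h_i = 0$ as a release time of $r_i + 1$. Your deadline bookkeeping ($r_i+1$ rather than the paper's $r_i$) is in fact the convention consistent with Lemma~\ref{lemma:scheduling}, and the remark about repeated candidates in $S$ is a harmless extra detail.
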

\begin{proof}
Recall that $h_i$ represents the number of voters that place $c_i$ among the top $r_i$ positions in a completion. Since we have a single voter $P$, if $h_i \notin \set{0,1}$ for some $i \in [q]$ then $(h_1, \dots, h_q) \notin \ps(P,S, R)$. From now we assume that $h_i \in \set{0,1}$ for every $i \in [q]$. We use a reduction to a scheduling problem where tasks have \e{execution times}, \e{release times}, \e{deadlines}, and \e{precedence constraints}, as in the proof of Lemma~\ref{lemma:scheduling}. In the reduction, each candidate $c$ is a task with a unit execution time.

For every $c_i$ in $S$ with $h_i = 1$, we need to place $c_i$ among the top $r_i$ positions, hence the release time is $1$  and the deadline is $r_i$. For every $c_i$ in $S$ with $h_i = 0$, $c_i$ should not be among the top $r_i$ positions, hence the release time is $r_i+1$ and the deadline is $m+1$. For the rest of the candidates, the release time is 1 and the deadline is $m+1$. The precedence constraints are $P$. It holds that $(h_1, \dots, h_q) \in \ps(P,S, R)$ if and only if the tasks can be scheduled according to all the requirements.
\end{proof}

Next, we use Lemma~\ref{lemma:bucklinScheduling} to prove Lemma~\ref{lemma:bucklinDynamic}.

\def\lemmabucklinDynamic{ Let $q$ be a fixed natural number. The set $\ps(\P,S,R)$ can be constructed in polynomial time, given a partial profile $\P$, a sequence $S$ of $q$ candidates and a sequence $R$ of $q$ positions. }
\begin{lemma}
\label{lemma:bucklinDynamic}
\lemmabucklinDynamic
\end{lemma}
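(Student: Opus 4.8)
The plan is to mirror the dynamic-programming argument of Lemma~\ref{lemma:dynamic}, exploiting the fact that here each coordinate of a vector in $\ps(\P,S,R)$ counts \emph{voters} and hence lies in $\set{0,\dots,n}$. This makes the target set polynomially bounded automatically, \emph{without} any assumption on the magnitudes of the scores. First I would compute $\ps(P_i,S,R)$ for each single voter $i \in [n]$: since one voter either ranks $c_j$ among its top $r_j$ positions or does not, every vector in $\ps(P_i,S,R)$ lies in $\set{0,1}^q$, so I can enumerate all $2^q$ candidate vectors and test membership of each using Lemma~\ref{lemma:bucklinScheduling}. As $q$ is fixed, this costs polynomial time per voter.

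Next I would combine the single-voter sets by the same point-wise-sum recurrence as before. The key observation is that whether voter $i{+}1$ places $c_j$ among its top $r_j$ positions is decided independently of the other voters, and the total number of voters doing so is the sum of the per-voter indicators. Hence
$$ \ps((P_1,\dots,P_{i+1}),S,R) = \set{\Vec{u}+\Vec{w} : \Vec{u}\in\ps((P_1,\dots,P_i),S,R),\ \Vec{w}\in\ps(P_{i+1},S,R)}, $$
where $\Vec{u}+\Vec{w}$ is the coordinatewise sum. The inclusion ``$\supseteq$'' holds because a completion of the first $i$ voters witnessing $\Vec{u}$ can be extended by any completion of $P_{i+1}$ witnessing $\Vec{w}$, and these choices do not interfere; the inclusion ``$\subseteq$'' holds by restricting a completion of the first $i{+}1$ voters to its first $i$ voters and to $P_{i+1}$ separately.

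For the complexity, I would note that every intermediate set $\ps((P_1,\dots,P_i),S,R)$ is contained in $\set{0,\dots,n}^q$ and therefore has at most $(n+1)^q$ elements, which is polynomial for fixed $q$. One step of the recurrence costs $O((n+1)^{2q})$ vector additions, and there are $n$ steps, so the whole computation runs in polynomial time. Unlike in Lemma~\ref{lemma:dynamic}, no bound on the scores is required, since the coordinates count voters rather than accumulate score values.

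I expect the only point requiring care to be the justification of the recurrence, namely that the per-voter contributions are genuinely independent and additive. This follows because the ingredient we track, ``$c_j$ is among the top $r_j$ positions of voter $i$,'' depends only on the completion of the single partial order $P_i$, so the per-voter indicator vectors are freely and independently realizable and their sum ranges over exactly the set on the right-hand side.
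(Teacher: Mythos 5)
Your proposal is correct and follows essentially the same route as the paper: compute each single-voter set $\ps(P_i,S,R)$ via Lemma~\ref{lemma:bucklinScheduling} and then build $\ps(\P,S,R)$ by the point-wise-sum dynamic-programming recurrence, with the set sizes bounded by $(n+1)^q$. The extra details you supply (enumerating the $2^q$ single-voter vectors, verifying both inclusions of the recurrence, and noting that no polynomial-scores assumption is needed since coordinates count voters) are all consistent with, and merely elaborate on, the paper's argument.
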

\begin{proof}
First, for every $i \in [n]$, construct $\ps (P_i, S, R)$ by applying Lemma~\ref{lemma:bucklinScheduling}. Then, given  $\ps ((P_1, \dots, P_i), S, R)$, observe that
$$ \ps ((P_1, \dots, P_{i+1}), S, R) = \set{ \Vec{u} + \Vec{w} : \Vec{u} \in \ps ((P_1, \dots, P_i), S, R), \Vec{w} \in \ps (P_{i+1}, S, R)} $$
where $\Vec{u} + \Vec{w}$ is a point-wise sum of the two vectors $(\Vec{u} + \Vec{w})(j) = \Vec{u}(j)+ \Vec{w}(j)$. Hence, $\ps(\P,S,R)$ can be constructed in polynomial time.
\end{proof}

Finally, we show that we can solve $\decision{\maxp}{{>} k}$ with respect to Bucklin in polynomial time, using Lemma~\ref{lemma:bucklinDynamic}.

\begin{theorem} 
\label{thm:bucklinMaxBounded}
For all fixed $k \geq 1$, $\decision{\maxp}{{>}k}$ is solvable in polynomial time under the Bucklin rule.
\end{theorem}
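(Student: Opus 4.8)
The plan is to mirror the strategy of Theorem~\ref{thm:PolyRuleNecMemk}, replacing the score-profile tool $\ps(\P,S)$ with its Bucklin analogue $\ps(\P,S,R)$ from Lemma~\ref{lemma:bucklinDynamic}. Since $\rnk c\T\tie>k$ holds exactly when at least $k$ candidates defeat $c$ in $\T$, and since any completion with at least $k$ defeaters of $c$ contains a set of exactly $k$ such candidates, it suffices to iterate over every subset $\set{c_1,\dots,c_k}\subseteq C\setminus\set{c}$ and decide whether some single completion makes all of $c_1,\dots,c_k$ defeat $c$. There are only $O(m^k)$ such subsets, so this outer loop is polynomial for fixed $k$.

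The key observation is that, once the Bucklin score of $c$ is fixed to a value $b$, each defeat condition turns into a pure majority-threshold condition on how many voters rank a candidate among its top positions, which is precisely the information tracked by $\ps(\P,S,R)$. Recall that $c_i$ defeats $c$ iff $s(\T,c_i)<s(\T,c)$, or $s(\T,c_i)=s(\T,c)$ and $c_i\tie c$. Hence, for a fixed target $b=s(\T,c)$, a candidate $c_i$ with $c_i\tie c$ defeats $c$ iff more than half of the voters rank $c_i$ among the top $b$ positions, while a candidate $c_i$ with $c\tie c_i$ defeats $c$ iff more than half rank $c_i$ among the top $b-1$ positions. The requirement that $c$ itself attain Bucklin score exactly $b$ states that more than half of the voters rank $c$ among the top $b$ positions but at most half rank it among the top $b-1$.

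So for each subset and each candidate value $b\in\set{1,\dots,m}$, I would form the sequence $S=(c,c,c_1,\dots,c_k)$ together with the position sequence $R=(b,\,b-1,\,r_1,\dots,r_k)$, where $r_i=b$ if $c_i\tie c$ and $r_i=b-1$ if $c\tie c_i$, and then compute $\ps(\P,S,R)$ in polynomial time by Lemma~\ref{lemma:bucklinDynamic} (here $q=k+2$ is fixed). The instance witnesses $\maxrnk c\P\tie>k$ iff some vector $(h_0,h_0',h_1,\dots,h_k)\in\ps(\P,S,R)$ satisfies $h_0>n/2$, $h_0'\le n/2$, and $h_i>n/2$ for every $i\in[k]$: the first two conditions pin the Bucklin score of $c$ to $b$, and the rest enforce that each $c_i$ defeats $c$. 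Scanning the polynomially many vectors of $\ps(\P,S,R)$ for such a witness is polynomial, and the algorithm answers ``yes'' iff some (subset, $b$) pair produces one.

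The verification of correctness is routine once this threshold reformulation is in place; the only point requiring care is the boundary case $b=1$, where ``top $b-1$'' is the empty prefix. There the constraint $h_0'\le n/2$ is vacuously satisfied, whereas any $c_i$ with $c\tie c_i$ would need more than half the voters to rank it among the top $0$ positions, which is impossible---correctly reflecting that such a candidate cannot have Bucklin score strictly below the minimal value $1$. I expect the main conceptual step to be exactly this reduction of the relative ``defeat'' comparison to absolute majority-threshold counts by guessing $c$'s score $b$; the remainder follows the template of Theorem~\ref{thm:PolyRuleNecMemk} together with the dynamic-programming construction of Lemma~\ref{lemma:bucklinDynamic}.
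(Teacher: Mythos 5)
Your proposal is correct and follows essentially the same approach as the paper's proof: enumerate all $k$-subsets of potential defeaters, guess a threshold position, translate each defeat condition (respecting the tiebreaker) into majority-threshold counts over top-position prefixes, and decide feasibility via $\ps(\P,S,R)$ from Lemma~\ref{lemma:bucklinDynamic}. The only cosmetic difference is that you pin $c$'s Bucklin score to exactly $b$ using two entries for $c$ in $S$ (so $q=k+2$), whereas the paper enforces only the one-sided bound $s(\T,c)\geq t$ with a single entry for $c$ ($q=k+1$); both variants are sound and complete.
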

\begin{proof}
To determine whether $\maxrnk c\P\tie > k$, we seek $k$ candidates that defeat $c$ in some completion. For that,
we consider each subset
$\set{c_1, \dots, c_k} \subseteq C \setminus \set{c}$ and determine
whether these $k+1$ candidates can be placed in positions so that $c_1, \dots, c_k$ all defeat $c$.

There are two cases where $c_i$ defeats $c$ in a completion $\T$. If $c_i \tie c$ then we need
$s(\T, c_i) \leq s(\T, c)$, that is, there should be $t \in [m]$ such that $c$ is among the top $t-1$ votes for at most $n/2$ votes (hence $s(\T, c) \geq t$) and $c_i$ is among the top $t$ for more than $n/2$ votes (hence $s(\T, c_i) \leq t$). Otherwise, if $c \tie c_i$, then we need $s(\T, c_i) < s(\T, c)$ and we change the condition for $c_i$: it should be among the top $t-1$ for more than $n/2$ votes (hence $s(\T, c_i) \leq t-1 < s(\T, c)$).

Given $c_1, \dots, c_k \in C \setminus \set{c}$ and $t \in [m]$, define a sequence $S = (c_1, \dots, c_k, c)$  of candidates
and $R = (r_1, \dots, r_k, t)$ of positions: if $c_i \tie c$ then $r_i = t$, otherwise $r_i = t-1$. Construct $\ps(\P, S, R)$ via
Lemma~\ref{lemma:bucklinDynamic}, and test whether there is $(h_1, \dots, h_{k+1}) \in \ps(\P, S, R)$ such that $h_{k+1} \leq n/2$ and $h_i > n/2$ for every $i \in [k]$.
\end{proof}

\subsection{Maximin}
Let $N_\T(c, c')$ be the number of votes that rank $c$ ahead of $c'$ in the profile $\T$. The score of $c$ is $s(\T, c) = \min \set{N_\T(c, c') : c' \in C \setminus \set{c}}$.  A winner is a candidate with a maximal score. Xia and Conitzer~\cite{DBLP:journals/jair/XiaC11} established that under Maximin, $\PW$ is NP-complete; we generalize it and show that $\decision{\minp}{{<k}}$ is NP-complete for every $k>1$.  They also show that $\NW$ is tractable, and their polynomial-time algorithm can be easily adjusted to solve $\decision{\maxp}{{>1}}$ by accommodating tie-breaking. In contrast, we show  that $\decision{\maxp}{{>k}}$ is NP-complete for every $k > 1$. 

For the hardness results, we use the following technique. For a profile $\T$ and a pair of candidates $c, c'$ define the \e{pairwise score difference} $D_\T(c,c') = N_\T(c, c') - N_\T(c', c)$. Note that $D_\T(c,c') = - D_\T(c',c)$ and $D_\T(c,c') = 2N_\T(c, c') - n$, so we can define the score under Maximin to be 
$s(\T, c) = \min \set{D_\T(c, c') : c' \in C \setminus \set{c}}$.
The following lemma states that we can change the values of $D_\T$ to any other values, as long as the parity of the values is unchanged.

\begin{lemma}[Main theorem in~\cite{article:Mcgarvey}]
  \label{lemma:pairwiseScores}
Let $\T$ be a profile and $F \colon C \times C \rightarrow \integer$ be a skew-symmetric function (i.e., $F(c_1, c_2) = -F(c_2, c_1)$) such that for all pairs $c, c' \in C$ of candidates, $F(c,c') - D_\T(c, c')$ is even. There exists a profile $\T'$ such that $D_{\T \circ \T'} = F$ and $|\T'| \leq \frac{1}{2} \sum_{c, c'} (|F(c, c') - D_\T(c, c')| + 1)$.
\end{lemma}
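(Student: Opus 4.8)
The plan is to reduce the claim to the classical realizability construction of McGarvey and then check that it meets the stated size bound. First I would note that pairwise counts add under concatenation: $N_{\T \circ \T'}(c,c') = N_{\T}(c,c') + N_{\T'}(c,c')$ for any profiles $\T, \T'$, and hence $D_{\T \circ \T'}(c,c') = D_{\T}(c,c') + D_{\T'}(c,c')$. Setting $G(c,c') \eqdef F(c,c') - D_{\T}(c,c')$, the requirement $D_{\T\circ\T'} = F$ becomes $D_{\T'} = G$. By the hypotheses, $G$ is skew-symmetric and every value $G(c,c')$ is even, so it suffices to realize an arbitrary such $G$ as the margin function of a profile $\T'$ while controlling the number of voters.

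The key ingredient is a two-voter gadget that raises one chosen margin by exactly $2$ and leaves every other margin unchanged. Fix an ordered pair $(c,c')$, let $\rho$ be any linear order of the remaining candidates $C \setminus \set{c,c'}$, and let $\rho^R$ be its reverse. Take the two voters $(c,c') \circ \rho$ and $\rho^R \circ (c,c')$. I would verify the effect by a three-way case split over pairs: both voters rank $c$ above $c'$, so the margin $D(c,c')$ gains $+2$; for a pair of candidates both drawn from the rest, the two voters order them oppositely (one by $\rho$, the other by $\rho^R$); and for a pair consisting of one of $c,c'$ and one rest candidate, voter one puts the $\set{c,c'}$-element first while voter two puts the rest candidate first. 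In the latter two cases the two voters cancel, so the net change to every margin other than $D(c,c')$ is $0$.

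Given the gadget, I would assemble $\T'$ as follows. For each unordered pair $\set{c,c'}$, orient it so that $G(c,c') \geq 0$ and append $G(c,c')/2$ copies of the gadget for $(c,c')$; this count is an integer because $G$ is even. Since each gadget touches only its own margin, concatenating all the gadgets produces a profile with $D_{\T'} = G$ on every pair, as needed. Summing over the unordered pairs with the chosen orientation, the total number of voters is $\sum_{\{c,c'\}} \abs{G(c,c')}$, which equals $\frac{1}{2}\sum_{c,c'}\abs{F(c,c') - D_{\T}(c,c')}$ by skew-symmetry, and is therefore at most $\frac{1}{2}\sum_{c,c'}\rpar{\abs{F(c,c') - D_{\T}(c,c')} + 1}$, the claimed bound.

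The construction itself is elementary, so I do not expect the counting to be difficult. The one point that genuinely needs care is the gadget's non-interference claim, namely the exact cancellation on every margin except $D(c,c')$; this is precisely where the reversed suffix $\rho^R$ is essential, and confirming it across all three types of pairs is the main (if routine) obstacle.
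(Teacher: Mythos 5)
Your proposal is correct: the two-voter gadget $(c,c') \circ \rho$ and $\rho^R \circ (c,c')$, its exact cancellation on all non-target pairs, and the counting via skew-symmetry (reading the bound as a sum over ordered pairs) all check out, and this is precisely the classical McGarvey construction. The paper itself gives no proof of this lemma---it simply cites McGarvey's theorem---so your argument reconstructs exactly the proof of the cited source rather than taking a different route.
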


If the values $|F(c, c') - D_\T(c, c')|$ are polynomial in $n$ and $m$, then $\T'$ of Lemma~\ref{lemma:pairwiseScores} can be constructed in polynomial time. This is used for establishing the following results.

\def\thmmaximinMinBounded{ Under Maximin, $\decision{\minp}{{<}k}$ is NP-complete for all fixed $k \geq 2$. }
\begin{theorem}
\label{thm:maximinMinBounded}
\thmmaximinMinBounded
\end{theorem}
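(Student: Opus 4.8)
\emph{Membership and base case.} Membership in NP is immediate: guess a completion $\T$, compute the Maximin scores and $\rnk c\T\tie$ in polynomial time, and test whether it is below $k$. For the base case the problem $\decision{\minp}{{<}2}$ asks whether $\minrnk c\P\tie=1$ for a tiebreaker that favours $c$, which is exactly the possible co-winner problem \PW (cf.\ the observations following the definition of $\rnks c\P\tie$); this is NP-complete for Maximin~\cite{DBLP:journals/jair/XiaC11}. The plan for general fixed $k$ is to reduce \PW under Maximin to $\decision{\minp}{{<}(k+1)}$ by introducing $k-1$ auxiliary candidates $D=\set{d_1,\dots,d_{k-1}}$ that are forced to sit just above $c$ in \emph{every} completion, while leaving the comparison between $c$ and the original candidates intact. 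Since $k\ge 1$ is then arbitrary, this gives NP-completeness for every fixed threshold $\ge 2$.

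\emph{Construction.} Given a \PW instance $(\P,c)$, I would first recast it, using Lemma~\ref{lemma:pairwiseScores}, into a \emph{normalized} instance in which every candidate other than $c$ has a completion-independent Maximin score, all these fixed scores are $\le 0$ with maximum exactly $0$, and $c$ is a possible winner iff some completion gives $c$ a nonnegative Maximin score. This is possible because the partial votes of the \PW reduction can be taken to leave undetermined only the pairwise comparisons that involve $c$; all remaining pairwise margins are then pinned to fixed even values by Lemma~\ref{lemma:pairwiseScores}, so that exactly one candidate's score depends on the completion. I would then add the auxiliaries and, again via Lemma~\ref{lemma:pairwiseScores}, force every $d_i$ to \emph{tie} every other candidate (including the other auxiliaries) in the pairwise majority contest, and take the tiebreaker $\tie=O(D,\set{c},C\setminus\set{c})$.

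\emph{Correctness.} Because each $d_i$ ties everyone, its Maximin score is $0$ in every completion, and the score of every non-auxiliary candidate $x$ becomes $\min\set{s_{\mathrm{old}}(x),0}$; thus the auxiliaries cap all scores at $0$ but, since the normalized scores are already $\le 0$, lower none of them. Hence $c$ attains final score $0$ iff it attains a nonnegative score iff it is a possible winner. When $c$ is a possible winner, in the witnessing completion the only candidates ranked above $c$ are the $k-1$ auxiliaries (placed above $c$ by $\tie$), while every fixed candidate of score $0$ is ranked below $c$ by $\tie$; hence $\rnk c\T\tie=k<k+1$. When $c$ is not a possible winner its score is strictly negative in every completion, so the $k-1$ auxiliaries together with some fixed candidate of score $0$ (one exists, as the fixed maximum is $0$) all beat $c$, giving $\rnk c\T\tie\ge k+1$. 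Therefore $\minrnk c\P\tie<k+1$ iff $c$ is a possible winner, which proves $\decision{\minp}{{<}(k+1)}$ NP-complete for every $k\ge 1$, i.e., $\decision{\minp}{{<}k}$ is NP-complete for all fixed $k\ge 2$.

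\emph{Main obstacle.} The crux is that Maximin scores a candidate by the \emph{minimum} of its pairwise margins, so the padding trick used for Copeland and Bucklin---placing dominating candidates above $c$---fails here: a candidate that out-scores everyone must beat everyone, and through the minimum this collapses the other candidates' scores and destroys the correspondence with \PW. The resolution is to use \emph{tying} auxiliaries (score $0$) rather than dominating ones, combined with the normalization that turns ``$c$ is a possible winner'' into ``$c$ can reach the fixed threshold $0$'', so that a single uniform cap at $0$ faithfully encodes \PW. The technical points to discharge are the parity requirement of Lemma~\ref{lemma:pairwiseScores} (absorbed by appending a constant number of mutually cancelling voters) and the isolation of all completion-dependence to the comparisons involving $c$, which is precisely what keeps the auxiliaries and the remaining candidates at their intended fixed scores.
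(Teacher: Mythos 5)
Your high-level idea matches the paper's: because Maximin scores a candidate by a \emph{minimum} of pairwise margins, dominating auxiliaries would collapse the correspondence with \PW, so one must use (near-)tying auxiliaries; the paper likewise pads with $k-1$ candidates whose margins against everyone lie in $\set{-1,0,1}$. The genuine gap is in your normalization step. You claim that, since the Xia--Conitzer \PW votes can be taken to leave undetermined only comparisons involving $c$, Lemma~\ref{lemma:pairwiseScores} lets you pin the remaining margins ``so that exactly one candidate's score depends on the completion.'' That inference fails: for $x \neq c$ the Maximin score is $\min\bigl(\min_{y \neq c} D_\T(x,y),\, D_\T(x,c)\bigr)$, and the second term still varies with the completion, so $x$'s score remains completion-dependent unless its fixed internal margin already lies below the smallest achievable $D_\T(x,c)$. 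McGarvey-type padding cannot create that property while preserving the instance: pushing the internal margins of every $x\neq c$ far down (say, along a heavy cycle) does fix their scores, but it also lowers the threshold $c$ must reach and thereby changes the answer to \PW; leaving them alone preserves the answer but leaves the scores completion-dependent. So your reduction rests on an unproved and nontrivially strong structural assumption about the \PW instance --- in effect you would need to re-derive \PW-hardness for Maximin for instances that already come in your normalized form, and that is precisely the work the proof is supposed to do.

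By contrast, the paper needs only one, much weaker, property of the Xia--Conitzer instances: $s(\T,c^*)\le -2$ in every completion. Its auxiliaries have all margins in $\set{-1,0,1}$, hence score at least $-1$, so they always defeat $c^*$; and it never needs the other candidates' scores to be completion-independent --- in the forward direction those scores are at most $-2$ because $c^*$ wins, and in the backward direction a rank of at most $k$ forces $c^*$ to defeat them, which again caps them at $-2$, so adding or removing the auxiliaries leaves them untouched. If you want to rescue your variant, the cleanest fix is to drop the ``fixed scores, threshold $0$'' normalization and run your tying-auxiliary construction against the $-2$ versus $-1$ score gap exactly as the paper does.
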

\begin{proof}
Let $k \geq 1$. We show a reduction from \PW to $\decision{\minp}{{<}k+1}$ under Maximin. Let $\P = (P_1, \dots, P_n)$ and $c^*$ be an input for \PW over a set $C$ of $m$ candidates. By the proof of Xia and Conitzer~\cite{DBLP:journals/jair/XiaC11} that \PW is hard for Maximin, we can assume that for every completion $\T$ of $\P$ the score of $c^*$ satisfies $s(\T, c^*) \leq -2$. As in the proof of Theorem~\ref{thm:PolyRulePosMemk}, the idea is to add $k-1$ new candidates and modify the voters so that the new candidates are always the top $k-1$ candidates, and the score of every original candidate is increased by the same amount.

We define $C', \P'$ and $\tie$ as follows. The candidate set is $C' = C \cup D$ where $D = \set{d_1, \dots, d_{k-1}}$ and the tiebreaker is $\tie = O(D, \set{c^*}, C \setminus \set{c^*})$. The profile $\P' = \P_1 \circ \T_2$ is the concatenation of two parts. The first part is $\P_1 = (P_1', \dots, P'_n)$ where $P_i'$ is the same as $P_i$, except that the candidates of $D$ are placed at the bottom positions. Formally, $P_i' \eqdef P_i \cup P(C, d_1, \dots, d_{k-1})$. Observe that for every $c \in C$ and $d \in D$, the pairwise score difference $D_{\T_1}(c, d)$ is the same in every completions $\T_1$ of $\P_1$. The same holds for $D_{\T_1}(d, d')$ on all $d, d' \in D$.

The second part, $\T_2$, is the complete profile that exists due to Lemma~\ref{lemma:pairwiseScores} such that for every completion $\T' = \T_1 \circ \T_2$ of $\P'$, the pairwise scores differences satisfy: \begin{itemize}
    \item $D_{\T'}(d, c) \in \set{-1, 0, 1}$ for all $d \in D$ and $c \in C' \setminus \set{d}$;
    \item $D_{\T'}(c,c')  = D_{\T_1}(c,c')$  for all $c, c' \in C$.
\end{itemize}

We show that $c^*$ is a possible winner of $\P$ if and only if $\minrnk c {\P'} \tie < k+1$. Let $\T = (T_1, \dots, T_n)$ be a completion of $\P$ where $c^*$ is a winner, for every $c \in C$ we must have $s(\T, c) \leq s(\T, c^*) \leq -2$. Define a completion $\T' = \T_1 \circ \T_2$, $\T_1 = (T_1', \dots, T_n')$ of $\P'$ where $T_i' = T_i \circ (d_1, \dots, d_{k-1})$. For every $d \in D$ the score satisfies $s(\T', d) \geq -1$ by the definition of $\T_2$. For every $c \in C$, for every $d \in D$ we have $D_{\T'}(c, d) \geq -1$ and for every $c' \in C \setminus \set{c}$ we have $D_{\T'}(c,c')  = D_{\T_1}(c,c') = D_{\T}(c,c')$. Since $s(\T, c) \leq -2$, we can deduce that
$$ s(\T', c) = \min \set{D_{\T'}(c, c') : c' \in C' \setminus \set{c}} = \min \set{D_{\T}(c, c') : c' \in C \setminus \set{c}} = s(\T, c)\,. $$
Overall, for every $c \in C$ we have $s(\T', c) = s(\T, c)$ and $s(\T', c) < s(\T',d )$ for every $d \in D$. By the definition of $\tie$, we get that only the candidates of $D$ defeat $c^*$ in $\T'$, therefore $\rnk {c^*} {\T'} \tie = k$.

Conversely, let $\T' = \T_1 \circ \T_2$ be a completion of $\P'$ where $\rnk {c^*} {\T'} \tie \leq k$, we know that $s(\T', c^*) \leq -2$ and $s(\T', d) \geq -1$ for every $d \in D$. Since the $k-1$ candidates of $D$ defeat $c^*$ in $\T'$, we can deduce that $c^*$ defeats all candidates of $C$ in $\T'$. Define a completion $\T$ of $\P$ be removing the candidates of $D$ from the voters of $\T_1$. For every $c \in C$ we have $s(\T', c) \leq -2$ and $D_{\T'}(c, d) \geq -1$ for every $d \in D$, hence
$$ s(\T, c) = \min \set{D_{\T}(c, c') : c' \in C \setminus \set{c}} = \min \set{D_{\T'}(c, c') : c' \in C' \setminus \set{c}} = s(\T', c)\,. $$
We can deduce that $c^*$ is a winner in $\T$.
\end{proof}

\def\thmmaximinMaxBounded{
Under Maximin, $\decision{\maxp}{{>}k}$ is
solvable in polynomial time for $k=1$ and is NP-complete for all $k>1$. }
\begin{theorem}
\label{thm:maximinMaxBounded}
\thmmaximinMaxBounded
\end{theorem}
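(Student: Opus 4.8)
The plan is to split on the two regimes, exploiting that for Maximin a question about a \emph{single} candidate's score is tractable whereas questions that \emph{coordinate} several candidates are not. For $k=1$, observe that $\maxrnk c\P\tie>1$ holds iff some completion ranks $c$ below at least one other candidate, i.e.\ iff $c$ is not a necessary winner under $\tie$. For a fixed competitor $c'$, deciding whether some completion makes $c'$ defeat $c$ is tractable: the completion that, vote by vote, pushes $c'$ as high and $c$ as low as the partial order permits simultaneously maximizes every $N_\T(c',x)$ and minimizes every $N_\T(c,y)$, hence maximizes $s(\T,c')$ and minimizes $s(\T,c)$ at once. Thus $c'$ can outrank $c$ in some completion iff it does so in this one greedy completion (with ties broken by $\tie$), and enumerating over the $m-1$ choices of $c'$ decides $\decision{\maxp}{{>}1}$ in polynomial time. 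This is precisely the tie-break-aware variant of the Xia--Conitzer \NW algorithm~\cite{DBLP:journals/jair/XiaC11}.

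For $k>1$, membership in NP is immediate: guess a completion, compute all Maximin scores and the induced order in polynomial time, and check that at least $k$ candidates defeat $c$. For hardness I would reduce from \PW under Maximin (NP-complete by Xia and Conitzer~\cite{DBLP:journals/jair/XiaC11}), reusing the pairwise-difference machinery of Lemma~\ref{lemma:pairwiseScores} in the manner of Theorem~\ref{thm:maximinMinBounded}. The guiding idea is dual to that theorem: rather than forcing the \PW target up to rank $k$ by means of $k-1$ always-winning dummies, I want to force $k$ candidates \emph{above} a fresh target $c$ exactly when the \PW instance is positive. Concretely, given a \PW instance $(\P_0,p)$ over $C_0$, I would introduce $c$ together with $k-1$ auxiliary candidates and append a McGarvey profile $\T_2$ (from Lemma~\ref{lemma:pairwiseScores}) whose sole purpose is to pin, via completion-independent pairwise differences, the score of $c$ to a fixed threshold $\theta$ and to clamp every candidate of $C_0\setminus\set{p}$ to score $\theta$ as well, so that $s(\T,p)$ is the only quantity the completion of $\P_0$ can move. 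Calibrating $\theta$ (and the parity corrections required by Lemma~\ref{lemma:pairwiseScores}) so that $p$ and the $k-1$ auxiliaries can all reach score $>\theta$ in a common completion precisely when $s(\T,p)\ge s(\T,a_i)$ for every $a_i\in C_0$ would give $\maxrnk c\P\tie>k$ iff $p$ is a possible winner.

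The main obstacle is the coordination, not the arithmetic. Because any single defeat of $c$ is trivially realizable (by the greedy completion used in the $k=1$ analysis above), the reduction must guarantee that no candidate can defeat $c$ cheaply and independently of the \PW constraint; the whole difficulty of the instance has to be funneled into the \emph{simultaneity} of the $k$ defeats. The delicate step is therefore to keep the clamping of the competitors' scores completion-independent — so that none of them drifts above $\theta$ for free — while still letting the completion of $\P_0$ govern $p$'s score, i.e.\ cleanly separating the McGarvey-fixed part from the \PW-encoding part so that the only route to $k$ defeaters of $c$ is the global balancing act that certifies $p$ as a possible winner. I expect the bulk of the work to be in verifying that this separation survives the parity constraints of Lemma~\ref{lemma:pairwiseScores} and that no spurious completion produces $k$ defeaters while violating the intended correspondence.
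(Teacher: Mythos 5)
Your plan for the NP-hardness part ($k>1$) --- the heart of the theorem --- has a structural flaw, and it is not the verification work you anticipate. Under Maximin, the score of $p$ is $\min_{x}D_\T(p,x)$: it depends only on pairwise differences \emph{involving $p$}, and appending a McGarvey profile (Lemma~\ref{lemma:pairwiseScores}) only adds completion-independent constants to each difference. So once $s(\T,c)$ is pinned to a constant $\theta$, the event ``$p$ defeats $c$'' in your instance is a conjunction of lower bounds on the differences $D(p,\cdot)$, whereas the condition you want to encode --- ``$p$ is a possible winner of $(\P_0,p)$'' --- also depends on differences \emph{not} involving $p$ (they determine the rivals' completion-dependent minima $s(\T,a_i)$). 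Two completions of $\P_0$ can agree on every difference involving $p$ yet disagree on whether $p$ wins, so no calibration of $\theta$ can make the two conditions equivalent. Worse, if you genuinely arrange that $s(\T,p)$ is ``the only quantity the completion of $\P_0$ can move,'' then the auxiliaries' scores are constants (each either always or never exceeds $\theta$), and your instance degenerates to ``can $p$ reach score at least $\theta$ in some completion?'' That problem is polynomial: placing $p$ as high as each partial order permits maximizes every $N_\T(p,x)$ simultaneously, hence maximizes $\min_x N_\T(p,x)$. Thus the image of your reduction lies in a tractable class and cannot witness NP-hardness. This is exactly why the paper does not reduce from \PW at all: it reduces from exact cover by 3-sets, designing $\P_1$ so that the completion's free choices move \emph{several} scores in conflicting directions --- $d_k$ must be raised above every $u\in U$ at least once in order to defeat $c^*$, but each raising vote lowers $D(d_i,d_k)$ by $2$ for all $i<k$, so all of $D$ defeats $c^*$ iff the raised votes form an exact cover. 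Your ``main obstacle'' paragraph correctly locates the hardness in the simultaneity of the $k$ defeats, but the clamping you propose removes precisely that simultaneity.

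The $k=1$ part also contains a false step, although the conclusion survives. You claim a single completion can, vote by vote, simultaneously maximize every $N_\T(c',x)$ and minimize every $N_\T(c,y)$. This fails whenever some vote forces $c\succ c'$: for $z$ incomparable to both, maximizing $N(c',z)$ requires $c'\succ z$, hence $c\succ z$ by transitivity, contradicting the minimization of $N(c,z)$, which requires $z\succ c$. Concretely, for the one-vote profile with partial order $\set{c\succ c'}$ over $\set{c,c',z}$, the completion $(c,c',z)$ yields $s(c')=0$, $s(c)=1$, while $(z,c,c')$ yields $s(c')=s(c)=0$; no completion is optimal for both objectives. Hence your greedy completion does not exist, and it is not ``precisely'' the Xia--Conitzer procedure: their \NW algorithm for Maximin additionally fixes a witness candidate $e$ for $c$'s minimum and uses a completion that is dominant per vote for the fixed pair $(c',e)$. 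Tractability for $k=1$ can indeed be discharged by citing and adjusting their algorithm --- which is all the paper does --- but your justification of it is incorrect as written.
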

\begin{proof}
As said earlier, tractability for $k=1$ is obtained by adjusting the $\NW$ algorithm of Xia and Conitzer~\cite{DBLP:journals/jair/XiaC11}. For $k>1$, we show a reduction from \e{exact cover by-3sets} (X3C): given a vertex set $U = \set{u_1, \dots, u_{3q}}$ and a collection $E = \set{e_1, \dots, e_m}$ of 3-element subsets of $U$, can we cover all the elements of $U$ using $q$ pairwise-disjoint sets from $E$? This problem is known to be NP-complete~\cite{DBLP:books/fm/GareyJ79}.

Given $U$ and $E$, we construct an instance $(C, \P, \tie)$ under Maximin. The candidate set is $C = U \cup \set{c^*, w} \cup D$ where $D = \set{d_1, \dots, d_k}$, and the tiebreaker is $\tie = O(D, \set{c^*}, U \cup \set{w})$. The voting profile $\P = \P_1 \circ \T_2$ is the concatenation of two parts that we describe next.

The first part $\P_1 = \set{P_e}_{e \in E}$ contains a voter for every set in $E$. For every $e \in E$, define a complete order 
$T_e = O(w, c^*, U \setminus e, e, d_1, \dots, d_k) $.
The partial order $P_e$ is obtained from $T_e$ by removing the relations in $(e \cup \set{d_1, \dots, d_{k-1}}) \times \set{d_k}$. Denote $\T_1 = \set{T_e}_{e \in E}$. The idea is that ranking $d_k$ higher than the candidates of $e$ indicates that $e$ is in the cover, and ranking $d_k$ in the last position indicates that $e$ is not in the cover. The second part $\T_2$ is the profile that exists due to Lemma~\ref{lemma:pairwiseScores} such that the pairwise scores differences of $\T = \T_1 \circ \T_2$ satisfy:
\begin{itemize}
    \item $D_\T(w, c^*) = m$, $D_\T(w, d_1) = -m-2$, and $D_\T(w, u) = m+2$  for all $u \in U$.
    \item $D_\T(d_k, d_i) = 2q-m$ for $i < k$, and $D_\T(d_k, u) = -m-2$ for all $u \in U$.
    \item $D_\T(c_1, c_2) \in \set{-1, 0,1}$ for every other pair $c_1, c_2 \in C$.
\end{itemize}

Note that $\T^2$ can be constructed in $\poly(m, q)$ time. Next, we state some observations regarding the profile. Let $\T' = \T_1' \circ \T_2$ be a completion of $\P$. For $c^*$ we have $D_\T(c^*, w) = -m$ and $D_\T(c^*, c) \geq -1$ for every other candidate $c$, hence $s(\T, c^*) = -m$. For every $u \in U$, we have $D_\T(u, w) = -m-2$ hence $s(\T, u) \leq -m-2$, and similarly $s(\T, w) \leq -m-2$. We can deduce that $c^*$ always defeats the candidates of $U \cup \set{w}$.  We show that there is an exact cover if and only if $\maxrnk {c^*}\P\tie > k$, which completes the proof of hardness for $\decision{\maxp}{{>}k}$.

Let $B \subseteq E$ be an exact cover, in particular $|B| = q$. Define a completion
$\T' = \T_1' \circ \T_2$, where $\T_1' = \set{T'_e}_{e \in E}$, as follows. For every $e \in E$, if $e \in B$ then define
\[ T'_e = O(w, c^*, U \setminus e, d_k, e, d_1, \dots, d_{k-1}) \]
(i.e., add $d_k \succ (e \cup \set{d_1, \dots, d_{k-1}})$ to $P_e$), otherwise $T'_e = T_e$. We know that $s(\T, c^*) = -m$ and for every $u \in U$, $s(\T, u) = s(\T, w) = -m-2$, so we need compute the scores of the candidates of $D$. For every $i < k$, observe that $d_k$ is raised higher than $d_i$ in the $q$ voters of $\T_1$ which correspond to the edges of $B$, hence $D_{\T'}(d_i, d_k) = D_\T(d_i, d_k) -2q = -m$. For every other candidate $c$ we have $D_{\T'} (d_i, c) \geq -1$ hence $s(\T', d_1) = -m$. 

For $d_k$, for every $u \in U$ there exists a single edge $e \in B$ such that $u \in e$. In the voter $T'_e$, $d_k$ is raised higher than $u$, hence $D_{\T'}(d_k, u) = D_\T(d_k, u) + 2 = -m$. For every other candidate $c$ we have $D_{\T'} (d_k, c) \geq -1$ hence $s(\T', d_k) = -m$. Overall, by the definition of $\tie$, all candidates of $D$ defeat $c^*$, thus $\rnk {c^*} {\T'} \tie > k$.

Conversely, assume that $\maxrnk {c^*}\P\tie > k$, let $\T' = \T_1' \circ \T_2$, $\T_1' = \set{T'_e}_{e \in E}$ be a completion where $\rnk {c^*} {\T'} \tie > k$. As we said $c^*$ always defeats the candidates of $U \cup \set{w}$, hence all candidates of $D$ defeat $c^*$ in $\T'$. Let $B \subseteq E$ be the set of edges $e$ for which $d_k$ is raised higher than $d_1$ in $T'_e$. If $|B| \geq q+1$ then 
\[ D_{\T'}(d_1, d_k) \leq D_\T(d_1, d_k) - 2(q+1) \leq -m-2 \]
hence $s(\T', d_1) \leq m-2 < s(\T', c^*)$, that is a contradiction. From now we assume $|B| \leq q$. Let $u \in E$, assume to the contrary that $u$ is ranked above $d_k$ in all voters of $\T_1'$. In this case we have $D_{\T'} (d_k, u) = D_\T (d_k, u) = -m-2$, which implies $s(\T', d_k) \leq -m-2 < s(\T', c^*)$, that is a contradiction. Hence there exists $e \in E$ such that $d_k$ is raised higher than $u$ in $T'_e$. By the construction of $\P_1$, we get that $u \in e$ and $e \in B$ (if $d_k$ is raised higher than $u$, then it is also raised higher than $d_1$). Overall, $|B| \leq q$ and the edges of $B$ cover $U$, therefore $B$ is an exact cover.
\end{proof}

\section{Concluding Remarks}
We studied the problems of determining the minimal and maximal ranks of a candidate in a partial voting profile, for positional scoring rules and for several other voting rules that are based on scores (namely Bucklin, Copeland and Maximin). We showed that these problems are fundamentally harder than the necessary and possible winners that reason about being top ranked. For example, comparing the maximal/minimal rank to a given number is NP-hard for \e{every} positional scoring rule, pure or not, including plurality and veto.  For the problems of comparison to a fixed $k$, we have generally recovered the tractable cases of the necessary winners (for maximum rank) and possible winners (for minimum rank). An exception is the Maximin rule, where the problem is tractable for $k=1$ but intractable for every $k>1$. Many problems are left for investigation in future research, including: \e{(a)} establishing useful tractability conditions for an input $k$; \e{(b)} completing a full classification of the class of (pure) positional scoring rules for fixed $k$; \e{and (c)} determining the parameterized complexity of the problems when $k$ is the parameter.

%%
%% Bibliography
%%
\bibliography{References}

\end{document}